
\documentclass[10pt,journal]{IEEEtran}
\ifCLASSINFOpdf
   \usepackage[pdftex]{graphicx}
  
  \pdfinfo{
  /Author (D. Posch, B. Rainer, H. Hellwagner)
  /Title (SAF: Stochastic Adaptive Forwarding in Named Data Networking)
  /Keywords (Information-Centric Networking; Named Data Networking; Adaptive Forwarding; Routing.)
  }
  
\else
  \usepackage[dvips]{graphicx}
\fi
%
%

%
\usepackage[cmex10]{amsmath}
\usepackage{amssymb}
\newcommand{\id}{1\hspace{-0,9ex}1}
%

%
\usepackage{algorithm}
\usepackage{algorithmic}

\usepackage{caption}
\usepackage{url}


\usepackage[usenames,dvipsnames,svgnames,table]{xcolor}

\newcommand{\revised}[1]{\textcolor{black}{#1}}

\hyphenation{op-tical net-works semi-conduc-tor}

\setlength{\textfloatsep}{8pt plus 0.0pt minus 0.0pt}
\setlength{\floatsep}{8pt plus 0.0pt minus 0.0pt}
\setlength{\intextsep}{8pt plus 0.0pt minus 0.0pt}
\setlength{\topsep}{8pt plus 0.0pt minus 0.0pt}

\usepackage{paralist}
\usepackage{subfig}
\usepackage{multirow}
\usepackage{enumitem}
\usepackage{bm}

\usepackage{color}
\usepackage{algorithm}
\usepackage{algorithmic}
\usepackage{url}
\usepackage{lastpage}
\usepackage{verbatim}
\usepackage{graphicx}

\usepackage{amsthm}
\usepackage{amsmath}
\usepackage{etoolbox}

\usepackage{tabularx}


\theoremstyle{definition}
\newtheorem{definition}{Definition}[section]
\newtheorem{proposition}{Proposition}[section]
\newtheorem{theorem}{Theorem}[section]
\newtheorem{remark}{Remark}[section]

\renewenvironment{proof}{}{{$\hfill\blacksquare$}}

\usepackage{kbordermatrix}

\hyphenation{cach-ing}

\definecolor{darkgreen}{rgb}{0.0, 0.2, 0.13}
\definecolor{darkred}{rgb}{0.55, 0.0, 0.0}
\definecolor{darkblue}{rgb}{0.0, 0.0, 0.55}
\definecolor{earthyellow}{rgb}{0.88, 0.66, 0.37}
\definecolor{fulvous}{rgb}{0.86, 0.52, 0.0}
\definecolor{darkcyan}{rgb}{0.0, 0.55, 0.55}
\definecolor{harvardcrimson}{rgb}{0.79, 0.0, 0.09}

\begin{document}
%
\title{SAF: Stochastic Adaptive Forwarding \\in Named Data Networking}
%
%
%

\author{Daniel~Posch,
         Benjamin~Rainer,
         and~Hermann~Hellwagner

}

%
%

\markboth{}
{Posch, Rainer, and Hellwagner: Stochastic Adaptive Forwarding in Named Data Networking}
%

\author{\IEEEauthorblockN{Daniel Posch, Benjamin Rainer, and Hermann Hellwagner}\\
\IEEEauthorblockA{Institute of Information Technology\\
Alpen-Adria-Universit\"at Klagenfurt, Austria\\
Email: \{first.lastname\}@itec.aau.at}
}



\IEEEoverridecommandlockouts
\IEEEpubid{\makebox[\columnwidth]{Submitted to IEEE/ACM Transactions on Networking 
\hfill} \hspace{\columnsep}\makebox[\columnwidth]{ }} 

\maketitle

\begin{abstract}
Forwarding decisions in classical IP-based networks are predetermined by routing. This is necessary to avoid loops, inhibiting opportunities to implement an adaptive and intelligent forwarding plane. Consequently, content distribution efficiency is reduced due to a lack of inherent multi-path transmission. In Named Data Networking (NDN) instead, routing shall hold a supporting role to forwarding, providing sufficient potential to enhance content dissemination at the forwarding plane. In this paper we design, implement, and evaluate a novel probability-based forwarding strategy, called Stochastic Adaptive Forwarding (SAF) for NDN. SAF imitates a self-adjusting water pipe system, intelligently guiding and distributing Interests through network crossings circumventing link failures and bottlenecks. Just as real pipe systems, SAF employs overpressure valves enabling congested nodes to lower pressure autonomously. Through an implicit feedback mechanism it is ensured that the fraction of the traffic forwarded via congested nodes decreases. By conducting simulations we show that our approach outperforms existing forwarding strategies in terms of the Interest satisfaction ratio in the majority of the evaluated scenarios. This is achieved by extensive utilization of NDN's multipath and content-lookup capabilities without relying on the routing plane. SAF explores the local environment by redirecting requests that are likely to be dropped anyway. This enables SAF to identify new paths to the content origin or to cached replicas, circumventing link failures and resource shortages without relying on routing updates.
\end{abstract}


\begin{IEEEkeywords}
Information-Centric Networking; Named Data Networking; Adaptive Forwarding; Routing.
\end{IEEEkeywords}

%
\IEEEpeerreviewmaketitle

\section{Introduction}
\label{sec:introduction}

Today's Internet is based on a legacy host-based architecture, resulting in limitations. One key issue is forwarding, which is strictly predetermined by routing to ensure loop-free communication. There are few opportunities to implement an adaptive forwarding plane in classical IP-based networks since routing dictates the forwarding options. In IP, forwarding planes are stateless and routing protocols are responsible to deal with all kinds of short- and long-term topology changes. Routing protocols struggle with all the imposed responsibility.
\revised{For instance, Labovitz et al.~\cite{Labovitz:2000,Labovitz:2001} showed that BGP convergence times up to 15 minutes recovering from a single multi-homed fault (switching to an alternative, redundant route) are possible, if no additional mechanisms (e.g. backup configurations providing alternative routes~\cite{Kini:2009}) are used.}
The concept of Information-Centric Networking (ICN) \cite{icnSurvey,Xylomenos:2014} instead rests on a content-centric communication model, where content only is addressed. There are a variety of approaches for ICN architectures \cite{ netinf, psirp, Jacobson:2009, dona, ndn:2014}. In this paper, our understanding of an ICN is coincident with Named Data Networking (NDN) \cite{ndn:2014}. In NDN, data is requested by its name following a strictly receiver-driven communication model. In order to retrieve content, a consumer emits an \emph{Interest} message, which is then forwarded by other NDN nodes until it reaches the desired content. An Interest's final destination could be the content origin, or any intermediate node that holds a cached replica. A \emph{Data} packet carries the requested content object and is always returned on the reverse path of the requesting Interest.
A typical NDN router maintains three data structures: 
\begin{inparaenum}[\itshape i\upshape)]
\item \emph{Content Store} (CS), a cache providing data replicas;
\item \emph{Pending Interest Table} (PIT), a table keeping track of the forwarded (still pending) Interests providing the return path for Data packets; and
\item \emph{Forwarding Information Base} (FIB), a table essentially maintaining routing information.
\end{inparaenum}
\revised{In order to detect duplicate or looping Interests, a \textsc{Nonce} (unique bit pattern) is added to each Interest. Loop prevention is achieved by matching name and \textsc{Nonce} of received Interests to those already in the PIT. This enables an adaptive forwarding plane with inherent multi-path delivery, in contrast to IP that avoids potential loops in the first place.}

It has been shown by Yi et al.~\cite{Yi:2013} that NDN's stateful forwarding plane handles typical network issues, such as short-term link failures and congestion, more effectively than IP networks. Furthermore, in~\cite{yi:routing} Yi et al. argue that \emph{routing} in NDN \emph{shall hold a supporting role to forwarding}. Routing should only provide a reasonable starting point for the forwarding plane, which then should explore different multi-path opportunities. In return, adaptive forwarding enables a more scalable routing plane with relaxed requirements in terms of convergence time and completeness. In this paper our understanding of routing, forwarding and caching and their clear separation is in accordance with Yi et al.~\cite{Yi:2013}. This assumption does not restrict the coupling of these mechanisms as suggested by~\cite{Rossini:2014} and~\cite{Yeh:2014}, but clearly separates their areas of responsibility.
\revised{ The existing adaptive forwarding strategies in NDN do not provide all mechanisms foreseen in~\cite{Yi:2013} and~\cite{yi:routing}.}
For this reason we propose \emph{Stochastic Adaptive Forwarding} (SAF), a probability-based forwarding strategy. \revised{As application field we envisage the infrastructure of large Internet Service Providers (ISPs) interconnecting several autonomous systems and/or access networks.} SAF is specifically designed to meet the following objectives:

\begin{enumerate}
\item Perform stochastic adaptive forwarding on a per-content/per-prefix basis.
\item Provide effective forwarding even with incomplete or partly invalid routing information.
\item Deal with unexpected network topology changes, e.g., link failures, without relying on the routing plane.
\item Discover unknown paths to cached replicas.
\end{enumerate}
\revised{While these attributes are beneficial for the rather versatile edge networks, they are not suitable for the rigid Internet core. Here, simpler approaches may perform better considering the well known scalability principle for networks: ``\emph{complexity at the edge, aggregation at the core}''~\cite{rfc3272}.}

SAF accomplishes the previously mentioned objectives by imitating a water pipe system. Network nodes act as crossings for an incoming flow of water (Interests). Returning Data packets act as input for a probability distribution which determines the share of the flow that is forwarded via the different pipes (outgoing interfaces). Each crossing maintains an overpressure valve. If the pressure on a node increases, e.g., due to network congestion, it may use the overpressure valve to lower the pressure autonomously. The Interests that pass the overpressure valve can be used for different purposes. For instance, they can be used as scouts to investigate unknown paths to complement routing information. However, in some cases it is best to simply discard these Interests. The discarded and therefore unsatisfied Interests provide implicit feedback to the requesting nodes, indicating that they should reduce the number of requests forwarded to the considered node.

SAF is based on an exchangeable measure that defines the target of the adaptive forwarding.
\revised{The objective is to maximize the Interest/Data satisfaction ratio with (optional) respect to delay, hop-count, and/or transmission cost considerations}.
Based on the requested content, the specific service and/or the network operator's ambitions, this measure has to be chosen carefully. It determines the preferred paths for forwarding. However, adaptivity provided by SAF is not only achieved by intelligent multi-path transmission using redundant paths. SAF is also able to exploit content-based information to further improve the forwarding decisions. For instance, this is beneficial for multimedia scenarios where the relative priority of packets (e.g., VoIP vs. file transfer) is more important than a purely throughput- and/or delay-based metric may indicate.

The remainder of this paper is organized as follows. Section~\ref{sec:related_work} discusses existing forwarding strategies for the ICN/NDN approach. Section~\ref{sec:forwarding_strategy} provides the stochastic forwarding model SAF rests upon. Section~\ref{sec:evaluation} presents an evaluation of SAF, comparing it to other state-of-the-art forwarding strategies. Section~\ref{sec:conclusion} concludes our findings and discusses potential future work.

\vspace{-0.1cm}
\section{Related Work}
\label{sec:related_work}

As the work in this paper focuses on forwarding in NDN, we first discuss the proposed forwarding strategies introduced by the NDN community. Yi et al.~\cite{Yi:2013,adaptiveForwarding} classify \mbox{(inter-)faces} based on a simple color scheme. Faces can be marked as \textsc{Green}, \textsc{Yellow} and \textsc{Red}, which corresponds to the meaning that faces return data (\textsc{Green}), may or may not return data (\textsc{Yellow}), or they do not work at all (\textsc{Red}). Within this classification, faces are ranked, e.g., based on the delay of receiving Data packets. This basic scheme is used by all proposed forwarding strategies that are provided within version 1.0 of the ns3/ndnSIM simulator~\cite{ndnSIM}:

\begin{itemize}
\item Flooding: Interests are forwarded to all \textsc{Green} and \textsc{Yellow} faces supplied by the FIB.

\item SmartFlooding: Interests are forwarded to the highest-ranked \textsc{Green} face. If no \textsc{Green} face is available, an Interest is forwarded via all \textsc{Yellow} faces.

\item BestRoute:  Interests are forwarded to the highest-ranked \textsc{Green} face. If no \textsc{Green} face is available, an Interest is forwarded via the highest-ranked \textsc{Yellow} face.
\end{itemize}

Furthermore, forwarding strategies in ns3/ndnSIM~v1.0 can be supplemented by several enhancements~\cite{ndnSIM}. One example are \textit{Interest Limits}, which conceptually are Token Bucket filters. For instance, if the highest ranked interface reaches its transmission limit, 
an \textit{Interest Limit} ensures that another face is selected for forwarding further requests. Further supplements are negative acknowledgment messages (NACKs), which can be returned to an Interest issuer to provide immediate feedback if a request can not be satisfied (e.g., due to the imposed Interest Limits).

Recently ns3/ndnSIM~v2.0~\cite{ndnSIM2.0} was released. The newer version no more re-implements basic NDN primitives, such as forwarding, but uses code from the \textit{NDN Forwarding Daemon} (NFD) \cite{nfd}. This allows realistic simulations since the code-base of the NFD is used, which has been developed for physical hardware. The step towards realistic simulations resulted in major changes in the simulator also affecting the implemented forwarding strategies. In ns3/ndnSIM~v2.0 strategies no longer rest on the aforementioned color scheme and cannot take advantage of additional features such as \textit{Interest Limits} or NACKs. We shortly outline the forwarding strategies available in the NFD~\cite{nfd}:

\begin{itemize}

\item Broadcast: Interests are forwarded to all faces supplied by the FIB.

\item ShortestRoute: Interests are forwarded to the lowest-cost (e.g., hop count) upstream face indicated by routing. Actually this strategy is referred to as BestRoute in the NFD~\cite{nfd}; however, to avoid confusion with BestRoute in ns3/ndnSIM~v1.0~\cite{ndnSIM}, we renamed it for this paper.

\item NCC: Interests are forwarded to those faces that provide data packets with the lowest delay.
\end{itemize}

As the aforementioned strategies are available in the ns3/ndnSIM simulator, they can easily be used for comparison with new approaches. Since we are focusing on a realistic approach with SAF, we regard ns3/ndnSIM~v.2.0 as the platform most suitable for evaluations and performance measurements. For this reason, we consider the algorithms Broadcast, ShortestRoute, and NCC for comparison to SAF, \revised{and do not take into account their predecessors, Flooding, SmartFlooding, and BestRoute in ndnSIM~v.1.0.}

\revised{In~\cite{Rossini:2014} Rossini and Rossi propose [ideal] Nearest Replica Routing ([i]NRR), an approach to couple caching and forwarding extending aNET~\cite{Rosensweig:2010}.
The iterative algorithm provided in~\cite{Rossini:2014} makes use of an oracle providing information on the availability of content in all caches in the network. The strategy selects the face with the shortest distance to the content, which prefers nearby caches rather than forwarding the Interest towards the content origin. iNRR is implemented in the ccnSIM simulator~\cite{Chiocchetti:2013}. Although the authors indicate that an implementation of a perfect oracle is not feasible in a real environment,
we consider iNRR with perfect knowledge about the individual content chunks in the caches as a competitor to SAF. Please note that in~\cite{Rossini:2014} some practical approaches using off-path exploration to assess the necessary information provided by the oracle for the concept of NRR are proposed.}

Chiocchetti et al.~\cite{INFORM} developed INFORM, which is an adaptive hop-by-hop forwarding strategy using reinforcement learning inspired by the Q-routing framework. INFORM is able to discover temporary copies of content not present in the routing table, thus increasing the effectiveness of forwarding. \revised{The authors have implemented and evaluated INFORM within ccnSIM~\cite{Chiocchetti:2013}. Unfortunately, they indicate on
their website (\url{http://goo.gl/2XJa9R}) that due to a lack of manpower no release candidate or source code package for this strategy can be provided. However, the authors also indicate that one should prefer iNRR over INFORM for comparison, as it outperforms INFORM. Since iNRR is already on our list as a competitor to SAF, we do not consider INFORM.}

In~\cite{Carofiglio:2013} Carofiglio et al. derive a set of optimal dynamic multipath congestion control protocols and request forwarding strategies from a multi-commodity flow problem. The proposed Request Forwarding Algorithm (RFA) in~\cite{Carofiglio:2013} is outlined in detail and is a good candidate for comparison. The idea of the algorithm is simple yet effective. For each content-prefix and for each face, RFA monitors the PIT entries. The forwarding probability of a face is then determined by a weight, which is actually a moving average over the reciprocal count of the PIT entries. We re-implemented this algorithm for ndnSIM~v2.0 and NFD for comparison (more details are provided in the evaluation section).

Qian et al.~\cite{probForwarding} proposed the concept of Probability-based Adaptive Forwarding. The basic idea is to select faces based on a probability distribution, which is also similar to our approach. However, noticeable differences are that~\cite{probForwarding} is inspired by ant colony optimization and focuses on delay minimization. As described later on in the paper, SAF is generic providing opportunities to adapt forwarding in additional dimensions. We introduce a virtual face that enables content- and context-aware adaptation. Furthermore, we do not introduce distinguished Interest packets for probing only. 

Yeh et al.~\cite{Yeh:2014} proposed VIP, a framework for \emph{joint dynamic forwarding and caching} in NDN. In this system, \emph{Virtual Interest Packets} (VIPs) capture the measured demand for respective data objects. The VIP count in a part of a network represents the local level of interest in a given object. The VIP framework employs a virtual control plane which operates on the VIPs. Distributed control algorithms are used to guide caching and forwarding strategies. We do not consider the VIP framework as competitor due to: 
\begin{inparaenum}[\itshape i\upshape)]
\item absence of a reference implementation;
\item difficulties to precisely reproduce the results. The illustration in~\cite{Yeh:2014} leaves room for interpretation regarding the individual parts (virtual control plane, communication protocols, control algorithms) and their interplay.
\item no estimation regarding the communication overhead is provided. 
\end{inparaenum}

\revised{
Recently, Udugama et al.~\cite{multipathInterestFW} published On-demand Multi-Path Interest Forwarding (OMP-IF). This forwarding strategy uses multiple node disjoint paths for Interest forwarding simultaneously. Each router may only use a single face (from the FIB) for forwarding per content-prefix to ensure node disjointness. The client is responsible for triggering the multipath transmission by utilizing a weighted round-robin mechanism based on path delays to distribute Interests over multiple faces. However, considering only node disjoint paths may leave some network resources unused. We consider OMP-IF as competitor and implemented the strategy as presented in~\cite{multipathInterestFW}.
}

\section{Stochastic Adaptive Forwarding}
\label{sec:forwarding_strategy}

This section deals with the terminology and design of SAF. First, the network, content and node models are discussed, which state the necessary preconditions for the presented approach. Subsequently, we show that SAF enables adaptive forwarding based on a given measure. The design of $\mathcal{M}_T$, an exemplary measure maximizing the Interest satisfaction ratio, is illustrated. SAF defines update operations that modify the forwarding probabilities for (inter-)faces. The ultimate goal of SAF is to optimize a node's forwarding behavior such that it performs optimally in terms of a given measure. The design of these update operations is presented and exemplarily discussed based on $\mathcal{M}_T$. Finally, two examples are presented illustrating the functionality of SAF.

\subsection{Network, Content and Node Models}
\label{subsec:network_content_node_model}
SAF rests on the following network model:
$\mathcal{N}(\mathcal{V},\mathcal{E})$ denotes a network consisting of a set of nodes $\mathcal{V}$ and a set of edges/links $\mathcal{E} \subseteq \mathcal{V} \times \mathcal{V} $. Each node  $v \in \mathcal{V}$ maintains a physical face $F_{(v, u)}$ (e.g., wireless network interface) for each tuple $(v, u) \in \mathcal{E}$. We define the list of physical faces on $v$ as $\mathcal{F}'_{v} := \bigcup_{(v,u) \in \mathcal{E}} F_{(v, u)}$, where $|\mathcal{F}'_{v}|$ denotes the number of physical links/faces of $v$. A node may receive Interests on any $F_{in} \in \mathcal{F}'_{v}$ and tries to satisfy these requests by either returning a locally stored copy of the requested data or by forwarding the Interest to a suitable face $F_{out} \in \mathcal{F}'_{v} \setminus \{F_{in}\}$.
The content catalogue in $\mathcal{N}$ is determined by a set $\mathcal{C}$. Each $c \in \mathcal{C}$ denotes content that can be retrieved using a common prefix. 

In addition to the physical faces $\mathcal{F}'_{v}$, each node maintains a distinguished virtual face $F_{D_v}$, which acts as overpressure valve. The set $\mathcal{F}_v = \mathcal{F}'_{v} \cup \{F_{D_v}\}$ denotes the entire set of faces known to $v$. SAF is an algorithm local to each node, requiring no explicit communication between nodes. Therefore further discussion on SAF focuses only on a single node, which allows us to omit the subscripts $v$ identifying a specific node for the remainder of this paper. The virtual face $F_D$ is treated as an ordinary face by SAF, however, any Interest forwarded to this face is discarded; the use of this \textit{dropping face} will be made clear in the course of the paper.

Every node maintains a so called \textit{Forwarding Table} (FWT). 
This table is a two-dimensional matrix, where the rows correspond to the set of faces $\mathcal{F}$ and the columns correspond to the different contents from the catalogue $\mathcal{C}$. The elements of the matrix indicate the confidence (probability) with which a certain outgoing face can provide data for a certain prefix. For instance, the following matrix represents an example FWT for a node with $\mathcal{F} = \{F_D,F_0,F_1,F_2\}$ and $\mathcal{C} = \{c_0,c_1,c_2\}$:

\vspace{-0.15cm}
\small
\[
  \text{FWT} = \kbordermatrix{
    & c_0 & c_1 & c_2 \\
    F_D & 0 & 0 & 1/3  \\
    F_0 & 1/3 & 1/2 & 0  \\
    F_1 & 2/3 & 0 & 2/3  \\
    F_2 & 0 & 1/2 &  0
  }
\]
\normalsize

The FWT provides the probability of forwarding an Interest for content $c_l$ on face $F_i$. We denote $p(F_i, c_l)$ as the forwarding probability that an Interest asking for $c_l$ will be forwarded on $F_i$, for instance, $p(F_1, c_0) = \frac{2}{3}$. Note that, for any $c_l$, the corresponding column of the FWT specifies a discrete probability distribution: ${\textstyle\sum_{F_i \in \mathcal{F}} p(F_i, c_l) = 1}.$ 
%
%
The decision to forward a given Interest on a face is as simple as drawing a random number from a uniform distribution $U(0,1-p(F_{in}, c_l))$. Algorithm~\ref{alg:chooseFace} sketches the face selection process, which is also known as \emph{inverse transform sampling}. The function \emph{nextDouble()} (cf. Alg.~\ref{alg:chooseFace} line 2) draws a number from the distribution $U$ and \emph{pop()} (cf. Alg.~\ref{alg:chooseFace} line 4) removes and returns the top element of a list ($F_{list}$).
The algorithm requires $O(|\mathcal{F}|)$ steps. As the number of faces is usually constant on a node, the algorithm is actually in $O(1)$.
\begin{algorithm}[tbh!]
\small
\caption{Select outgoing Face for an Interest}
\label{alg:chooseFace}
\begin{algorithmic}[1]
\STATE $F_{list} \leftarrow \mathcal{F} \setminus\{F_{in},F_D\}$, $limit \leftarrow 0.0$
\STATE $rand \leftarrow U(0,1-p(F_{in}, c_l)).nextDouble()$

\WHILE {$F_{list} \neq \emptyset$}
\STATE $F_{cur} \leftarrow F_{list}.pop()$
\STATE $limit \leftarrow limit + p(F_{cur},c_l)$
\IF {$rand \leq limit$}
\RETURN{$F_{cur}$}
\ENDIF
\ENDWHILE

\RETURN{$F_D$}
\end{algorithmic}
\end{algorithm}

Figure~\ref{fig:saf_node} illustrates an NDN node using SAF. The design of the FWT provides two opportunities to perform adaptive forwarding, which are implemented by the \emph{Adaptation Engine}. First, modifications of the probabilities within a single column (\emph{solid, red lines} in Figure~\ref{fig:saf_node}) of an FWT change the forwarding probabilities for Interests asking for a specific $c_l$. These updates modify which faces/paths are preferred for forwarding Interests. Second, shifting forwarding probabilities among different columns (\emph{dashed, blue lines} in Figure~\ref{fig:saf_node}) allows prioritization of specific content types/prefixes. For instance, assume $c_{l}$ is more important than $c_{k}$. In this case it can be beneficial to increase the probability of dropping Interests for $c_{k}$ in favor of $c_{l}$. The necessary statistical information for these operations is provided by the \emph{Statistic Collector}, which monitors the Interests received and satisfied by faces.
\begin{figure}[tbh!]
\centering
\includegraphics[scale=0.80]{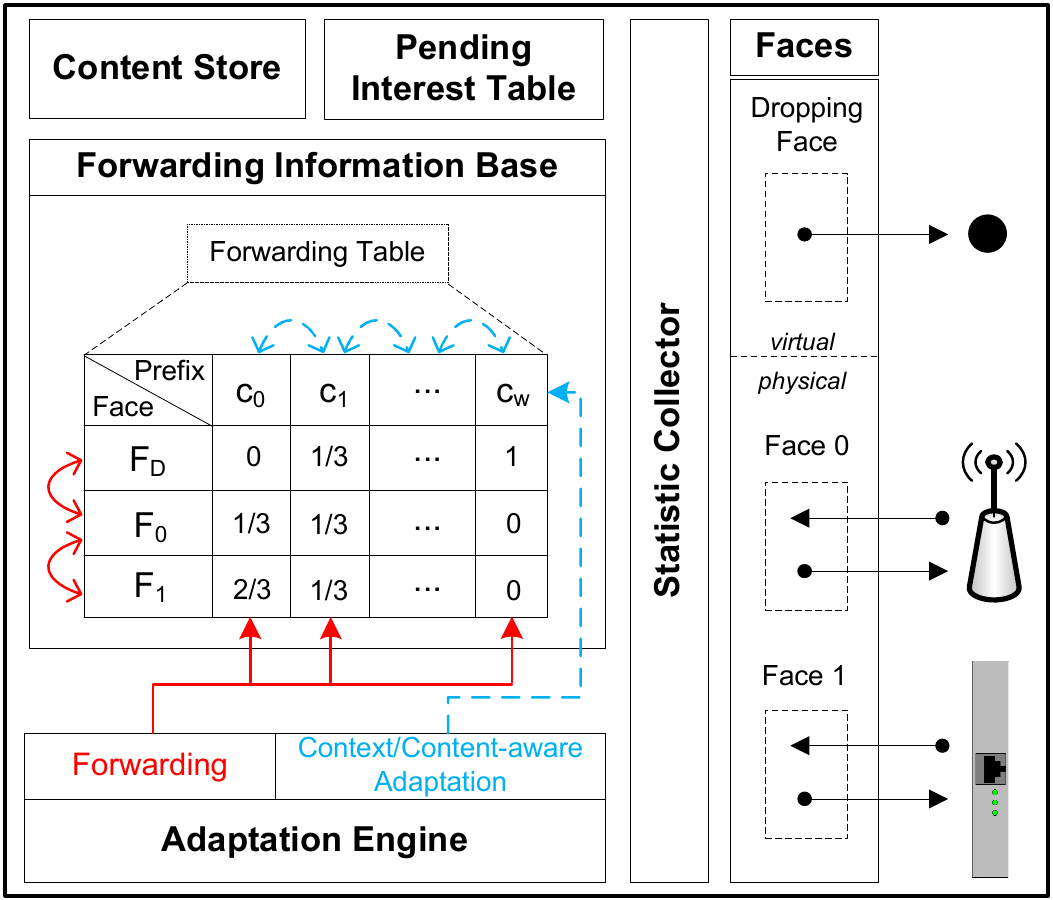}
\caption{The model of an NDN node using SAF.}
\label{fig:saf_node}
\end{figure}

This paper focuses on the core forwarding part of SAF (cf. \emph{solid, red lines} in Figure~\ref{fig:saf_node}). Additional context- and content-aware adaptations (cf. \emph{dashed, blue lines} in Figure~\ref{fig:saf_node}) are inherently supported by the design of SAF, however, they are considered as out of scope for this paper. Note that the forwarding core of SAF operates separately within a single column (content prefix) of the FWT. Therefore, and for the sake of readability we omit the notion of different $c \in \mathcal{C}$ for the remainder of this paper.




\subsection{A Throughput-based Forwarding Measure}
\label{sec:thput_measure}
SAF is based on the measure $\mathcal{M}$ as defined as follows:
\begin{definition}
\label{def:measure}
Let $(\Omega,\mathcal{A})$ be a measurable space with a set $\Omega$, the corresponding $\sigma$-Algebra $\mathcal{A} \subseteq \mathcal{P}(\Omega)$ and \\
$\mathcal{M},S,U: \Omega \rightarrow \mathbb{R}$ $\mathcal{A}$ - $\mathcal{B}$ - measurable functions with: \\
$\forall F_i \in \mathcal{F}\setminus\{F_D\}, \forall A \in \mathcal{A}: \mathcal{M}_{F_i}(\emptyset) = 0,$\\
$\mathcal{M}_{F_i}(A) = S_{F_i}(A) - U_{F_i}(A)$ and $\forall A \in \mathcal{A}: \mathcal{M}_{F_D}(A) = 0$.
\end{definition}
%
\noindent$S_{F_i}(A)$ provides a measure for \emph{satisfying Interests} and $U_{F_i}(A)$ defines a measure for \emph{not satisfying Interests}. Note that, since $U_{F_i}(A)$ is the complementary measure to $S_{F_i}(A)$, it suffices to define $S_{F_i}(A)$. $S_{F_i}(A)$ is not predefined by SAF, which allows to use an arbitrary measure. For instance, one may define $S_{F_i}(A)$ as the number of Interests that are satisfied by Data packets below a certain delay threshold. The objective of $\mathcal{M}$, therefore the definition of $S_{F_i}(A)$, guides the update operations for the FWT. These operations are issued periodically, e.g., once every second.
SAF finds the optimal forwarding strategy by maximizing $\mathcal{M}$ over the periods. The forwarding core of SAF operates separately on each column (prefix) of the FWT, thus providing for decoupling of periods for the different contents (prefixes). This allows to spread the node's work load over time.

In this paper we use the measure $\mathcal{M}^{T}$, which maximizes the throughput by investigating the Interest satisfaction ratio on individual faces. Therefore we consider for each node the set of forwarded Interests $\mathcal{I}_n$ during period $n$, for which a Data packet has been received (\emph{satisfied} Interest) or a timeout has occurred (\emph{unsatisfied} Interest) within that period $n$. Interests neither satisfied nor unsatisfied within a period are termed \emph{pending}.
We define $S_{F_i}(\mathcal{I}_n) := |\{j \in \mathcal{I}_n : j$ is satisfied by a Data packet on $F_i\}|$ and $U_{F_i}(\mathcal{I}_n) := |\{j \in \mathcal{I}_n : j$ is not satisfied on $F_i\}|$ $\forall F_i \in \mathcal{F}\setminus\{F_D\}$, and $S_{F_D}(\mathcal{I}_n) = |\{j \in \mathcal{I}_n : j$ is satisfied by $F_D\}|$ ($F_D$ satisfies Interests by definition, but note that $\mathcal{M}_{F_D}(A) = 0$); $U_{F_D}(\mathcal{I}_n) = 0$). This implicitly defines $\mathcal{M}^T$. 

{\small
\revised{
\begin{remark}
The ongoing discussion of SAF is agnostic to a concrete instantiation of a measure $\mathcal{M}$. One may use other measures than $\mathcal{M}^{T}$ considering the hop count or delay of received Interests. For instance, one may define a hop count-based measure $\mathcal{M}^{H}$ as follows: $S_{F_i}(\mathcal{I}_n) := |\{j \in \mathcal{I}_n : j$ is satisfied by a Data packet $d$ on $F_i$, where $d$ traversed fewer than $h$~hops$\}|,  \forall F_i \in \mathcal{F}\setminus\{F_D\}$, and $S_{F_D}(\mathcal{I}_n) = |\{j \in \mathcal{I}_n : j$ is satisfied by $F_D\}|$.
\end{remark}
}
}

For the sake of simplicity we write $S_{F_i}$ for $S_{F_i}(\mathcal{I}_n)$ and $U_{F_i}$ for $U_{F_i}(\mathcal{I}_n)$ for the remainder of the paper. Table~\ref{tab:basic_definitions} depicts variables and expressions which are used for the definition of $\mathcal{M}$ and for SAF's update operations. Since these operations are executed during the transition from one period to another, variables hold the observed system state that is observable at the \emph{end} of a given period.

Note that SAF is also scalable concerning space complexity. While classical approaches maintain a list of outgoing faces per prefix, SAF requires a vector of forwarding probabilities instead (cf. Figure~\ref{fig:saf_node}). For instance, assume that those probabilities are quantized into the range of a byte. Furthermore for each combination of prefixes and faces, SAF requires two counting variables (e.g., 4~bytes) representing $S_{F_i}$ and $U_{F_i}$. Then the space complexity for SAF is given by $|\mathcal{C}|\cdot|\mathcal{F}|\cdot9~$bytes. Since $\mathcal{F}$ is usually fixed on a node, space complexity is in $O(|\mathcal{C}|)$, as for most forwarding strategies (cf. Section~\ref{sec:related_work}).
\renewcommand{\arraystretch}{1.15}
\begin{table}[t!]
\small
\centering
\begin{tabular}{|l|l|}
\hline
\textsc{Var./Exp.} &\textsc{Definition / Explanation}\\
\hline
\hline
$S_{F_i}$ & Number of satisfied Interests on $F_i$.\\& $F_D$ satisfies Interests by definition.\\\hline
$U_{F_i}$ & $p(F_i) \cdot I - S_{F_i}$, unsatisfied Interests.\\\hline 
$I$ & $\textstyle\sum_{F_i \in \mathcal{F}}\left[ S_{F_i} + U_{F_i}\right]$, satisfied and \\&  unsatisfied without pending Interests.\\\hline 
$ST_{F_i}$ & \hspace{-0.25cm}
$\begin{cases}
\frac{S_{F_i}}{I} & \text{if } I > 0 \\
0        & \text{otherwise.}
\end{cases}$
$\begin{array}{l}
\text{Satisfied traffic} \\ \text{fraction on $F_i$.} 
\end{array}$\\\hline
$UT_{F_i}$ & \hspace{-0.25cm}
$\begin{cases}
\frac{U_{F_i}}{I} & \text{if } I > 0 \\
0        & \text{otherwise.}
\end{cases}$
$\begin{array}{l}
\text{Unsatisfied traffic} \\ \text{fraction on $F_i$.} 
\end{array}$\\\hline
$R_{F_i}$ & \hspace{-0.25cm}
$\begin{cases}
\frac{S_{F_i}}{S_{F_i}+U_{F_i}} & \hspace{-0.15cm} \text{if } S_{F_i}+U_{F_i} > 0 \\
1        & \hspace{-0.15cm} \text{otherwise. Reliability of $F_i$.}
\end{cases}$
\\\hline
$p(F_i) $ & Forwarding probability for face $F_i$. \\\hline 
$t$ & $t \in [t_{min},t_{max}]$, reliability threshold.\\& $t_{max} \in ]0,1[$, and $t_{min} \in ]0,t_{max}$[. \\\hline 
$\mathcal{F_R} $ & $\{F_i \in \mathcal{F} \setminus \{F_D\} \mid R_{F_i} \geq t\}$, reliable faces. \\\hline 
$\mathcal{F_U} $ & $\mathcal{F} \setminus (\mathcal{F}_R \cup \{F_D\})$, unreliable faces. \\\hline 
$\mathcal{F_S} $ & $\{F_i \in \mathcal{F_R } \mid ST_{F_i} + UT_{F_i} > 0\}$,\\& faces that may take additional traffic.\\\hline
$\mathcal{F_P} $ & $\mathcal{F_R} \setminus \mathcal{F_S}$, faces used for probing.\\\hline
$\delta \in [0,1]$ & Total unsatisfied traffic fraction.\\\hline
$\delta_\mathcal{U} \in [0,1]$ & Unsatisfied traffic on faces in $\mathcal{F_U}$.\\\hline
$\alpha_{F_i} \in \left]0,1\right]$ & $\frac{1}{1+\sqrt{Var(X)}}$, traffic stability indicator, where \\& X is a window over $S_{F_i}$ with length $N$.\\\hline 
$\Delta \in [0,1]$ & $\delta_\mathcal{U}$ with respect to $\alpha_{F_i} \forall F_i \in \mathcal{F_U}$.\\\hline
$\sigma_{F_i}$ & Resources for additional Interests on $F_i$.\\\hline
$\rho \in [0,1]$ & Traffic fraction used for probing.\\\hline
\end{tabular}
\caption{Variables and expressions for SAF.}
\label{tab:basic_definitions}
\end{table}
\renewcommand{\arraystretch}{1.05}

\subsection{Identifying Unsatisfied Traffic}

For all update operations, SAF requires knowledge about the total unsatisfied traffic fraction $\delta$. It provides information about the traffic percentage that has been forwarded towards wrong faces, given the measure $\mathcal{M}$.
Using the expressions from Table~\ref{tab:basic_definitions}, $\delta$ can be defined as given by Equation~\ref{eq:delta}.

\begin{equation}
\label{eq:delta}
\small
\delta =
\begin{cases}
1 - \sum\limits_{F_i \in \mathcal{F}} ST_{F_i} = \sum\limits_{F_i \in \mathcal{F}} UT_{F_i} & \text{if } I > 0, \\
0        & \text{otherwise.}
\end{cases}
\end{equation}

$\delta$ denotes the amount of traffic that should be forwarded on other faces during the next period. However, it is not yet known which faces provide a \textit{poor service} and should therefore receive less traffic, and vice versa. 
For this reason, SAF splits the set of all physical faces $\mathcal{F}\setminus\{F_D\}$ into two disjoint subsets: (i) $\mathcal{F_R}$, the set of reliable faces, and (ii) $\mathcal{F_U}$, the set of unreliable faces. This partitioning is based on the definition of the reliability of a face $R_{F_i}$ and the dynamic reliability threshold $t$. The threshold $t$ is adapted based on a node's \textit{health} status in the interval $t \in [t_{min},t_{max}]$, which will be discussed later. Note that the definition of $R_{F_i}$ is based only on the defined measure $S_{F_i}$. 

The partitioning of $\mathcal{F}\setminus\{F_D\}$ into $\mathcal{F_R}$ and $\mathcal{F_U}$ provides a starting point to improve a node's forwarding decisions. SAF's update operations focus on shifting traffic from the unreliable faces towards the reliable faces. The next step for SAF is to evaluate the amount of traffic from faces in $\mathcal{F_U}$ that can be shifted to faces in $\mathcal{F_R}$ without overloading those. For this purpose, we define $\delta_\mathcal{U}$ (cf. Equation \ref{eq:delta_u}), which specifies the accumulated unsatisfied traffic from faces in $\mathcal{F_U}$ only. 
%
\begin{equation}
\small
\label{eq:delta_u}
\delta_\mathcal{U}=
\begin{cases}
\sum\limits_{F_i \in \mathcal{F_U}} UT_{F_i} & \text{if } I > 0, \\
0        & \text{otherwise.}
\end{cases}
\end{equation}
%

\subsection{Update Operations}
\label{sec:upd_ops}
At the end of each period, SAF's objective is to shift the traffic fraction $\delta_\mathcal{U}$ from $\mathcal{F_U}$ to $\mathcal{F_R}$, or, in the worst case, to the virtual face $F_D$. In order to neglect short-term effects, the shifting of traffic is relaxed by $\alpha_{F_i} \in ]0,1]$. $\alpha_{F_i}$ is an indicator for the stability of the satisfied traffic over $F_i$ and is defined in Table~\ref{tab:basic_definitions}. It is determined by the standard deviation of the satisfied Interests over a given number of periods. Note that, the larger the standard deviation of $S_{F_i}$ over the periods, the smaller $\alpha_{F_i}$, and vice versa. This ensures that SAF balances the updates of the FWT taking into account traffic stability, which allows stronger changes if the observed state is steady over the periods. Plugging $\alpha_{F_i}$ into Equation~\ref{eq:delta_u} provides the relaxed unsatisfied traffic fraction $\Delta$, as denoted in Equation~\ref{eq:delta_large}.
\begin{equation}
\small
\label{eq:delta_large}
\Delta = 
\begin{cases}
\sum\limits_{F_i \in \mathcal{F_U}} UT_{F_i} \cdot \alpha_{F_i} & \text{if } I > 0, \\
0        & \text{otherwise.}
\end{cases}
\end{equation}

\begin{algorithm}[tbh]
\small
\caption{Pseudocode for the FWT updates in SAF}
\label{alg:updateOperations}
\begin{algorithmic}[1]

\STATE $\Delta \leftarrow \texttt{determineUnsatisfiedTraffic()}$
\STATE {$\Gamma \leftarrow \Delta + p(F_D)$}
\IF {$\Gamma > 0$}
    \STATE $\mathcal{F_S}, \mathcal{F_P} \leftarrow \texttt{splitSet(} \mathcal{F_R} \texttt{)}$
    \STATE {$\texttt{shiftTraffic(} \mathcal{F_U} \texttt{,} \mathcal{F_S} \texttt{,} \Gamma \texttt{)}$}
    \STATE {$p(F_D) = \Gamma - \Gamma'$}
    
    \IF {$p(F_D) > 0$}
        \STATE {$\texttt{probeOnFaces(} \mathcal{F_P} \texttt{)}$}
        \IF {$p(F_D) > (1-t)$}
            \STATE {$\texttt{decreaseReliability(t)}$}
        \ENDIF
    \ENDIF

\ELSIF {$I > 0$}
    \STATE {$\texttt{increaseReliability(t)}$}
\ENDIF

\end{algorithmic}
\end{algorithm}



Algorithm~\ref{alg:updateOperations} outlines SAF's update procedure, which can be used as the road map for Subsections~\ref{sec:upd_ops} to \ref{sec:prob}.
In line 2 of Algorithm~\ref{alg:updateOperations}, $\Gamma$ is introduced. $\Gamma$ denotes the sum of the unsatisfied traffic fraction $\Delta$ and the current forwarding probability $p(F_D)$ of the virtual face $F_D$. It is important to consider the sum of $\Delta$ and $p(F_D)$, as $\Delta$ does not consider the discarded traffic fraction on $F_D$.  Note that $\Delta > 0 \Leftrightarrow \mathcal{F}_U \neq \emptyset$ as indicated by Proposition~\ref{prop:delta}. \textit{Remark}: $\Delta = 0 \not\Rightarrow UT_{F_i} = 0 : \forall F_i \in \mathcal{F}$.

\begin{proposition}
\label{prop:delta}
Suppose $\forall F_i \in \mathcal{F} : \alpha_{F_i} \in ]0,1]$ and $I > 0$. Then, it holds that $\Delta > 0 \iff \mathcal{F_{U}} \neq \emptyset$.
\end{proposition}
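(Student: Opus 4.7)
The plan is to prove both implications directly from the definitions in Table~\ref{tab:basic_definitions} and Equation~\ref{eq:delta_large}, treating each direction separately. Since $I > 0$, Equation~\ref{eq:delta_large} reduces to
\[
\Delta = \sum_{F_i \in \mathcal{F_U}} UT_{F_i} \cdot \alpha_{F_i},
\]
so both directions hinge on analyzing this finite sum of nonnegative terms.

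For the easy direction ($\Rightarrow$ by contraposition), I would observe that if $\mathcal{F_U} = \emptyset$, then the above sum is empty and hence $\Delta = 0$. This yields $\mathcal{F_U} = \emptyset \Rightarrow \Delta = 0$, which is the contrapositive of $\Delta > 0 \Rightarrow \mathcal{F_U} \neq \emptyset$.

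The less trivial direction ($\Leftarrow$) requires showing that $\mathcal{F_U} \neq \emptyset$ forces at least one strictly positive summand. The main obstacle is ruling out the degenerate case $S_{F_i} + U_{F_i} = 0$, in which $R_{F_i} = 1$ by definition. I would pick any $F_j \in \mathcal{F_U}$; by definition this means $R_{F_j} < t$. Since $t \leq t_{max} < 1$, we have $R_{F_j} < 1$, which by the case distinction in the definition of $R_{F_j}$ implies $S_{F_j} + U_{F_j} > 0$ and $R_{F_j} = S_{F_j}/(S_{F_j}+U_{F_j}) < 1$. The latter yields $U_{F_j} > 0$, so together with $I > 0$ we obtain $UT_{F_j} = U_{F_j}/I > 0$. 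Using the hypothesis $\alpha_{F_j} \in \left]0,1\right]$, the product $UT_{F_j} \cdot \alpha_{F_j}$ is strictly positive, and because all other terms in the sum are nonnegative (as $UT_{F_i} \geq 0$ and $\alpha_{F_i} > 0$), we conclude $\Delta > 0$.

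I expect the argument to be short; the only subtle point is recognizing that the upper bound $t_{max} < 1$ on the threshold (stated in Table~\ref{tab:basic_definitions}) is precisely what excludes the pathological case in which a face with no observed traffic could have been placed in $\mathcal{F_U}$. I would state this explicitly so the reader sees why the definition $R_{F_i} = 1$ for $S_{F_i} + U_{F_i} = 0$ is compatible with the claim.
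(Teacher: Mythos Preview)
Your proposal is correct and takes essentially the same approach as the paper, which simply states that the claim ``follows directly from the definition of $\Delta$ and $\mathcal{F_U}$.'' You have merely made explicit the one nontrivial step the paper leaves implicit, namely that $t \leq t_{max} < 1$ rules out the case $S_{F_j}+U_{F_j}=0$ for any $F_j \in \mathcal{F_U}$.
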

\begin{proof}[Proof of Prop. \ref{prop:delta}] Follows directly from the definition of $\Delta$ and $\mathcal{F_U}$ (cf. Table \ref{tab:basic_definitions}).
\end{proof}

Algorithm~\ref{alg:updateOperations} has a trivial case, which eventuates if $\Gamma = 0$. In this case no changes in the FWT are required, since neither unsatisfied traffic exists nor any traffic is dropped in advance. In this favorable case, $t$ is increased if $I > 0$ in this period. However, if $\Gamma > 0$, SAF resolves the unsatisfied traffic using the two following approaches:
\begin{inparaenum}[\itshape i\upshape)]
\item adaptation of the forwarding probabilities within the FWT (cf. Alg.~\ref{alg:updateOperations} line 4-5);
\item identification of yet unknown paths to the desired content via probing (cf. Alg.~\ref{alg:updateOperations} line 8).
\end{inparaenum}
For the sake of simplicity, we separate the further discussion of Algorithm~\ref{alg:updateOperations} into these two parts.
The adaptation of the reliability threshold (cf. Alg.~\ref{alg:updateOperations} lines 10 and 14) is discussed at the end of the second part.

\revised{Note that Algorithm~\ref{alg:updateOperations} is executed for each content prefix in the FWT (cf. Figure~\ref{fig:saf_node}). If we assume the worst case for Algorithm~\ref{alg:updateOperations}  ($\Gamma > 0$, $p(F_D) > 0$, and $p(F_D) > (1-t)$), then the asymptotic time complexity is given by $O(\mathcal{|C|})$. This is due to the fact that only simple arithmetic operations are used (cf. Subsection~\ref{sec:adap_fp} and~\ref{sec:prob}), and their quantity solely depends on the number of faces $|\mathcal{F}|$, which is usually constant.}

\subsection{Adaptation of Forwarding Probabilities}
\label{sec:adap_fp}

SAF shifts traffic between faces only if $\Gamma > 0$. The principal objective in this step is to shift the unsatisfied traffic from the unreliable faces $\mathcal{F_U}$ towards the reliable faces $\mathcal{F_R}$ (cf. Alg.~\ref{alg:updateOperations} line 5) without overloading them. Of course, this is not always possible, which in the worst case forces the algorithm to forward some Interests towards the virtual face $F_D$ (cf. Alg.~\ref{alg:updateOperations} line 6). Before any actions are taken, SAF splits the set $\mathcal{F_R}$ into two disjoint subsets. The subset $\mathcal{F_S} \subseteq \mathcal{F_R}$ includes only faces from $\mathcal{F_R}$ which have successfully forwarded Interests in the current period. \revised{The second subset, $\mathcal{F_P} = \mathcal{F_R} \setminus \mathcal{F_S}$ (cf. Table~\ref{tab:basic_definitions}), includes faces that are considered as reliable only because they have not forwarded any Interests ($\forall F_i \in \mathcal{F_P}: S_{F_i} = U_{F_i} = 0$) in the current period. So it is very likely that faces  in $\mathcal{F_P}$ cannot fulfill requests, which is why we do not consider them for \textit{attracting additional} traffic.}

Before SAF performs the shifting, it determines how much additional traffic the faces in $\mathcal{F_S}$ may take, without decreasing their reliability below $t$. Proposition~\ref{prop:sigma} provides $\sigma_{F_i}$, which denotes the number of additional Interests the face $F_i \in \mathcal{F_S}$ may take without dropping $R_{F_i}$ below $t$.

%
\begin{proposition}
\label{prop:sigma}
For a given reliability $t$, every $F_i \in \mathcal{F_S}$ can satisfy $ 0 \leq \sigma_{F_i} \leq \lfloor\frac{S_{F_i}}{t} - S_{F_i} - U_{F_i}\rfloor$ additional Interests.
\end{proposition}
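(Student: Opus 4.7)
The plan is to prove the bound by viewing $\sigma_{F_i}$ as an unknown and imposing the constraint that $F_i$ must remain reliable (i.e. $R_{F_i} \geq t$) in the worst case, where none of the $\sigma_{F_i}$ extra Interests are satisfied. First I would observe that because $F_i \in \mathcal{F_S} \subseteq \mathcal{F_R}$, by definition of $\mathcal{F_S}$ we have $ST_{F_i}+UT_{F_i} > 0$, hence $S_{F_i}+U_{F_i} > 0$, so the reliability $R_{F_i} = \frac{S_{F_i}}{S_{F_i}+U_{F_i}}$ is well-defined (the degenerate case from the definition in Table~\ref{tab:basic_definitions} does not occur).

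Next I would model the effect of attracting $\sigma_{F_i}$ additional Interests. Since we have no a priori guarantee on how many of these new Interests will be satisfied, the reliability is minimized when all of them turn out to be unsatisfied; this yields the post-update (worst-case) reliability
\[
R'_{F_i} \;=\; \frac{S_{F_i}}{S_{F_i}+U_{F_i}+\sigma_{F_i}}.
\]
Requiring $R'_{F_i} \geq t$ and rearranging gives $S_{F_i} \geq t(S_{F_i}+U_{F_i}+\sigma_{F_i})$, which is equivalent to $\sigma_{F_i} \leq \tfrac{S_{F_i}}{t}-S_{F_i}-U_{F_i}$. Since $\sigma_{F_i}$ counts Interests and must be a non-negative integer, taking the floor yields the claimed upper bound $\sigma_{F_i} \leq \lfloor \tfrac{S_{F_i}}{t}-S_{F_i}-U_{F_i}\rfloor$.

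To finish, I would justify the lower bound $\sigma_{F_i} \geq 0$ and the fact that the floor on the right-hand side is itself non-negative. This follows from the membership $F_i \in \mathcal{F_R}$, which means $R_{F_i} = \tfrac{S_{F_i}}{S_{F_i}+U_{F_i}} \geq t$; rearranging gives $\tfrac{S_{F_i}}{t} \geq S_{F_i}+U_{F_i}$, so $\tfrac{S_{F_i}}{t}-S_{F_i}-U_{F_i} \geq 0$ and hence its floor is $\geq 0$ as well. Thus the interval $0 \leq \sigma_{F_i} \leq \lfloor \tfrac{S_{F_i}}{t}-S_{F_i}-U_{F_i}\rfloor$ is non-empty and every choice in it preserves $R_{F_i} \geq t$.

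I do not expect any serious obstacle: the argument is essentially a one-line algebraic manipulation of the reliability formula combined with a worst-case monotonicity observation. The only subtlety worth being explicit about is that the worst-case assumption (all extra Interests unsatisfied) is what makes the bound tight and guarantees the inequality in \emph{every} outcome, not merely in expectation; this should be stated clearly so that the reader sees why the adversarial choice is the right one.
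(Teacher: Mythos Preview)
Your proposal is correct and is essentially a detailed expansion of the paper's own proof, which simply states that the claim ``follows directly from the definition of $R_{F_i}$.'' You have made explicit the worst-case assumption and the algebraic manipulation that the paper leaves implicit, but the underlying approach is the same.
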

\begin{proof}[Proof of Prop. \ref{prop:sigma}] Follows directly from the definition of $R_{F_i}$ (cf. Table \ref{tab:basic_definitions}).
\end{proof}
%

Given $\Gamma§$ and $\sigma_{F_i} \forall F_i \in \mathcal{F_S}$, SAF is able to determine the maximum traffic that can/should be shifted from $\mathcal{F_U}$ to $\mathcal{F_S}$. We denote this amount as $\Gamma'$ as defined in Equation~\ref{eq:determine_shift_traffic}.
\begin{equation}
\small
\label{eq:determine_shift_traffic}
\Gamma' = \min\Big(\frac{1}{I} \cdot \sum\limits_{F_i \in \mathcal{F}_S} \sigma_{F_i}, \Gamma \Big)
\end{equation}
The next step for SAF is to determine the forwarding probabilities for period $n+1$ by: 
\begin{inparaenum}[\itshape i\upshape)]
\item decreasing the forwarding probabilities for faces in $\mathcal{F_U}$ by $\Gamma'$, Equation~\ref{eq:decrease_fu}; 
\item increasing the forwarding probabilities for faces in $\mathcal{F_S}$ by $\Gamma'$, Equation~\ref{eq:increase_fs}.
\end{inparaenum}
\begin{equation}
\small
\label{eq:decrease_fu}
\forall F_i \in \mathcal{F_U} :  p_{n+1}(F_i)  \leftarrow p_n(F_i) - UT_{F_i} \cdot \alpha_{F_i}
\end{equation}
\begin{equation}
\small
\label{eq:increase_fs}
\forall F_i \in \mathcal{F_S} : p_{n+1}(F_i) \leftarrow p_n(F_i) + \Gamma' \cdot \frac{\sigma_{F_i}}{\sum\limits_{F_i \in \mathcal{F}_S} \sigma_{F_i}}
\end{equation}
Note that Equation~\ref{eq:decrease_fu}, does not use $\Gamma'$ to determine the amount of the traffic reduction. Instead the complete unsatisfied traffic from a face $F_i \in F_U$ considering $\alpha_{F_i}$ is removed, which exactly adds up to $\Delta$. As the residual traffic  ($\Gamma\ - \Gamma'$) can not be satisfied by any $F_i \in \mathcal{F} \setminus \{F_D\}$, it is beneficial to drop this portion of the traffic. Forwarding those Interests would likely cause congestion and impair the performance. So SAF simply determines the amount of residual unsatisfied traffic and puts it on the virtual face $F_D$, denoted by Equation~\ref{eq:set_pfd} (cf. Alg.~\ref{alg:updateOperations} line 6).
Theorem~\ref{thm:convergence} shows that SAF converges to a steady state, which is defined as a state where $\mathcal{F_U} = \emptyset$.
\begin{equation}
\label{eq:set_pfd}
\small
p(F_D) = \Gamma - \Gamma'
\end{equation}
%
\begin{theorem}
\label{thm:convergence}
Given any state with $\mathcal{F_U} \neq \emptyset$, $t \in [t_{min},t_{max}]$ and without the loss of generality set $\alpha$ sufficiently small (i.e., $\alpha = \min (\alpha_{F_i} : F_i \in \mathcal{F_U})$). $d_{F_i}$ denotes the number of  Interests that can be satisfied on face $F_{i}$, $p^*(F_i) =  \textstyle\frac{d_{F_i}}{I}$ denotes the optimal forwarding probability for face $F_i$ and $I$ denotes the number of Interests that shall be forwarded in every period. Assume that $I$ is constant for every period and that $I \cdot p(F_i) \geq \textstyle\frac{d_{F_i}}{t}$. According to Equations~\ref{eq:decrease_fu},~\ref{eq:increase_fs} and~\ref{eq:set_pfd} SAF converges to $\mathcal{F_{U}} = \emptyset$ after $n$ periods (iterations) bounded by:
\begin{equation}
\small
\label{eq:convergence}
n \leq \underset{F_{i} \in \mathcal{F_U}}{\max}\left(\left\lceil\frac{ln(\frac{d_{F_i}}{t} - d_{F_i}) - ln(p_{0}(F_{i}) \cdot I - d_{F_i})}{ln(1-\alpha)}\right\rceil\right),
\end{equation}
where $p_{0}$ denotes the initial forwarding probability.
\end{theorem}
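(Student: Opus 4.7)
The plan is to reduce the convergence statement to a one-dimensional linear recurrence by tracking, for each $F_i \in \mathcal{F}_U$ separately, the excess forwarding probability $x_n := I\cdot p_n(F_i) - d_{F_i}$. The first thing to notice is that the update rule in Eq.~\ref{eq:decrease_fu} for an unreliable face depends only on quantities local to that face ($UT_{F_i}$ and $\alpha_{F_i}$); how the removed mass is absorbed by $\mathcal{F}_S$ or pushed to the dropping face $F_D$ (Eqs.~\ref{eq:increase_fs} and~\ref{eq:set_pfd}) plays no role in how $p_n(F_i)$ itself evolves. Hence each unreliable face may be analyzed independently.

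Under the hypothesis $I\cdot p_n(F_i) \geq d_{F_i}/t > d_{F_i}$, face $F_i$ is oversubscribed: it satisfies exactly $S_{F_i} = d_{F_i}$ Interests, leaving $U_{F_i} = I p_n(F_i) - d_{F_i}$ unsatisfied, so $UT_{F_i} = p_n(F_i) - d_{F_i}/I$. Substituting into Eq.~\ref{eq:decrease_fu} with the uniform contraction rate $\alpha \leq \alpha_{F_i}$ gives $p_{n+1}(F_i) \leq p_n(F_i) - \alpha\bigl(p_n(F_i) - d_{F_i}/I\bigr)$. Multiplying through by $I$ and subtracting $d_{F_i}$ turns this into the clean linear recurrence $x_{n+1} \leq (1-\alpha)\,x_n$, which iterates to the closed form $x_n \leq x_0\,(1-\alpha)^n$ with $x_0 = I p_0(F_i) - d_{F_i}$.

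Next, I translate the stopping condition $F_i \notin \mathcal{F}_U$ into a condition on $x_n$. Using $R_{F_i} = d_{F_i}/(I\cdot p_n(F_i)) = d_{F_i}/(x_n + d_{F_i})$, the threshold $R_{F_i} \geq t$ is equivalent to $x_n \leq d_{F_i}/t - d_{F_i}$. Plugging in the closed form for $x_n$ and solving $x_0(1-\alpha)^n \leq d_{F_i}/t - d_{F_i}$ for $n$, while noting that $\ln(1-\alpha) < 0$ flips the inequality direction, yields after taking ceilings exactly the per-face bound sitting inside the $\max$ in Eq.~\ref{eq:convergence}. Taking the maximum over all $F_i \in \mathcal{F}_U$ then establishes that by period $n$ every originally unreliable face has migrated into $\mathcal{F}_R$, i.e., $\mathcal{F}_U = \emptyset$.

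The only delicate point is making sure Eq.~\ref{eq:decrease_fu} continues to apply throughout the iterations counted by the bound, i.e., that $F_i$ actually remains in $\mathcal{F}_U$ until the stopping criterion is met. This follows immediately from the monotone non-increase of $x_n$: as long as $x_n > d_{F_i}/t - d_{F_i}$, the equivalence above shows $R_{F_i} < t$ and the same update rule governs the next step. The choice $\alpha = \min_{F_i \in \mathcal{F}_U}\alpha_{F_i}$ is simply a worst-case device replacing the per-face contraction factors $(1-\alpha_{F_i})$ by a uniform, weaker one, so that the bound holds for every face simultaneously. This is the step I expect to require the most explicit justification in the writeup, since the theorem hides it inside the "without loss of generality" phrase.
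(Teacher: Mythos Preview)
Your proposal is correct and follows essentially the same route as the paper: both derive the affine recurrence $p_{n+1}(F_i) = (1-\alpha)\,p_n(F_i) + \alpha\,d_{F_i}/I$ from Eq.~\ref{eq:decrease_fu} via $UT_{F_i} = p_n(F_i) - d_{F_i}/I$, solve it in closed form, and then invert the reliability condition $I\,p_n(F_i) > d_{F_i}/t$ to obtain the bound in Eq.~\ref{eq:convergence}. Your substitution $x_n = I\,p_n(F_i) - d_{F_i}$ is a cosmetic improvement that turns the recurrence into the homogeneous $x_{n+1} \leq (1-\alpha)x_n$, sparing you the geometric-series manipulation the paper carries out explicitly, but the underlying argument is identical.
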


\begin{proof}[Proof of Theorem \ref{thm:convergence}]
In the case of $\mathcal{F_{U}} \neq \emptyset$, SAF uses Equation \ref{eq:decrease_fu} to reduce the forwarding probabilities on every $F_{i} \in \mathcal{F_U}$ until all become reliable. According to the definition of $R_{F_i}$, $F_i \in \mathcal{F_U}$ iff $p_{n}(F_{i}) \cdot I > \frac{d_{F_i}}{t}$. In this case we either shift the probabilities $\forall F_{i} \in \mathcal{F_{U}}$ to faces in $\mathcal{F_S}$, or we drop the traffic by increasing the forwarding probability of $F_D$. Since we may express $UT_{F_i}$ as $ p_n(F_i) - \frac{d_{F_i}}{I}$ assuming a perfect random distribution for Algorithm~\ref{alg:chooseFace}, and according to Equation~\ref{eq:decrease_fu} we have, $\forall F_i \in \mathcal{F_U}$:
\begin{equation}
\label{eq:optim_eqn}
\small
 p_{n+1}(F_i) = p_n(F_i) \cdot (1-\alpha) + \alpha \cdot \frac{d_{F_i}}{I}.
\end{equation}
Solving the recursion we have,
\begin{equation}
\label{eq:sequence_solved}
\small
p_{n}(F_i) = p_0(F_i) \cdot (1-\alpha)^n + \alpha \cdot \frac{d_{F_i}}{I} \cdot \sum_{j=0}^{n-1} (1-\alpha)^{j}.
\end{equation}
Equation~\ref{eq:sequence_solved} and $\underset{n\rightarrow \infty}{lim} p_n(F_i) = \textstyle\frac{d_{F_i}}{I}$ provides the claim that $p_n(F_i) \rightarrow p^*(F_i)$ and  with $t \rightarrow 1$, $p^*(F_i) = \textstyle\frac{d_{F_i}}{I}$ denotes the optimum for face $F_i$. Further, one easily shows that the convergence speed ($|p_{n+1}(F_i) - p^*(F_i)| \leq M \cdot |p_{n}(F_i) - p^*(F_i)|$) is linear with $M = (1-\alpha)$.
By plugging $p_n(F_i)$ into  $p_{n}(F_{i}) \cdot I > \frac{d_{F_i}}{t}$ and using the formula for the geometric series ($0 < 1-\alpha < 1$) we have,
\begin{equation}
\small
(1-\alpha)^n \cdot (p_0(F_i) \cdot I - d_{F_i}) + d_{F_i} > \frac{d_{F_i}}{t}.
\end{equation}
Then the number of periods $n$ until all unreliable faces become reliable is bounded by Equation~\ref{eq:convergence}.
\end{proof}

{\small
\begin{remark}
In Theorem \ref{thm:convergence} we assumed a constant number of Interests in each period. In order to show that Theorem  \ref{thm:convergence} is still valid given a varying number of Interests for each time dependent period, we may define a period by the number of Interests. Thus, we again have a constant number of Interests during a period and Theorem \ref{thm:convergence} holds. 
\end{remark}
\begin{remark}
\label{remark:stability}
Let $(\mathbb{R},| x - y |)$ be a metric space. According to Equation \ref{eq:optim_eqn}, $p^*(F_i)$ is a fixed point. This fixed point is globally asymptotic stable for $\alpha \in ]0,1]$.
\end{remark}

\begin{proof}[Proof of Remark \ref{remark:stability}]
Let $\Phi(k,\kappa,\xi) = \xi (1-\alpha)^{k-\kappa} + \alpha \cdot \textstyle{\frac{d_{F_i}}{I}} \cdot \textstyle{\sum_{j=\kappa}^{k-1}(1-\alpha)^{j+1}} \cdot \id_{\{k\geq\kappa\}}$ with initial pair $(\xi, \kappa) \in \mathbb{R} \times \mathbb{N}$, $p_\kappa = \xi$ and $k\geq\kappa$ be the general solution to our autonomous difference equation given in Equation \ref{eq:optim_eqn}.
First, we show that $p^*(F_i)$ is a fixed point and it is globally attractive. 
\begin{equation}
\small
p_n(F_i) \cdot (1-\alpha) + \alpha \cdot \frac{d_{F_i}}{I} = p_n(F_i) \nonumber \iff
p_n(F_i) =  \frac{d_{F_i}}{I}. \nonumber
\end{equation}
According to Theorem \ref{thm:convergence}, it follows that for $\alpha \in ]0,1]$ the fixed point $p^*(F_i)$ is globally attractive ($\forall (\xi, \kappa) \in \mathbb{R}\times\mathbb{N} :  {lim}_{n\rightarrow \infty} |p_n(F_i) - p^*(F_i)| = 0$).
Second, we show that the solution $p*=\textstyle{\frac{d_{F_i}}{I}}$ is stable for all $\alpha \in ]0,1]$. For $\alpha \in ]0,1[$, $\forall k\in\mathbb{N}, k\geq\kappa$ and for any $\varepsilon > 0$ we have,
{\small
\begin{align}
&|\Phi(k,\kappa,\xi) - p^*| = \nonumber \\ 
&|\xi \cdot (1-\alpha)^{k-\kappa} + \alpha \cdot \frac{d_{F_i}}{I} \cdot \sum_{j=\kappa}^{k-1}(1-\alpha)^{j+1}  \cdot \id_{\{k>\kappa\}} -  \frac{d_{F_i}}{I}| \leq \nonumber \\ 
&|(1-\alpha)^{k-\kappa} \cdot \Big(\xi - \frac{d_{F_i}}{I}\Big)| \leq |\xi - \frac{d_{F_i}}{I}| < \varepsilon \quad \forall \xi \in B_\delta\Big(\frac{d_{F_i}}{I}\Big), \nonumber
\end{align}
}%
with $\delta = \textstyle{\frac{\varepsilon}{2}}$, where $B_\delta(y) := \{x \in \mathbb{R} : |x - y| < \delta\}$. 
For $\alpha=1$ the proof is analogues. Thus, $p^*$ is globally asymptotic stable.
\end{proof}
}

\subsection{Probing to Identify Unknown Paths}
\label{sec:prob}
After shifting traffic in the previous step, $p(F_D)$ holds the residual traffic that cannot be forwarded on the physical faces. This traffic is going to be dropped by the virtual face $F_D$. SAF's probing mechanism takes a share of this traffic and uses it to discover new paths towards the content origin or to discover nodes holding cached replicas. 
The fraction of the traffic that is used for probing is limited by $\rho$ as denoted in Equation~\ref{eq:rho}. The probe is defined by Equation~\ref{eq:probe}.
\begin{equation}
\label{eq:rho}
\small
\rho = 1 - \textstyle\sum_{F_i \in \mathcal{F} \setminus F_D} ST_{F_i} = 1 - (\delta - ST_{F_D})
\end{equation}
\begin{equation}
\label{eq:probe}
\small
probe = p(F_D) \cdot \rho
\end{equation}

$\rho$ increases based on the proportion of the satisfied and the unsatisfied traffic on physical faces. The larger the fraction of the unsatisfied traffic, the larger $\rho$. The idea behind this is that the more unsatisfied traffic we have, the more important it is to discover additional paths to the content. In the worst case where $p(F_D) = 1$ the entire traffic is used for probing. For instance, this can happen when a previously working path to the content suffers from link failure(s). In this case probing may help to circumvent the broken link or to identify yet unknown path(s) to cached content replicas.
The probe is uniformly distributed on all faces in $\mathcal{F_P}$ as denoted in Equation~\ref{eq:split_probe}. Finally, the forwarding probability of the virtual face $F_D$ needs to be adjusted as outlined in Equation~\ref{eq:fd_probe}.
\begin{equation}
\label{eq:split_probe}
\small
\forall F_i \in \mathcal{F_P} : p_{n+1}(F_i) = p_n(F_i) + \textstyle\frac{probe}{|\mathcal{F_P}|}
\end{equation}
\begin{equation}
\small
\label{eq:fd_probe}
p_{n+1}(F_D) = p_n(F_D) - probe
\end{equation}

After the probing phase has been carried out, SAF checks if it has to decrease the reliability threshold (cf. Algorithm~\ref{alg:updateOperations} lines 9--10). If $p(F_D)$ is larger than $(1-t)$, the reliability threshold has to be decreased since it cannot be retained. In contrast to this, the reliability threshold should be increased if currently all Interests can be satisfied with a reliability of at least $t$ (cf. Algorithm~\ref{alg:updateOperations} lines 13--14). We suggest Equations~\ref{eq:inc_t} and~\ref{eq:dec_t} for the adjustment of $t$, where $\lambda$ denotes the rate of change.
\begin{equation}
\small
\label{eq:inc_t}
t_{n+1} = (1-\lambda) \cdot t_n + \lambda \cdot t_{max}
\end{equation}
\begin{equation}
\small
\label{eq:dec_t}
t_{n+1} = (1-\lambda) \cdot t_n - \lambda \cdot t_{min}
\end{equation}

\subsection{SAF Examples}
We present two examples, one illustrating how SAF approaches the optimal FWT, and a second one illustrating the probing mechanism that allows SAF to recover from a link failure in the given scenario. Figure~\ref{fig:toy_example} depicts the example network with the corresponding link capacities in Mbps. Our object of investigation is the router $R$. This router has to forward Interests for a given content prefix $c$, which can be retrieved at the content provider. $R$ maintains three physical faces which have links to the routers $F_0, F_1,$ and $F_2$ (we use faces from $R$ interchangeably with routers $F_0, F_1,$ and $F_2$). Interests forwarded via $F1$ and $F2$ reach the content provider. 
We assume that $R$ has to satisfy a constant flow of Interests requesting 3~Mbps of Data packets. Furthermore we assume that the initial value for $t = 0.5$ and $\forall F_i \in \mathcal{F} : \alpha_{F_i} = 1$ for all periods. 
The initial FWT for $R$ can either be provided by the routing layer, or, as in this example, the traffic is uniformly distributed among the faces (cf. Figure~\ref{subfig1-tab-init}). During the first period each of the faces $F_0, F_1,$ and $F_2$ receives Interests requesting Data packets for a bitrate of 1~Mbps. While $F_1$ and $F_2$ satisfy all Interests, $F_0$ is not able to satisfy any. Therefore, $F_0$ is classified as an unreliable face and as $\alpha_{F_0} = 1$ the entire unsatisfied traffic is removed (cf. Eq.~\ref{eq:decrease_fu}). The forwarding probability of $F_0$ is shifted towards $F_1$ and $F_2$ (cf. Eq.~\ref{eq:increase_fs} and Figure~\ref{subfig1-tab-step1}). In the second period both, $F_1$ and $F_2$, receive Interests requesting Data packets for 1.5~Mbps. While $F_2$ is able to satisfy all Interests, $F_1$ satisfies only $\textstyle\frac{2}{3}$. However, based on $t$ both faces are considered as reliable and no changes are performed on the FWT (cf. Figure~\ref{subfig1-tab-step2}). Only the threshold $t$ is increased (cf. Alg.~\ref{alg:updateOperations} line 14), assume $t$ is increased to $0.75$. In the third period the traffic is distributed as in the second period, with the same issue that $F_1$ can satisfy only $\textstyle\frac{2}{3}$ of the Interests. However, this time $t > \frac{2}{3}$, which classifies $F_1$ as unreliable. Therefore, the unsatisfied traffic ($\approx 0.08$) is shifted towards $F_2$ (cf. Figure~\ref{subfig1-tab-step3}). With each further period, the FWT approaches the optimal distribution ($p(F_1)=\frac{1}{3}$, $p(F_2)=\frac{2}{3}$).

\begin{figure}[tb!]
\centering
\includegraphics[scale=0.64]{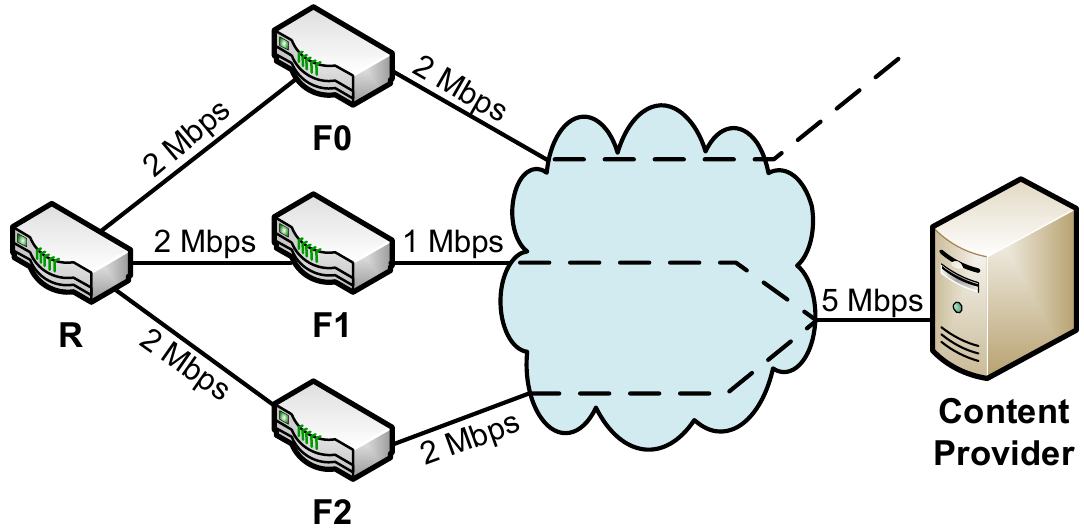}
\vspace{-0.1cm}
\caption{Example network.}
\label{fig:toy_example}
\end{figure}

For the second example we consider the same network as before with one major change. The path traversing $F_0$ ends at a content replica of the content provider. We use the optimal FWT from the previous example as the starting point for the second example, $t=0.99$ and $\forall F_i \in \mathcal{F} : \alpha_{F_i}=1$ for all periods.
We further assume that, at the beginning of the first period, a link failure on the path between $F_2$ and the content provider occurs. Therefore, none of the Interests can be satisfied via $F_2$, which causes this face to be marked as unreliable. As $t=0.99$, $F_1$ is not able to take any additional Interests from $F_2$ causing this traffic portion ($\frac{2}{3}$) to be shifted to the virtual face $F_D$. As $p(F_D) > 0$, probing is issued (cf. Alg.~\ref{alg:updateOperations} line 7). This results in the situation that a share of $p(F_D) \cdot \rho = \frac{2}{3} \cdot \frac{2}{3}$ Interests is used for probing on face $F_0$ (cf. Eq~\ref{eq:rho} and~\ref{eq:split_probe}). Figure~\ref{subfig2-tab-step1} illustrates the FWT after the first iteration. Since $p(F_D) > 1 - t$, $t$ is decreased. Assume $t$ is decreased to $t=0.75$. In the second period, all Interests forwarded via $F_0$ and $F_1$ can be satisfied and SAF is able to discover a new path to the content via $F_0$. After the first iteration, still 22.2\% of the Interests are dropped in advance. Since in the second period all Interests forwarded via $F_0$ and $F_1$ can be satisfied and $\sigma_{F_0} + \sigma_{F_1} > p(F_D)$ ($\approx 0.148+0.111>0.222$) the forwarding probabilities for $F_D$ are distributed according to Eq.~\ref{eq:increase_fs} (cf. Figure~\ref{subfig2-tab-step2}). As the capacity of $F_1$ was already exhausted, a part of the traffic ($\approx 0.43-0.33 \approx 0.1$) is dropped during the third period due to congestion. However, according to $t=0.75$, $F_1$ is still reliable and since all faces are considered as reliable during the third period the FTW is not modified between iteration two and three. Only the reliability threshold is increased, assume $t$ is increased to $0.85$. Therefore, after the fourth period $F_1$ is considered as unreliable and the unsatisfied traffic is completely taken away according to Eq.~\ref{eq:decrease_fu} (as $\forall F_i \in \mathcal{F} : \alpha_{F_i}=1$ is assumed). This traffic portion is then distributed among the reliable face(s) in $\mathcal{F_S}$ (in this case only $\mathcal{F_S}=\{F_0\}$), and as $\sigma_{F_0} > UT_{F_1}$, SAF reaches the optimal FTW as shown in Figure~\ref{subfig2-tab-step3}.

\begin{figure}[tb!]
\hspace{-1.5cm}
\resizebox{.41\linewidth}{!}{
\subfloat[Initial\label{subfig1-tab-init}]{%
\kbordermatrix{
    t=0.5 & c\\
    F_D & 0\\
    F_0 & 1/3\\
    F_1 & 1/3\\
    F_2 & 1/3
  }
}}\hspace{-1.5cm}
\resizebox{.41\linewidth}{!}{
\subfloat[Iteration 1\label{subfig1-tab-step1}]{%
\kbordermatrix{
    t=0.5 & c\\
    F_D & 0\\
    F_0 & 0\\
    F_1 & 1/2\\
    F_2 & 1/2
  }
}}\hspace{-1.5cm}
\resizebox{.41\linewidth}{!}{
\subfloat[Iteration 2\label{subfig1-tab-step2}]{%
\kbordermatrix{
    t=0.75 & c\\
    F_D & 0\\
    F_0 & 0\\
    F_1 & 1/2\\
    F_2 & 1/2
  }
}}\hspace{-1.5cm}
\resizebox{.41\linewidth}{!}{
\subfloat[Iteration 3\label{subfig1-tab-step3}]{%
\kbordermatrix{
    t=0.75 & c\\
    F_D & 0\\
    F_0 & 0\\
    F_1 & 0.42\\
    F_2 & 0.58 
  }
}}
\vspace{-0.1cm}
\caption{The FWTs of router $R$ for the \textbf{first} example.}
\label{fig:example1}
\end{figure}

\begin{figure}[tb!]
\hspace{-1.5cm}
\resizebox{.41\linewidth}{!}{
\subfloat[Initial\label{tab-init}]{%
\kbordermatrix{
    t=0.99 & c\\
    F_D & 0\\
    F_0 & 0\\
    F_1 & 1/3\\
    F_2 & 2/3
  }
}}\hspace{-1.5cm}
\resizebox{.41\linewidth}{!}{
\subfloat[Iteration 1\label{subfig2-tab-step1}]{%
\kbordermatrix{
    t=0.75 & c\\
    F_D & 2/9\\
    F_0 & 4/9\\
    F_1 & 1/3\\
    F_2 & 0\\
  }
}}\hspace{-1.5cm}
\resizebox{.41\linewidth}{!}{
\subfloat[Iter. 2 \emph{and} 3 ($t=0.85$)\label{subfig2-tab-step2}]{%
\kbordermatrix{
    t=0.75 & c\\
    F_D & 0\\
    F_0 & 0.57\\
    F_1 & 0.43\\
    F_2 & 0
  }
}}\hspace{-1.5cm}
\resizebox{.41\linewidth}{!}{
\subfloat[Iteration 4\label{subfig2-tab-step3}]{%
\kbordermatrix{
    t=0.85 & c\\
    F_D & 0\\
    F_0 & 2/3\\
    F_1 & 1/3\\
    F_2 & 0
  }
}}
\vspace{-0.1cm}
\caption{The FWTs of router $R$ for the \textbf{second} example.}
\label{fig:example2}
\end{figure}

\section{Evaluation}
\label{sec:evaluation}

\begin{figure*}[tbh!]
\subfloat[Sample topology \emph{LowCon}\label{uniform_subfig-1:LowCon}]{%
\includegraphics[width=0.24\textwidth]{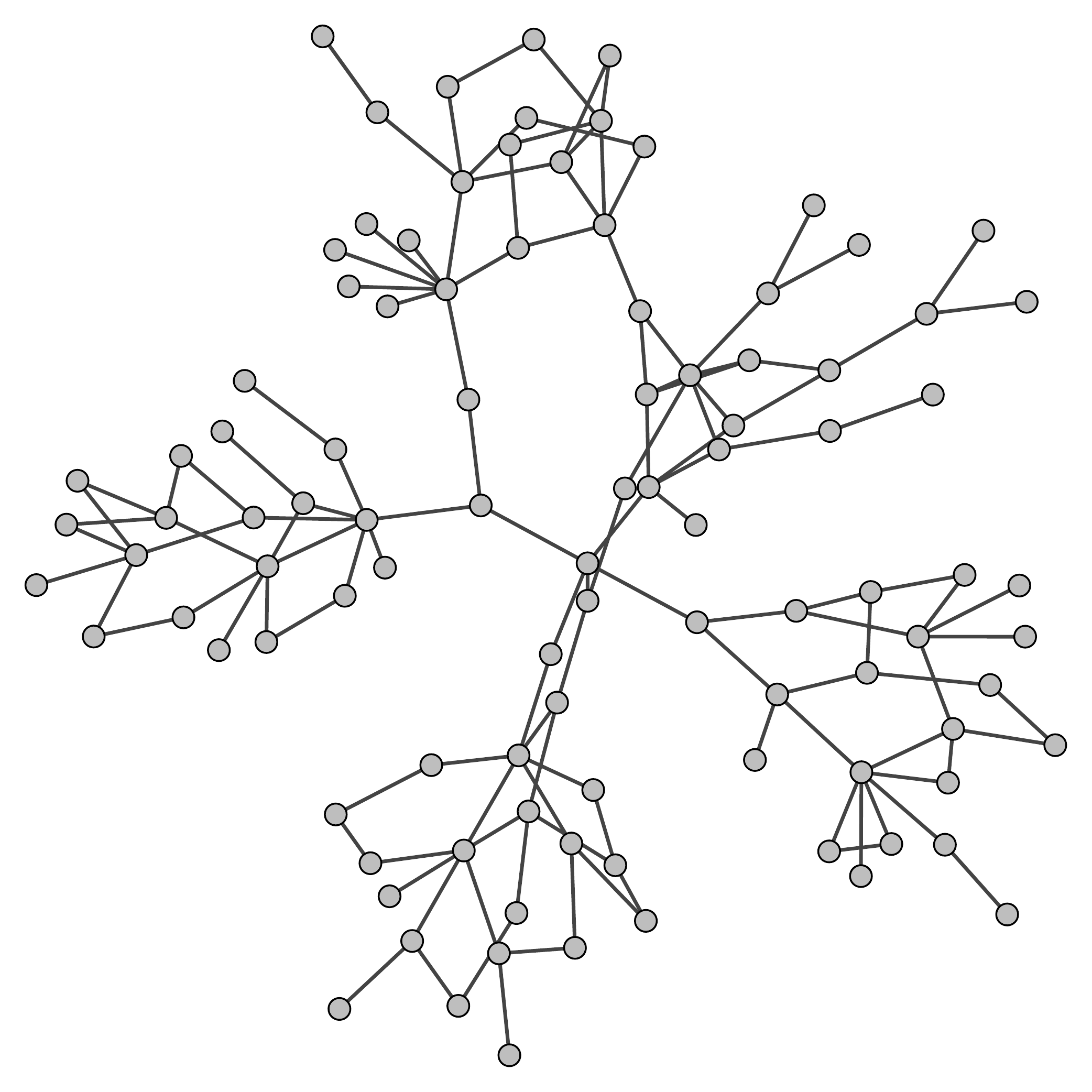}
}
\hfill
\subfloat[Sample topology \emph{MediumCon}\label{uniform_subfig-2:MediumCon}]{%
\includegraphics[width=0.24\textwidth]{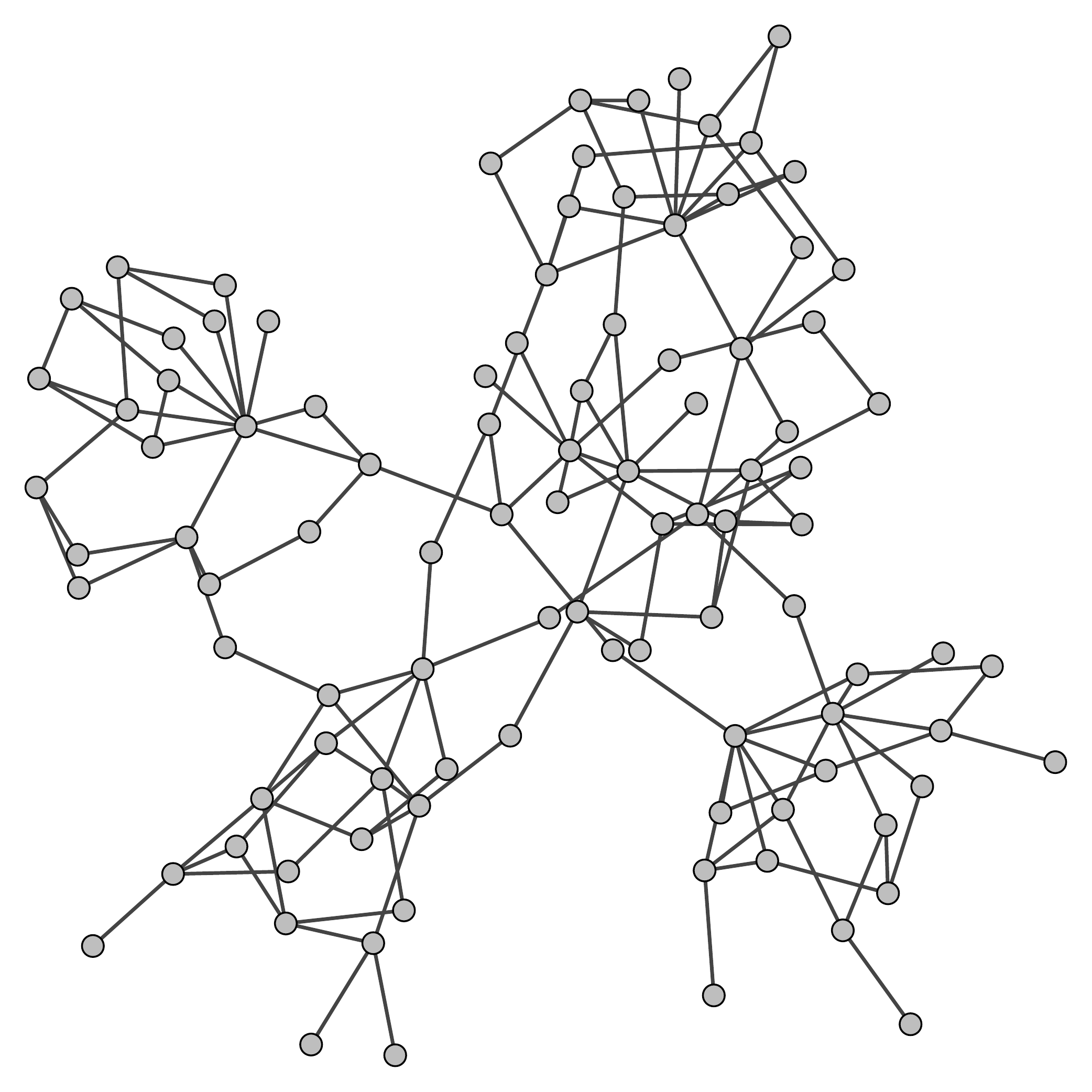}
}
\hfill
\subfloat[Sample topology \emph{HighCon}\label{uniform_subfig-3:HighCon}]{%
\includegraphics[width=0.24\textwidth]{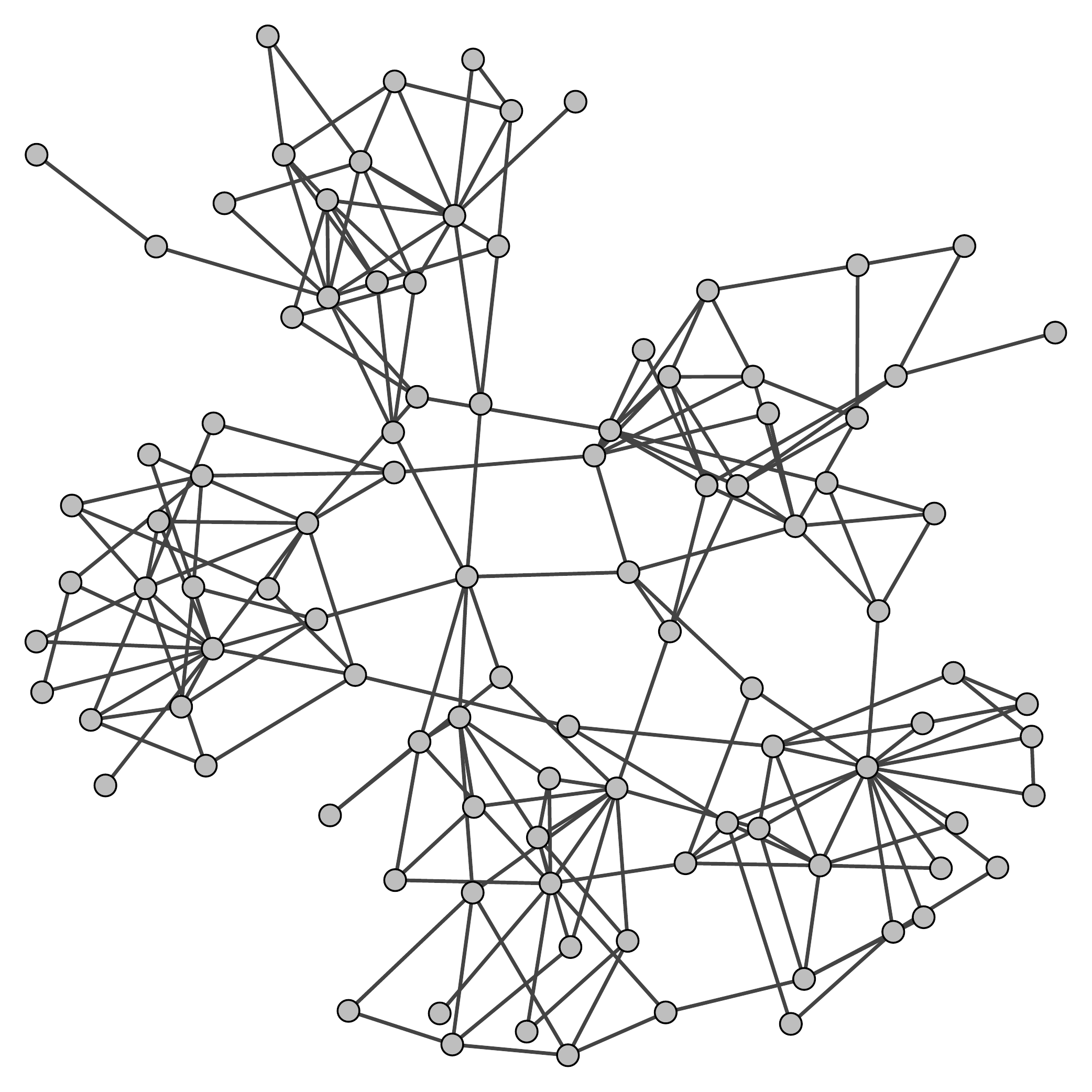}
}
\vspace{-0.1cm}
\caption{Sample topologies for the different network \textbf{connectivities} as defined in Table~\ref{tab:connectivity} (without client/server nodes).}
\label{fig:network_connectivities}
\vspace{-0.35cm}
\end{figure*}

\revised{In this section we investigate the performance of SAF by comparing it to related work and standard forwarding algorithms already present in the ns3/ndnSIM 2.0 simulator~\cite{ndnSIM}. The selected algorithms from related work are: ideal Nearest Replica Routing (iNRR)~\cite{Rossini:2014}, the Request Forwarding Algorithm (RFA)~\cite{Carofiglio:2013}, and the On-demand Multi-Path Interest Forwarding (OMP-IF)~\cite{multipathInterestFW}. As already mentioned in Section~\ref{sec:related_work}, iNRR uses an oracle to determine for every Interest the nearest node which holds a cached copy of the corresponding Data packet and to provide the shortest path to this node~\cite{Rossini:2014}. RFA has been implemented as described in~\cite{Carofiglio:2013}. OMP-IF has been implemented according to the descriptions given in~\cite{multipathInterestFW}. We provide an open source implementation of SAF together with the implementations of iNRR, RFA, and OMP-IF in ns3/ndnSIM 2.0 at \emph{github.com/danposch/SAF}. For SAF we use the presented measure $\mathcal{M}^{T}$, thus, maximizing the throughput for specific content prefixes at every node in the network.}

\subsection{Network Topology Generation}
\label{subsec:network_topologies}
%
%


For generating random network topologies we employ the network topology generator BRITE~\cite{brite}. BRITE was configured to build scale-free networks in a top-down fashion as the Internet topology is likely to be described by power-laws~\cite{Faloutsos:1999}. \revised{We model the infrastructure of large ISPs, interconnecting several autonomous systems and access networks.} The top level represents $\mu=5$ autonomous systems (AS). Each AS maintains $\nu=20$ nodes (bottom level) acting as ICN routers \revised{(in total we have $5\cdot20=100$ ICN routers)} \revised{ and serving as access nodes for later on added client and server nodes}. Both the top- and the bottom-level graphs are randomly generated based on the Barab\'{a}si-Albert model. The model was configured to connect each node with exactly one neighbor. This generates scale-free networks with no redundant paths. As ICN characteristics such as multi-path delivery and link-failure recovery can only be evaluated properly if redundant paths exist, we extended the generated graphs with additional random edges. This was necessary as BRITE only enables to set an integer number of links per node, and therefore does not support a ``probability-based'' link insertion. The alternative, to set the number of neighbours to values greater or equal two, results in already very well connected networks inhibiting a fine-granular investigation of the influence of redundant paths on the forwarding strategies.

Based on the aforementioned considerations, we generate nine different topology variants. These topology variants differ in graph connectivity and available bandwidth resources. Table~\ref{tab:connectivity} specifies the configured graph connectivities, where the second and third columns specify the number of additionally added edges at the top and at the bottom level, respectively. We define the connectivity $C(\mathcal{N})$ of a network $\mathcal{N}$ as $C(\mathcal{N}) = \textstyle\frac{1}{(|\mathcal{V}|-1)\cdot|\mathcal{V}|} \cdot \textstyle\sum_{v \in \mathcal{V}}deg(v)$, where $deg(v)$ denotes the (edge) degree of node $v$.
Figure~\ref{fig:network_connectivities} illustrates three sample topologies for the chosen connectivity values. Table~\ref{tab:bandwidth} lists the bandwidth resources provided to the links. The capacity of each link is randomly drawn from a uniform distribution limited by the indicated intervals. The nine topology variants arise from the cross product of Table~\ref{tab:connectivity} and Table~\ref{tab:bandwidth}. 
For instance, if we refer to variant \textit{LowConLowBW}, this refers to a topology defined by settings given in the first two lines of Table~\ref{tab:connectivity} and Table~\ref{tab:bandwidth}.
%
\subsection{\revised{Scenario Description}}
\label{sec:scenario_desc}

We evaluate the mentioned algorithms under two different request scenarios. First, the popularity of content is uniformly distributed among all clients. This leads to the fact that most of the content is concentrated at the caches in the core of the network~\cite{Wang:2013}. Second, we distribute the popularity of content according to a Zipf distribution which has the opposite effect (caches at the edges are utilized)~\cite{Wang:2013}. \revised{Before simulations are conducted, we perform a limited parameter investigation in Subsection~\ref{sec:saf_period} regarding the duration of the period (the time between two iterations of Algorithm~\ref{alg:updateOperations}) for SAF, as Theorem~\ref{thm:convergence} suggests that the duration of the period has significant influence on SAF's performance.}

\begin{table}[t]
\small
\centering
\begin{tabular}{|l|c|c|c|}
\hline
\textsc{identifier} &\textsc{top level }&\textsc{bottom level}& \textsc{$C(\mathcal{N})$}\\
\hline
\hline
LowCon& $\lfloor \mu / 2 \rfloor$ & $\lfloor\nu / 3 \rfloor$& 0.0265 \\\hline
MediumCon&  $\mu$ & $\lfloor \nu / 2 \rfloor$& 0.0311 \\\hline
HighCon& $\lfloor \mu * 2 \rfloor$ & $\nu$& 0.0422\\\hline
\end{tabular}
\caption{Additional edges per connectivity variant. }
\label{tab:connectivity}
\vspace{-0.1cm}
\end{table}
%
\begin{table}[t]
\small
\centering
\begin{tabular}{|l|c|c|}
\hline
\textsc{identifier} &\textsc{top level }&\textsc{bottom level}\\
\hline
\hline
LowBW& [2, 4] Mbps & [1, 2] Mbps \\\hline
MediumBW&  [3, 5] Mbps & [2, 4] Mbps \\\hline
HighBW& [4, 6] Mbps & [3, 5] Mbps \\\hline
\end{tabular}
\caption{Assigned link capacities.}
\label{tab:bandwidth}
\vspace{-0.1cm}
\end{table}

Given a generated topology we randomly placed $a=100$ clients and $b=10$ servers in the network. Clients are configured to request content from a single server with a rate of 30 Interests per second, uniformly distributed during a second.
This corresponds to a download rate of approximately 1 Mbps, as a single Interest always requests a 4~kB Data packet. The uniform distributed request rate represents a steady consumption of content, e.g., streaming video at a certain bit-rate. As already mentioned, for the first scenario we assume a perfect uniform content popularity. 
A client randomly starts to consume content within the first 30 seconds of the simulation. Each server provides unique content identified by an arbitrary prefix, e.g., \textit{/server\_id/}. Each node in the network is equipped with a 25~MB large cache which corresponds to approximately 1\% of the available content catalogue. We consider this size as sufficient and realistic for the following reason. For instance, consider the MediumConMediumBW-scenario. A node in this scenario maintains on average 3.11~links (cf. Table~\ref{tab:connectivity}). The maximum link capacity for this scenario is limited by 5~Mbps (cf. Table~\ref{tab:bandwidth}), which leads to an average traffic of $3.11 \cdot 5~Mbps = 15.55~Mbps$. So, given a node that fully utilizes all links may still cache more than 10~seconds of traffic on average. Considering the findings in~\cite{Anand:2009} that 40\% of all cache hits occur in the first 10~seconds, and that clients in the selected scenarios start to request content within a 30 second window, we consider the cache to be sufficiently large. We use the Least Recently Used (LRU) cache replacement strategy~\cite{Wang:2013} and nodes cache every packet they receive.

We further introduce 0, 50, or 100 random link failures during each simulation run. A link failure's point of occurrence and its duration are distributed uniformly. The duration of a link failure is drawn from the interval of $[0,\lfloor\frac{SimTime}{10}\rfloor]$, where $SimTime$ denotes the duration of a simulation run. Nodes are configured such that they know all possible routes to any content server at the beginning of each simulation run. This is necessary as many of SAF's competitors require that this information is provided by the routing layer. In this experiment, SAF uses the routing information only as starting point for the initial FWT. During the simulation no routing updates are enforced, which puts the responsibility to deal with short-term topology changes to the forwarding strategies. We simulate 1800~seconds of network traffic using the aforementioned parameters \revised{and conduct 50~runs per setting}.

\subsection{\revised{SAF: Influence of the Period on the Performance}}
\label{sec:saf_period}
\revised{In order to investigate the influence of different values for the duration of a period ($\tau$), we considered the scenario MediumConMediumBW with 50~link failures given uniform content popularity. Figure~\ref{fig:saf_period} depicts the 95\% confidence intervals of the average Interest satisfaction ratio, the cache hit ratio and the hop count for $\tau \in \{0.1,0.5,1.0,2.5,5.0,10.0,50.0,100.0\}$ seconds. The Interest satisfaction ratio denotes the ratio between received Data packets and generated Interests by all clients. The cache hit ratio is averaged over all network nodes (clients and servers maintain no cache). 
The hop count provides the number of links a Data packet traversed to satisfy an issued Interest by a client considering cache hits.}

\revised{One can observe in Figure~\ref{fig:saf_period} that in general shorter periods increase the performance regarding the Interest satisfaction ratio. This is consistent with Theorem~\ref{thm:convergence}, which provides the number of steps for SAF to converge to the optimum (shorter periods, faster steps). However, one can also see that a too small period, e.g. $\tau=0.1s$, may have a negative influence on the performance. This is due to the fact that the observed traffic within extremely short periods is not representative enough to deduce a suitable decision. The cache hit ratio varies only slightly considering different period durations. Nevertheless, one can observe that the highest ratios are achieved with period durations of few seconds. The average hop count does not change significantly, regardless of the selected period.
We use $\tau = 1.0~seconds$ for all further evaluations, as this setting achieves a good performance with acceptable computational overhead (cf. Section~\ref{sec:forwarding_strategy}).} 

\begin{figure}[tb!]
\centering
\includegraphics[scale=0.27]{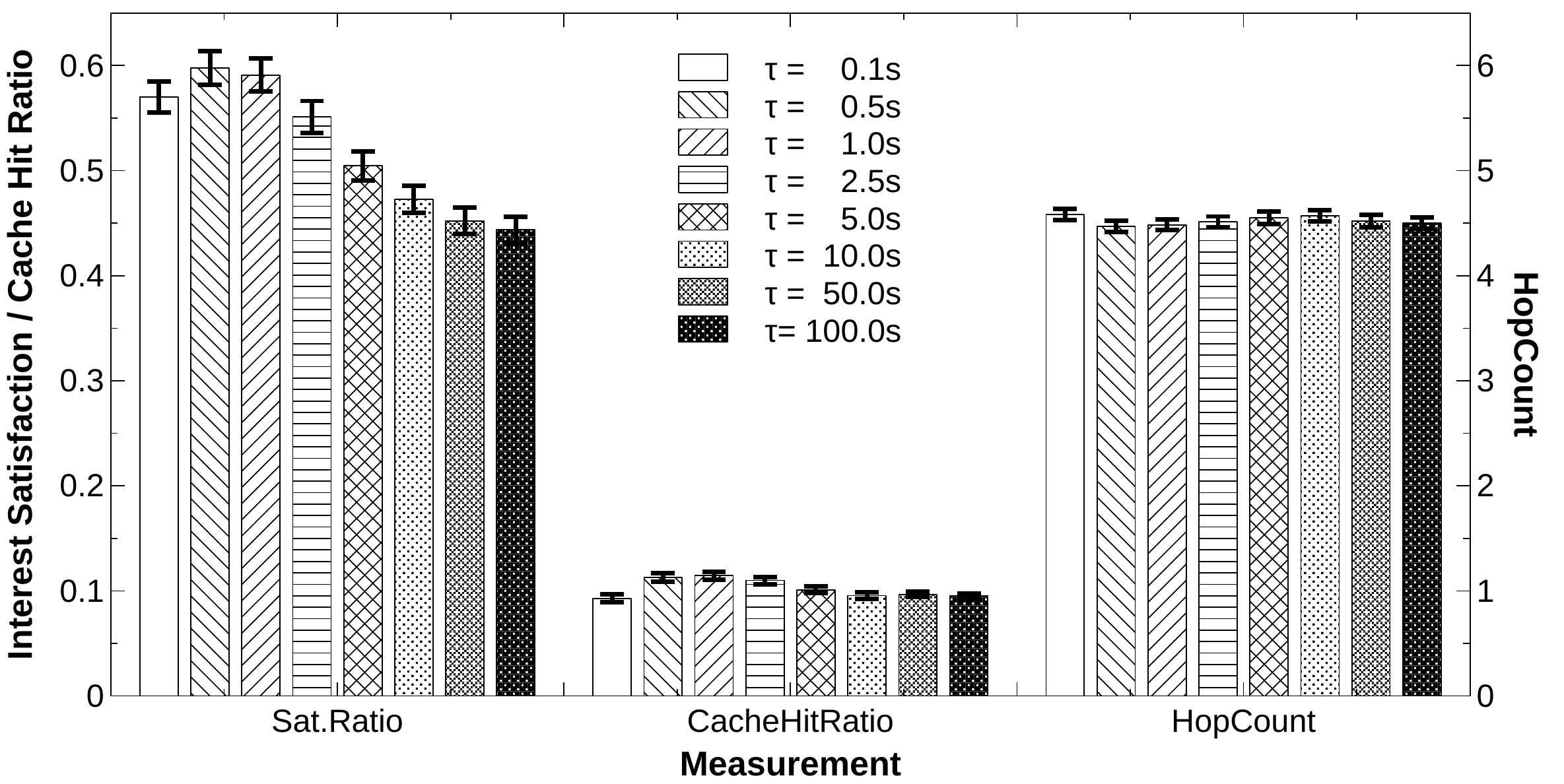}
\caption{\revised{Influence of different period durations ($\tau$) on SAF, given scenario \textit{MediumBWMediumCon} with 50~link failures}.}
\label{fig:saf_period}
\end{figure}

\subsection{\revised{Performance under Uniform Content Popularity}}
\label{sec:uniform_request_eval}
Figures~\ref{fig:forwarding_results_0LinkFailures_ndnsim2.0},~\ref{fig:forwarding_results_50LinkFailures_ndnsim2.0}, and~\ref{fig:forwarding_results_100LinkFailures_ndnsim2.0} depict the 95\% CI of the Interest satisfaction ratio for each pairing (topology variant, forwarding strategy) having 0, 50, or 100 random link failures during each simulation. It is evident that SAF outperforms its competitors in nearly all simulation scenarios in terms of Interests satisfied, regardless of the number of link failures, the connectivity of the underlying network and/or network topology. \revised{The two strongest competitors to SAF are iNRR and OMP-IF, especially in scenarios with more resources available. SAF's good performance considering all scenario settings can be explained by its ability to:
\begin{inparaenum}[\itshape i\upshape)]
\item smartly circumvent congested nodes and link failures by taking detours into account;
\item prevent further transmission of Interests on congested links by redirecting them to the virtual dropping face;
\item discover paths to cached content that are not indicated by the FIB.
\end{inparaenum}}
This is also reflected by Figure~\ref{fig:forwarding_results_0LinkFailures_ndnsim2.0_cache} depicting the cache hit ratio for the scenarios with zero link failures. In most of the cases SAF outperforms the other algorithms\revised{, with the exception of iNRR, which has perfect knowledge about the content chunks in the individual caches}. Figure~\ref{fig:hop_count_0LinkFailures_ndnsim2.0} illustrates the average hop count \revised{per satisfied Interest} for the scenarios with zero link failures.
\revised{As expected, SAF maintains a higher hop count than some of the other algorithms due to the detours it takes for maximizing the Interest satisfaction according to $\mathcal{M}^T$. The very low hop count of Broadcast and NCC show that both algorithms are able to obtain nearby replicas, however, due to their poor Interest satisfaction and cache hit ratio the low hop count is valueless.} 

\begin{table}[t]
\centering
\small
\begin{tabular}{|l|c|c|c|}
\hline
\textsc{Strategy} &\textsc{LowBW}&\textsc{MediumBW}& \textsc{HighBW}\\
\hline
\hline
Broadcast& 0.024 $\vert$ 0.133 & 0.097 $\vert$ 0.242 & 0.195 $\vert$ 0.333 \\\hline
NCC& 0.033 $\vert$ 0.155 & 0.102 $\vert$ 0.229 & 0.181 $\vert$ 0.287 \\\hline
ShortestRoute& 0.012 $\vert$ 0.098 & 0.065 $\vert$ 0.188 & 0.135 $\vert$ 0.247 \\\hline
RFA& 0.060 $\vert$ 0.218 & 0.109 $\vert$ 0.267 & 0.111 $\vert$ 0.263 \\\hline
iNRR& 0.016 $\vert$ 0.107 & 0.112 $\vert$ 0.226 & 0.204 $\vert$ 0.263 \\\hline
\revised{OMP-IF} & 0.045 $\vert$ 0.177 & 0.135 $\vert$ 0.233 & 0.252 $\vert$ 0.298 \\\hline
SAF& 0.090 $\vert$ 0.240 & 0.253 $\vert$ 0.334 & 0.288 $\vert$ 0.320 \\\hline
\end{tabular}
\caption{\textbf{Actual} (first value) and \textbf{relative} (second value) \textbf{performance loss} comparing Figure~\ref{uniform_subfig-2:00lf} and~\ref{uniform_subfig-2:100lf}.}
\label{tab:linkfailure_comp}
\end{table}

\begin{figure*}[tbh!]
\subfloat[Results for LowCon\label{uniform_subfig-1:0lf}]{%
\includegraphics[width=0.29\textwidth]{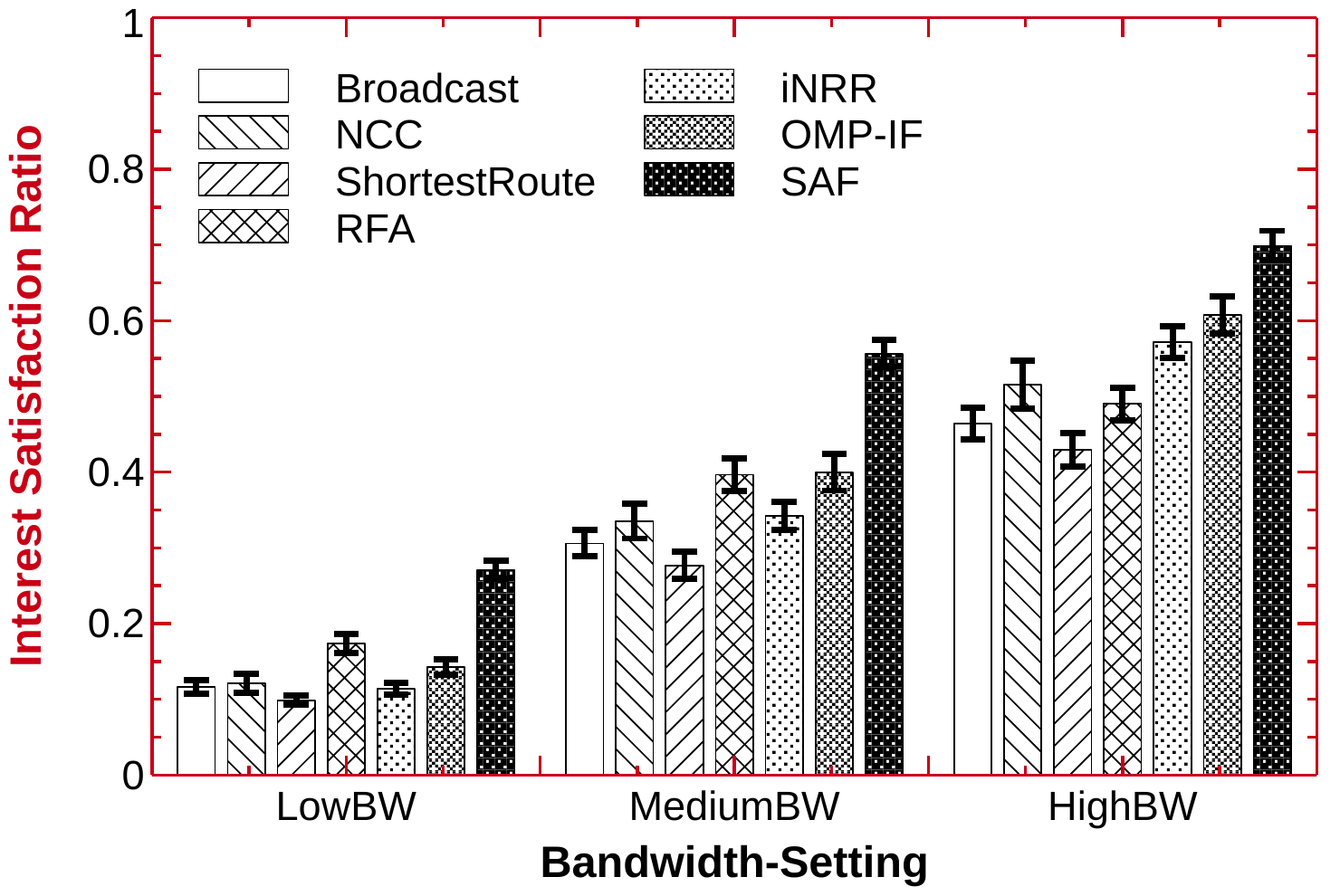}
}
\hfill
\subfloat[Results for MediumCon\label{uniform_subfig-2:00lf}]{%
\includegraphics[width=0.29\textwidth]{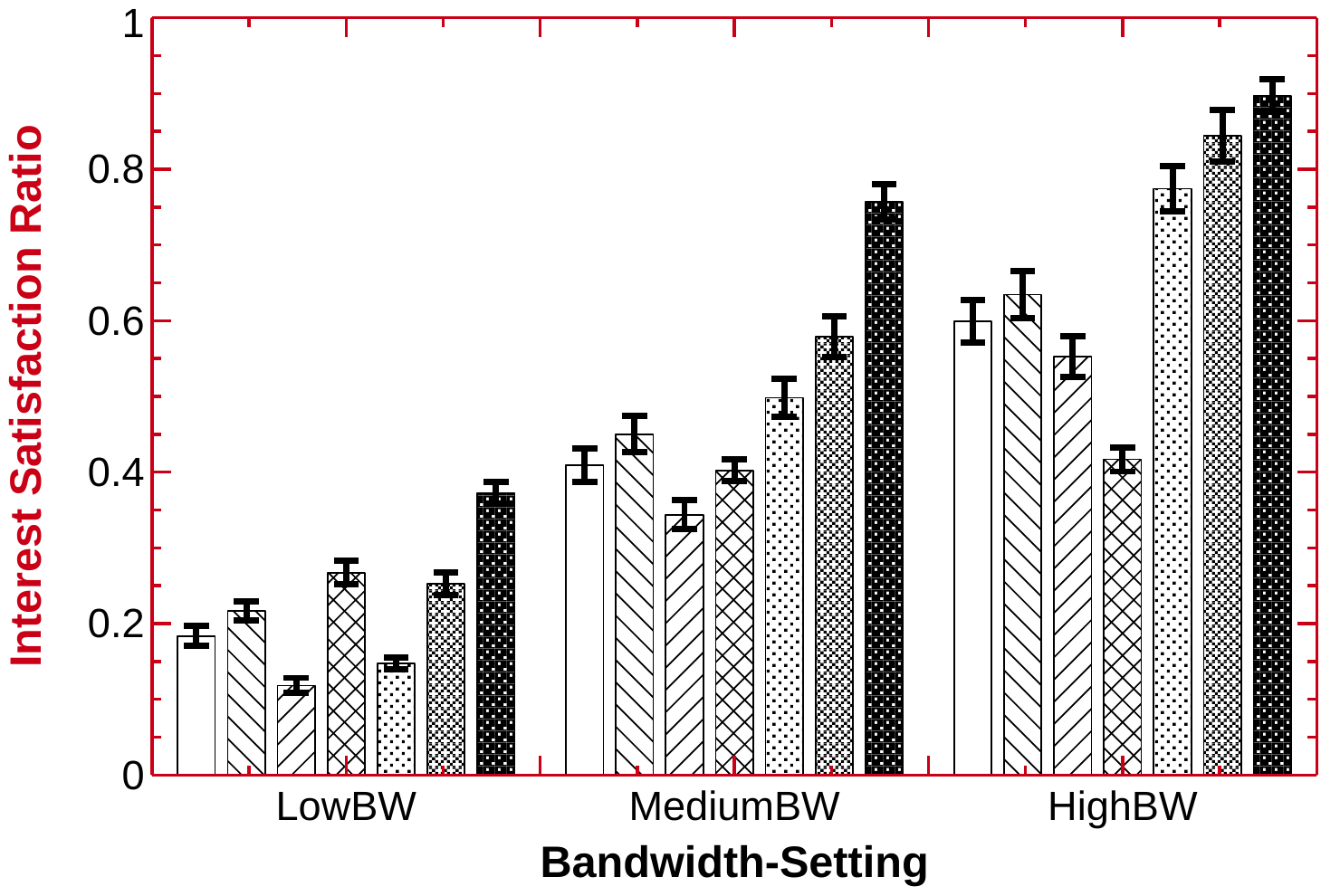}
}
\hfill
\subfloat[Results for HighCon\label{uniform_subfig-3:00lf}]{%
\includegraphics[width=0.29\textwidth]{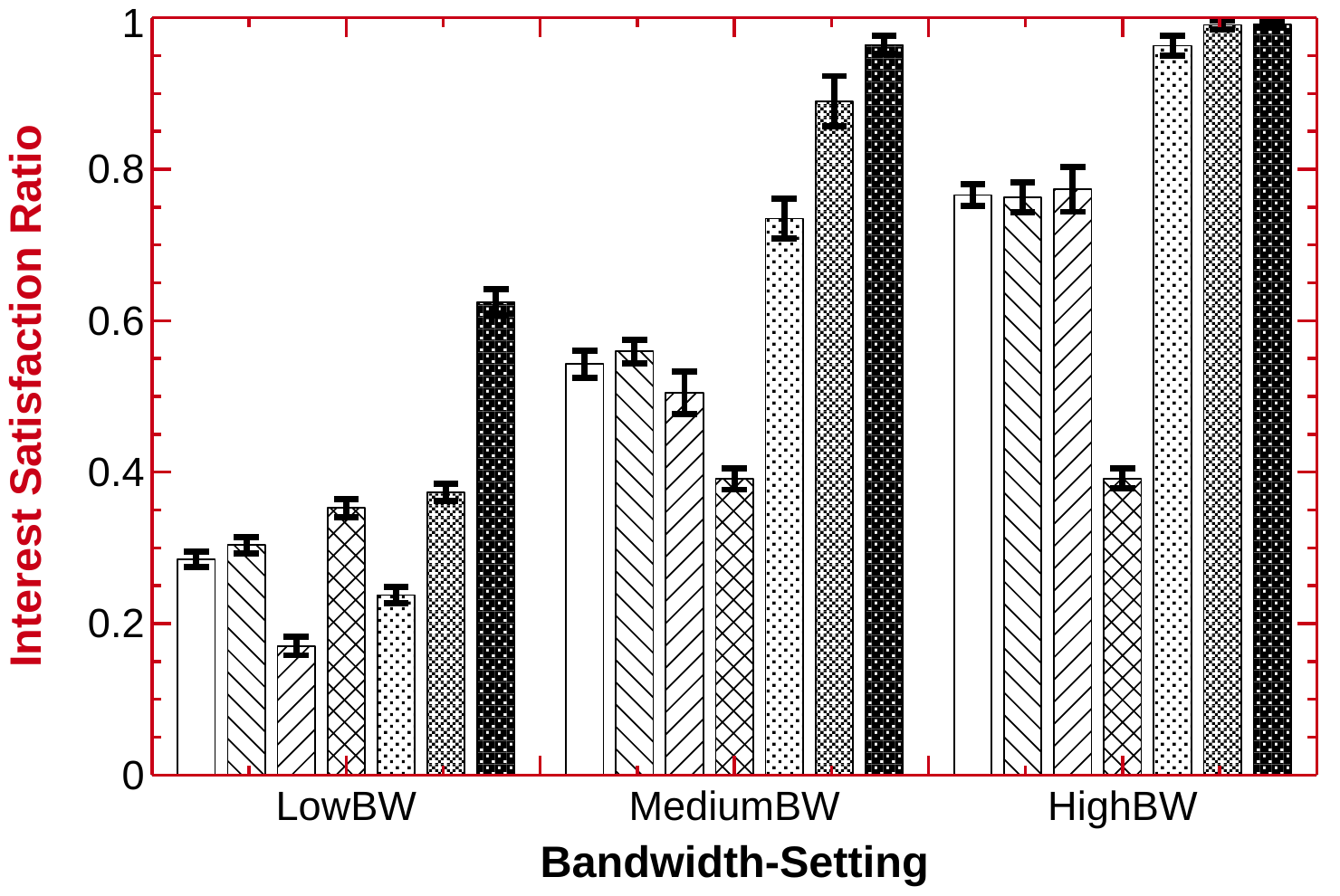}
}
\vspace{-0.1cm}
\caption{Average \textcolor{harvardcrimson}{\textbf{Interest satisfaction ratio}} and 95\% CI with \textbf{0 link failures} per simulation run (higher is better) [uniform].}
\label{fig:forwarding_results_0LinkFailures_ndnsim2.0}

\subfloat[Results for LowCon\label{uniform_subfig-1:50lf}]{%
\includegraphics[width=0.29\textwidth]{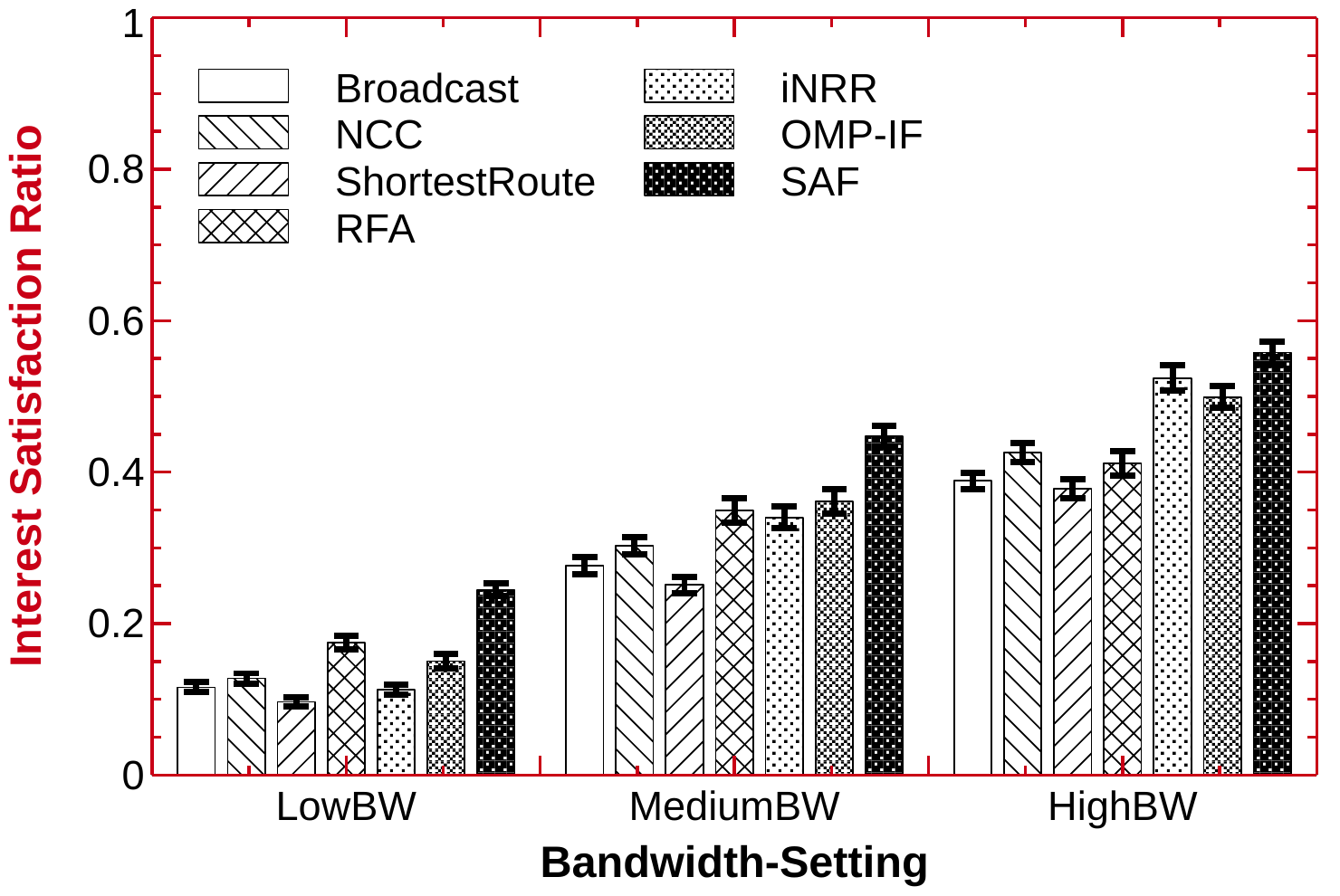}
}
\hfill
\subfloat[Results for MediumCon\label{uniform_subfig-2:50lf}]{%
\includegraphics[width=0.29\textwidth]{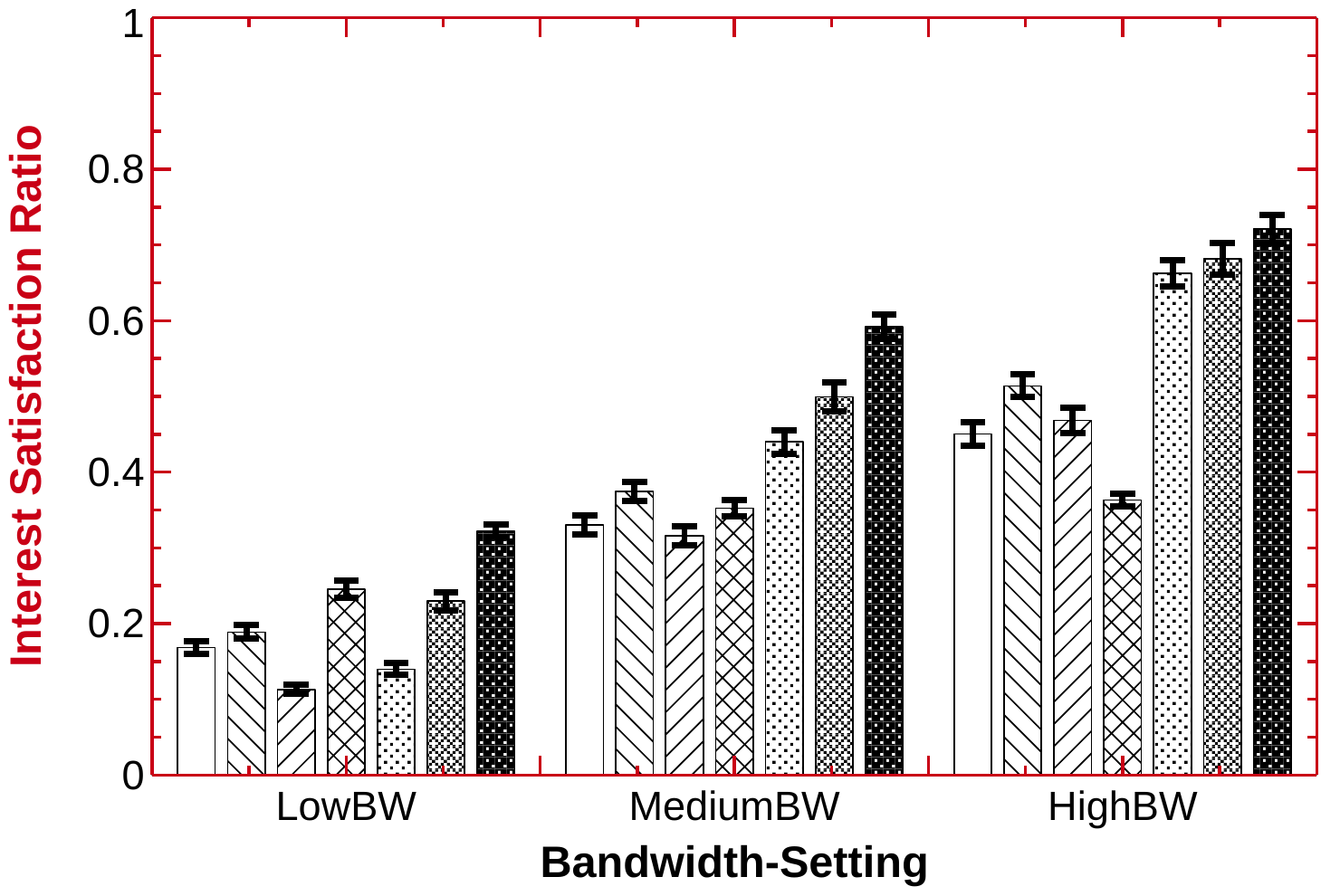}
}
\hfill
\subfloat[Results for HighCon\label{uniform_subfig-3:50lf}]{%
\includegraphics[width=0.29\textwidth]{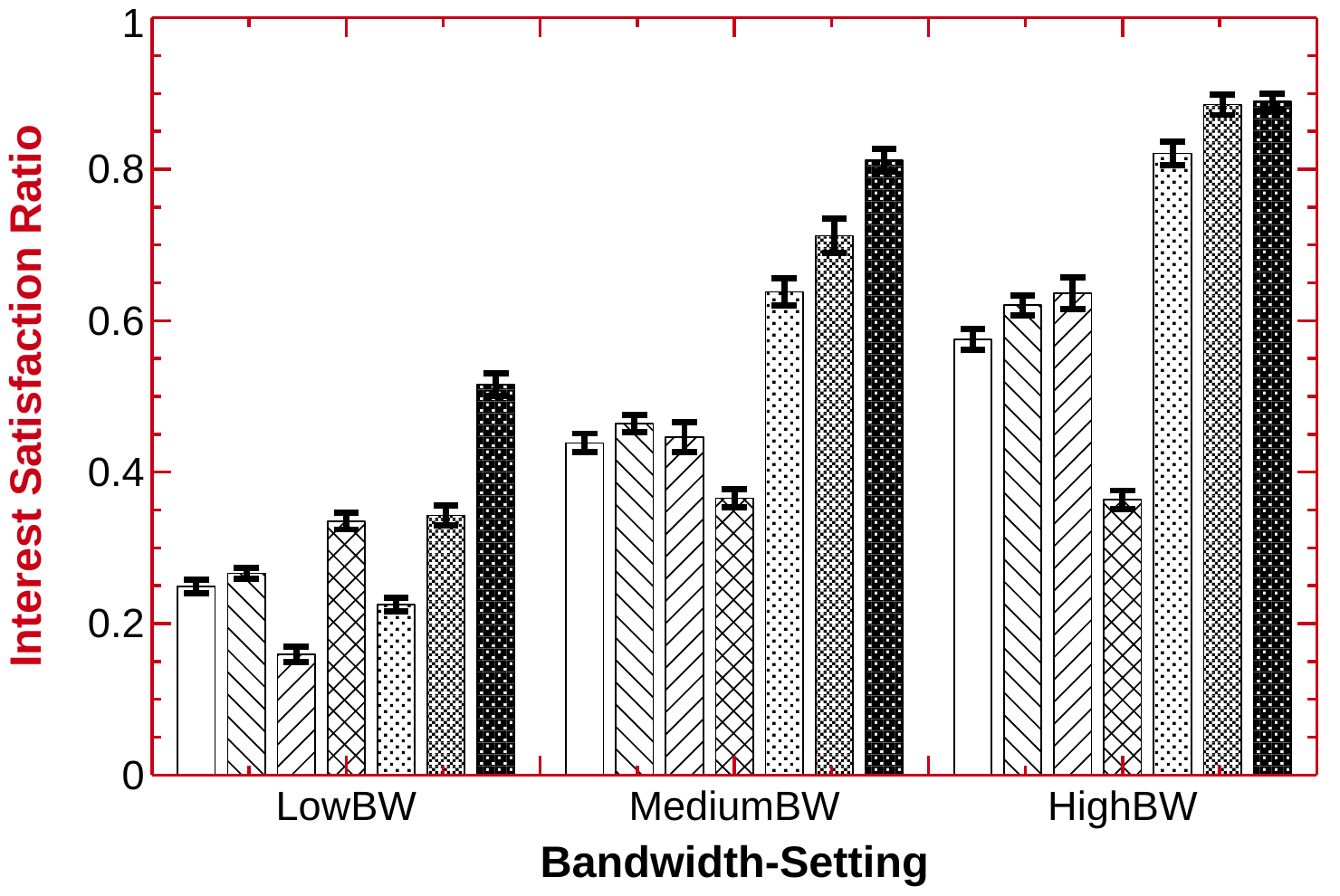}
}
\vspace{-0.1cm}
\caption{Average \textcolor{harvardcrimson}{\textbf{Interest satisfaction ratio}} and 95\% CI with \textbf{50 link failures} per simulation run (higher is better) [uniform].}
\label{fig:forwarding_results_50LinkFailures_ndnsim2.0}

\subfloat[Results for LowCon\label{uniform_subfig-1:100lf}]{%
\includegraphics[width=0.29\textwidth]{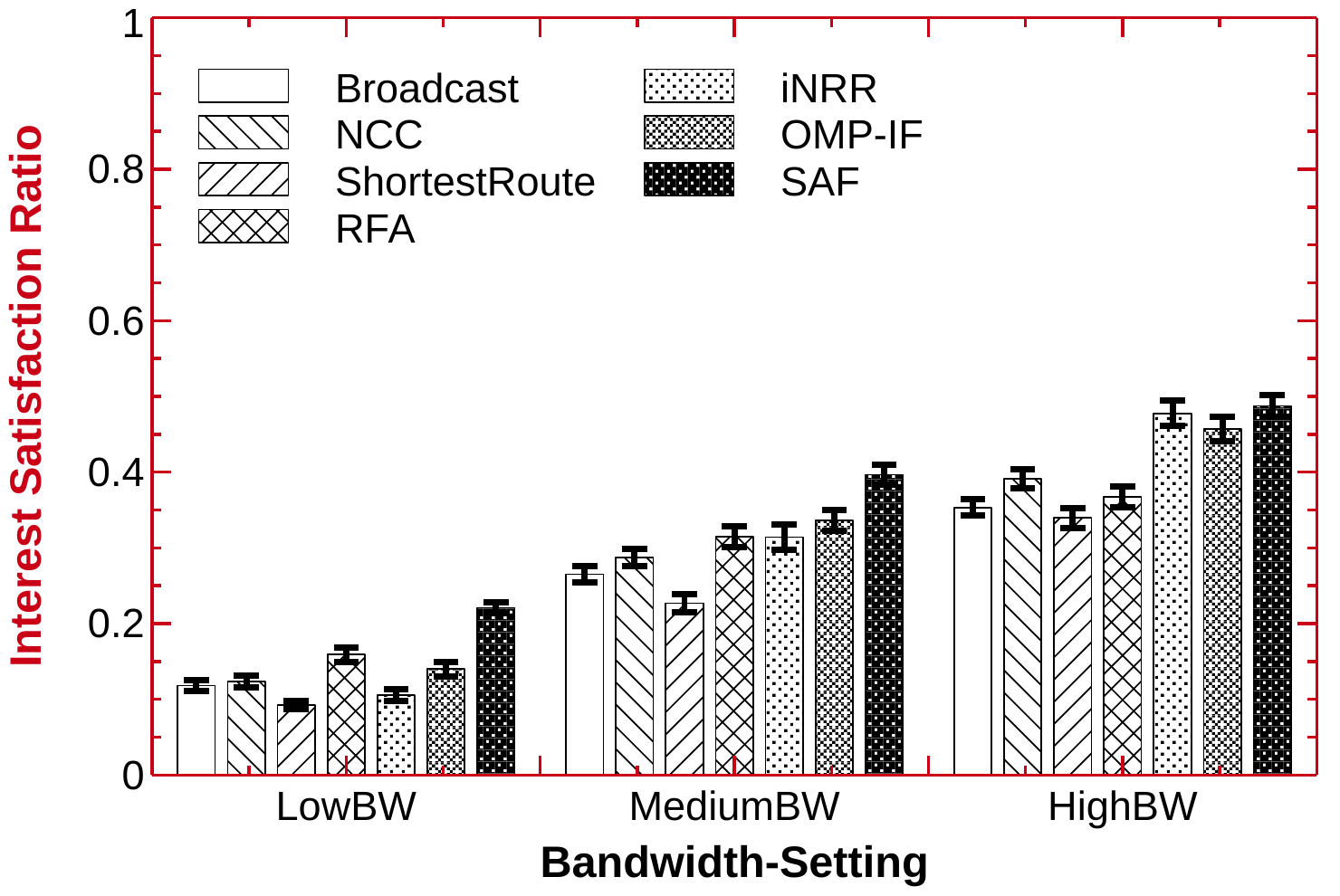}
}
\hfill
\subfloat[Results for MediumCon\label{uniform_subfig-2:100lf}]{%
\includegraphics[width=00.29\textwidth]{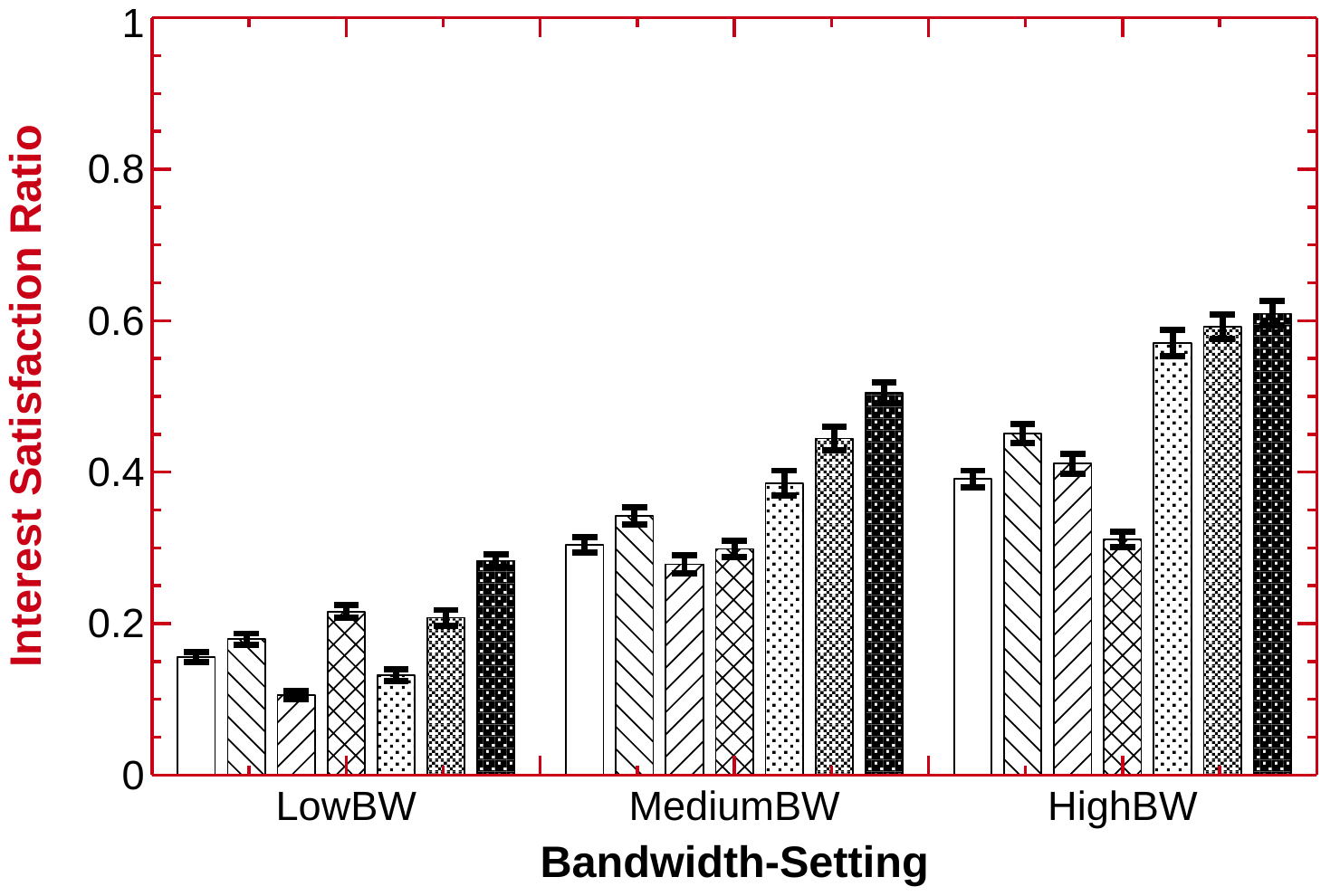}
}
\hfill
\subfloat[Results for HighCon\label{uniform_subfig-3:100lf}]{%
\includegraphics[width=0.29\textwidth]{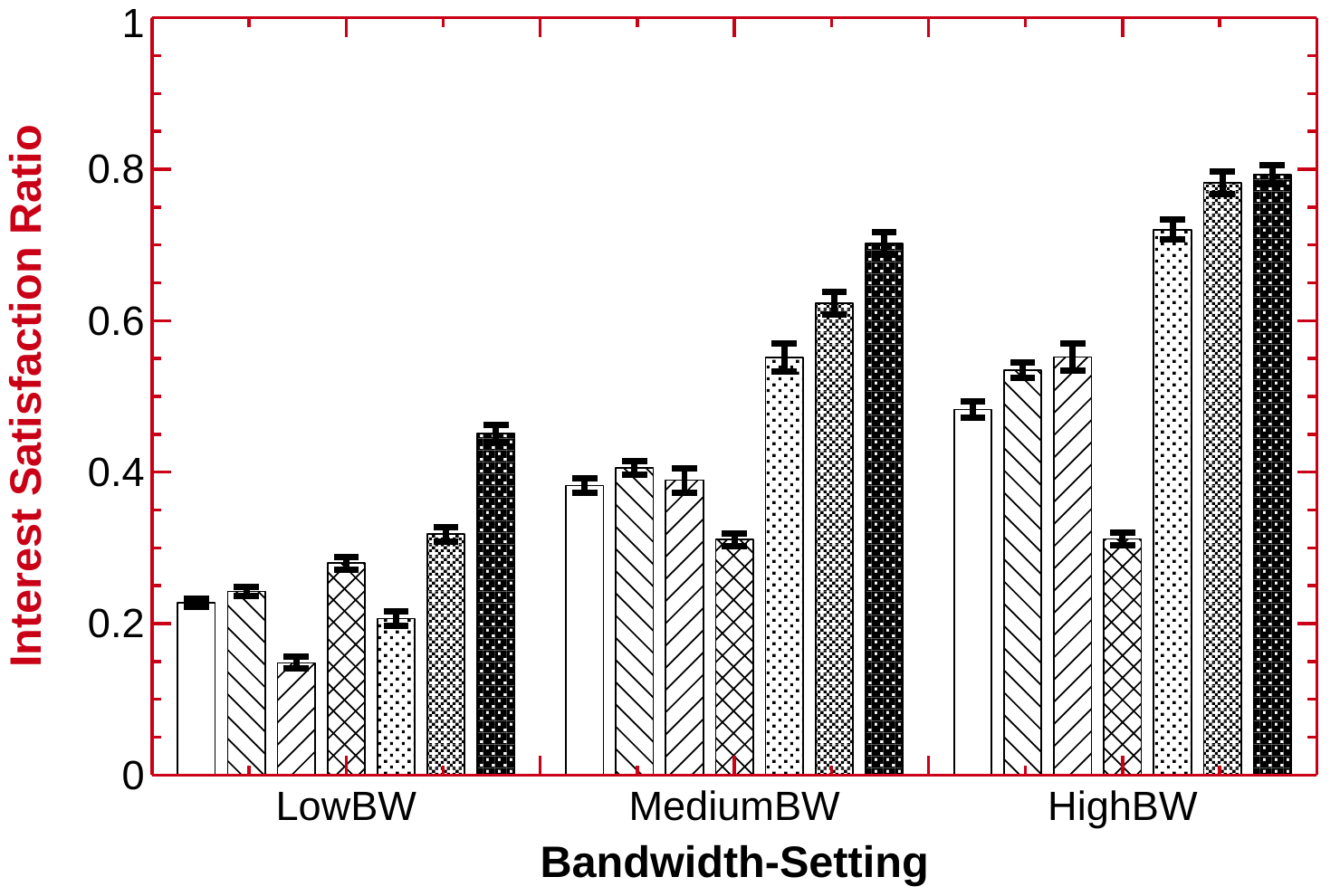}
}
\vspace{-0.1cm}
\caption{Average \textcolor{harvardcrimson}{\textbf{Interest satisfaction ratio}} and 95\% CI with \textbf{100 link failures} per simulation run (higher is better) [uniform].}
\label{fig:forwarding_results_100LinkFailures_ndnsim2.0}
%
\subfloat[Results for LowCon\label{subfig-1:0lf_cache}]{%
\includegraphics[width=0.29\textwidth]{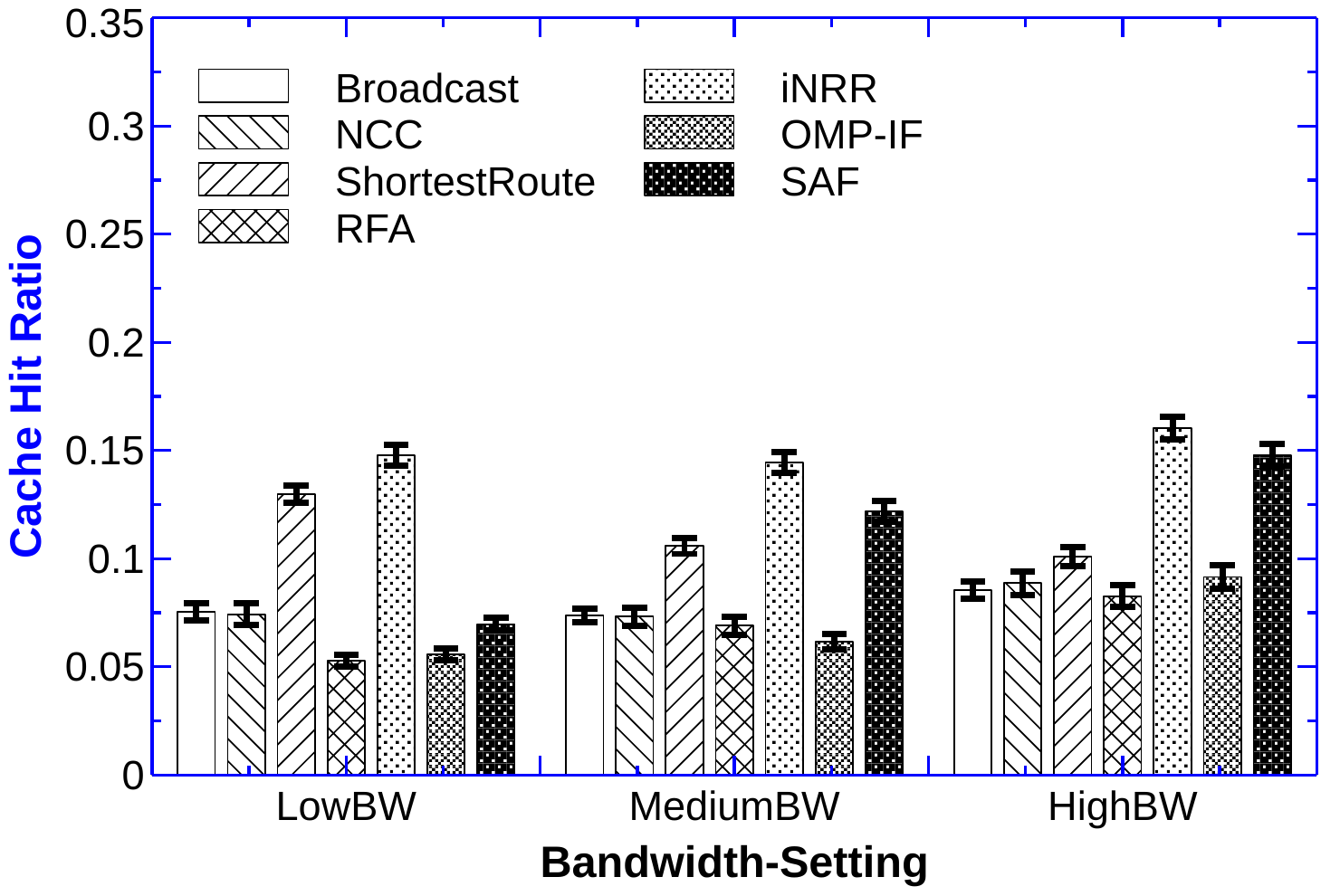}
}
\hfill
\subfloat[Results for MediumCon\label{subfig-2:00lf_cache}]{%
\includegraphics[width=0.29\textwidth]{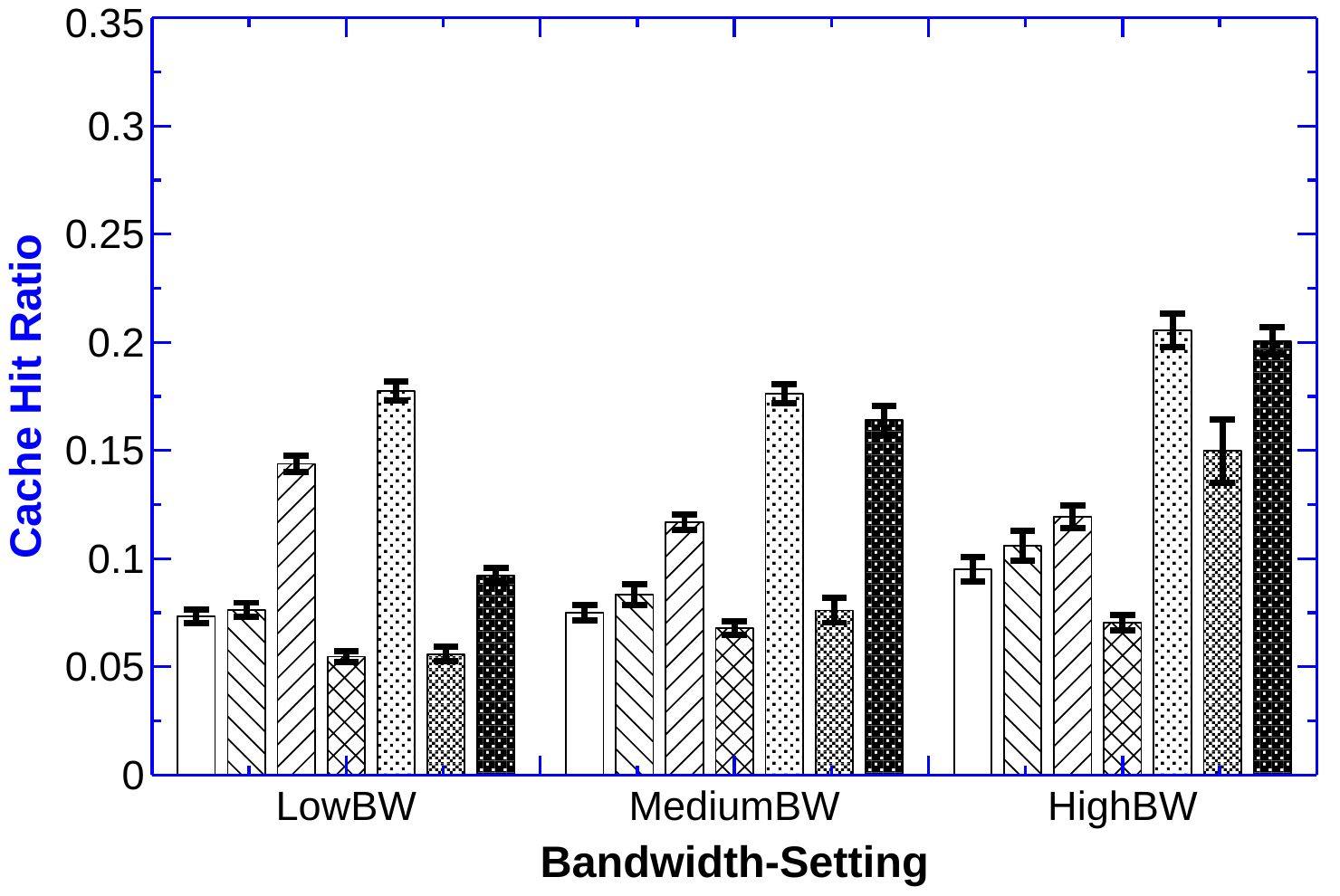}
}
\hfill
\subfloat[Results for HighCon\label{subfig-3:00lf_cache}]{%
\includegraphics[width=0.29\textwidth]{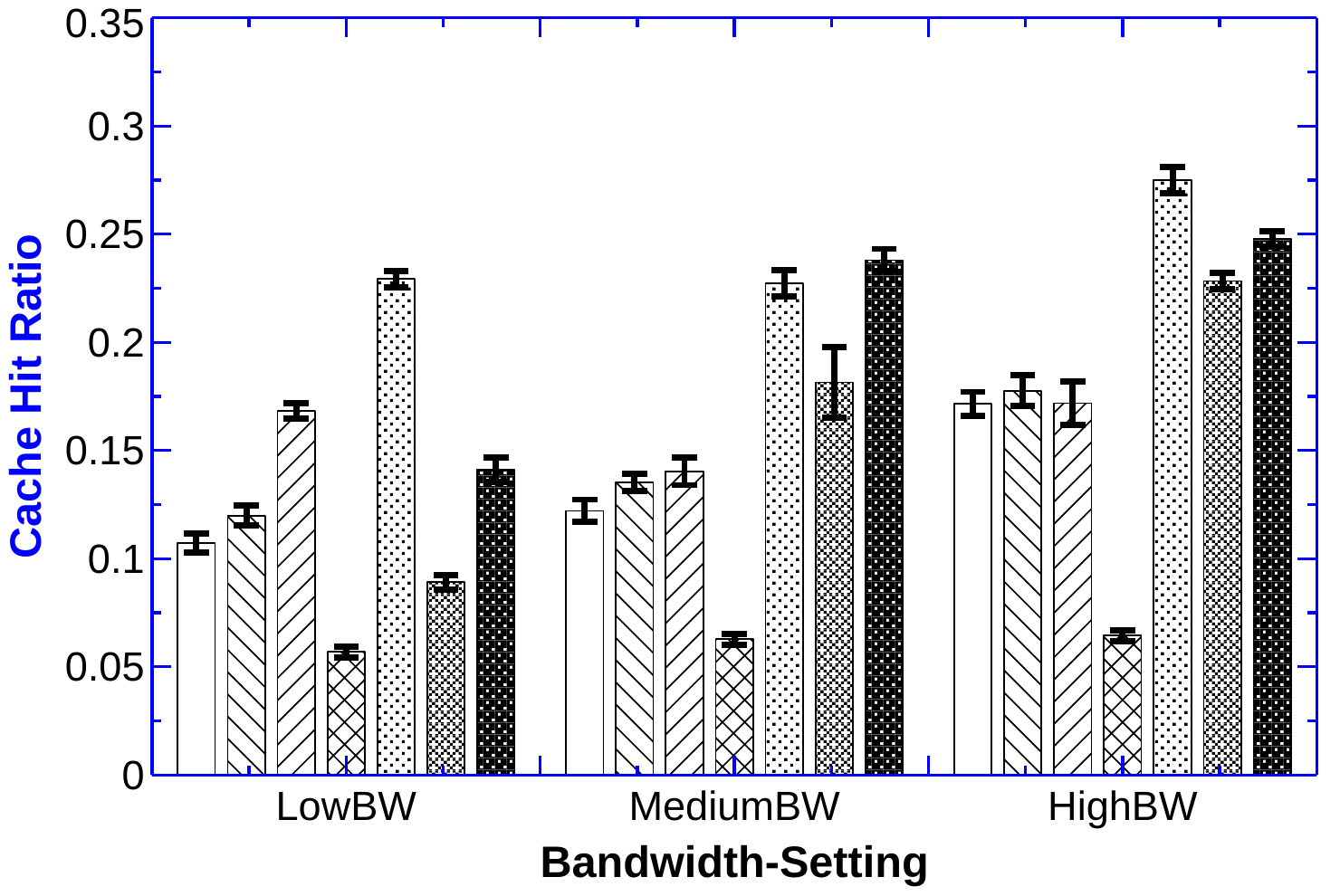}
}
\vspace{-0.1cm}
\caption{Average \textcolor{blue}{\textbf{cache hit ratio}} and 95\% CI in the network with \textbf{0 link failures} per simulation run (higher is better) [uniform].}
\label{fig:forwarding_results_0LinkFailures_ndnsim2.0_cache}

\subfloat[Results for LowCon\label{subfig-1:00lf_hops}]{%
\includegraphics[width=0.29\textwidth]{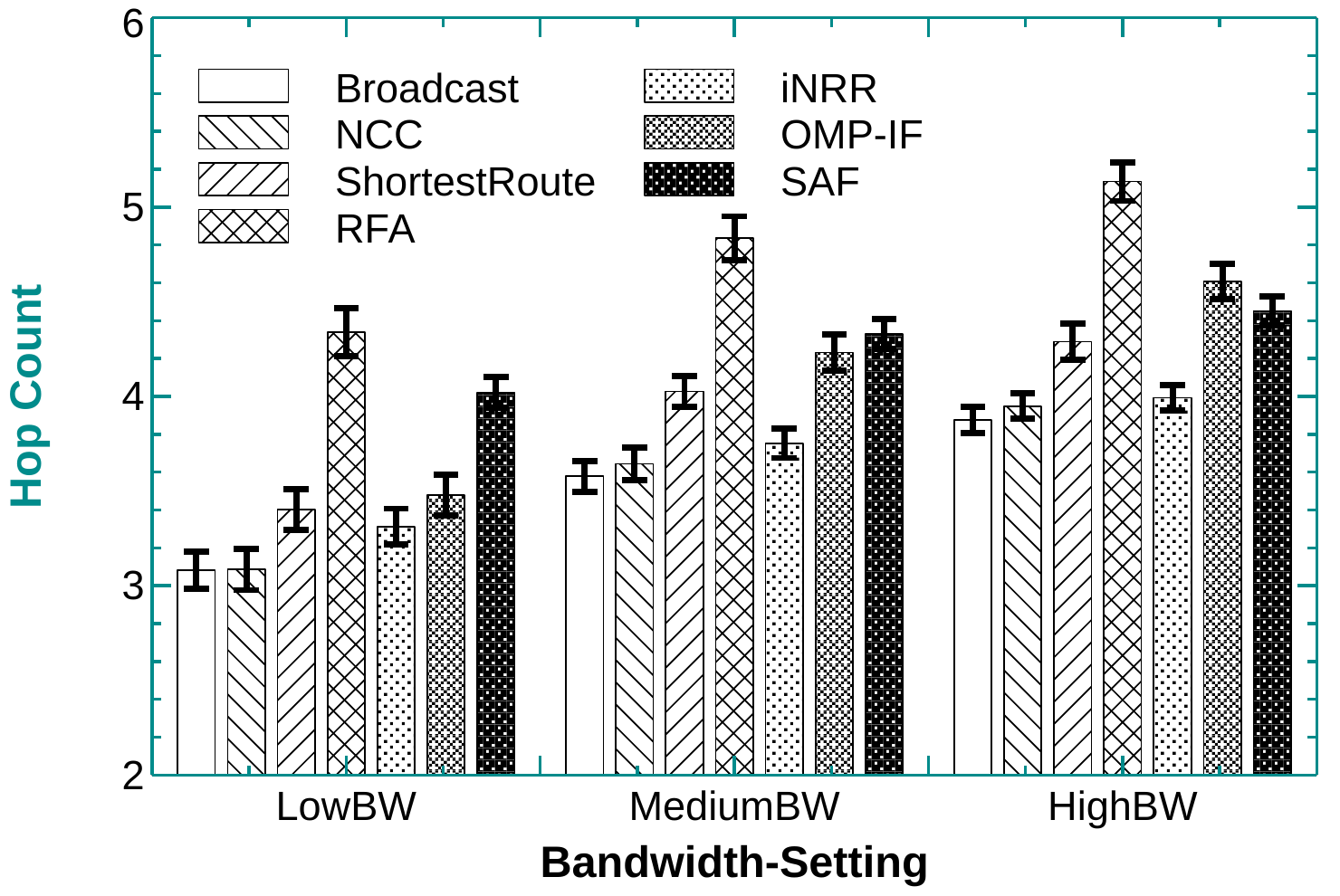}
}
\hfill
\subfloat[Results for MediumCon\label{subfig-2:00lf_hops}]{%
\includegraphics[width=0.29\textwidth]{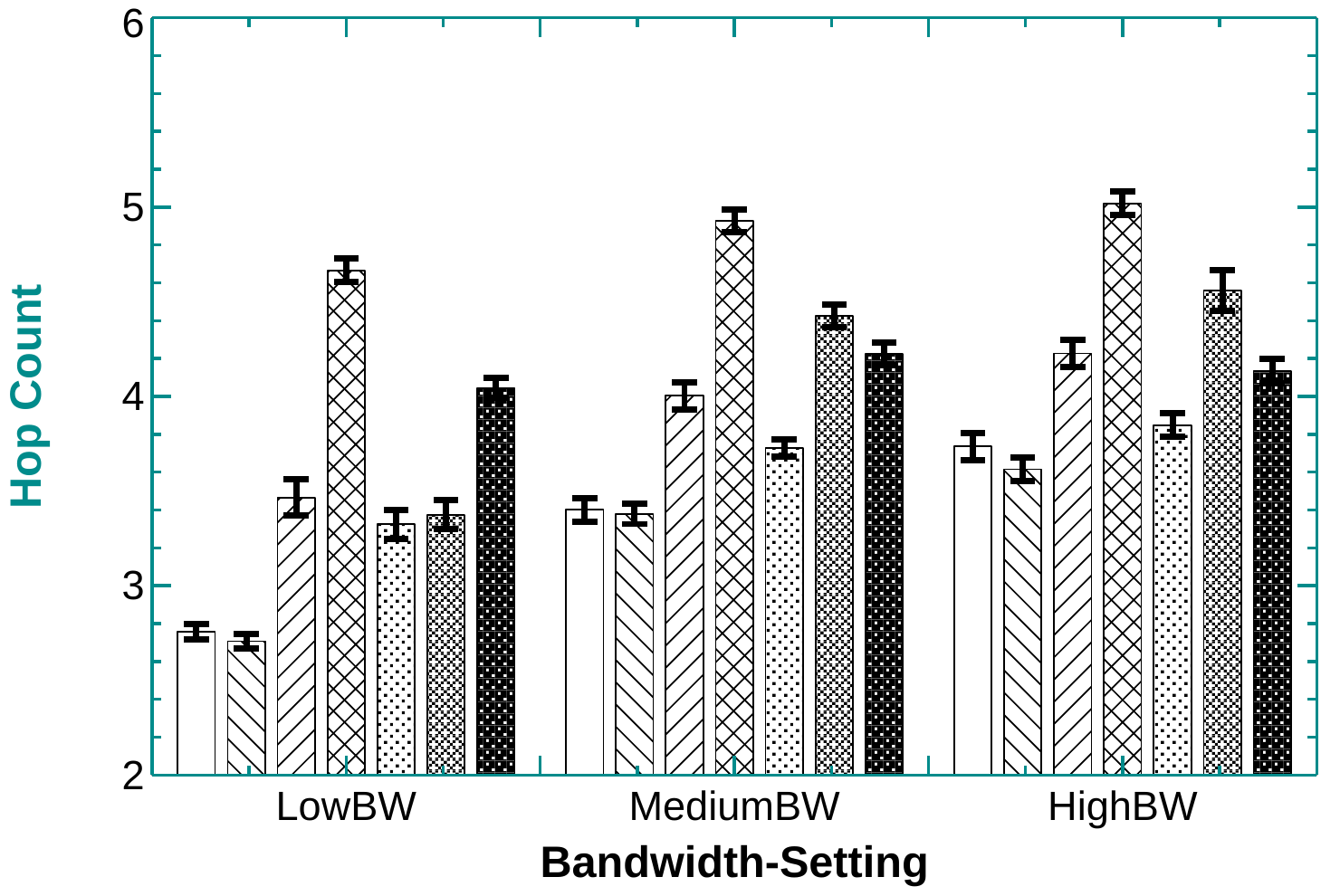}
}
\hfill
\subfloat[Results for HighCon\label{subfig-3:00lf_hops}]{%
\includegraphics[width=0.29\textwidth]{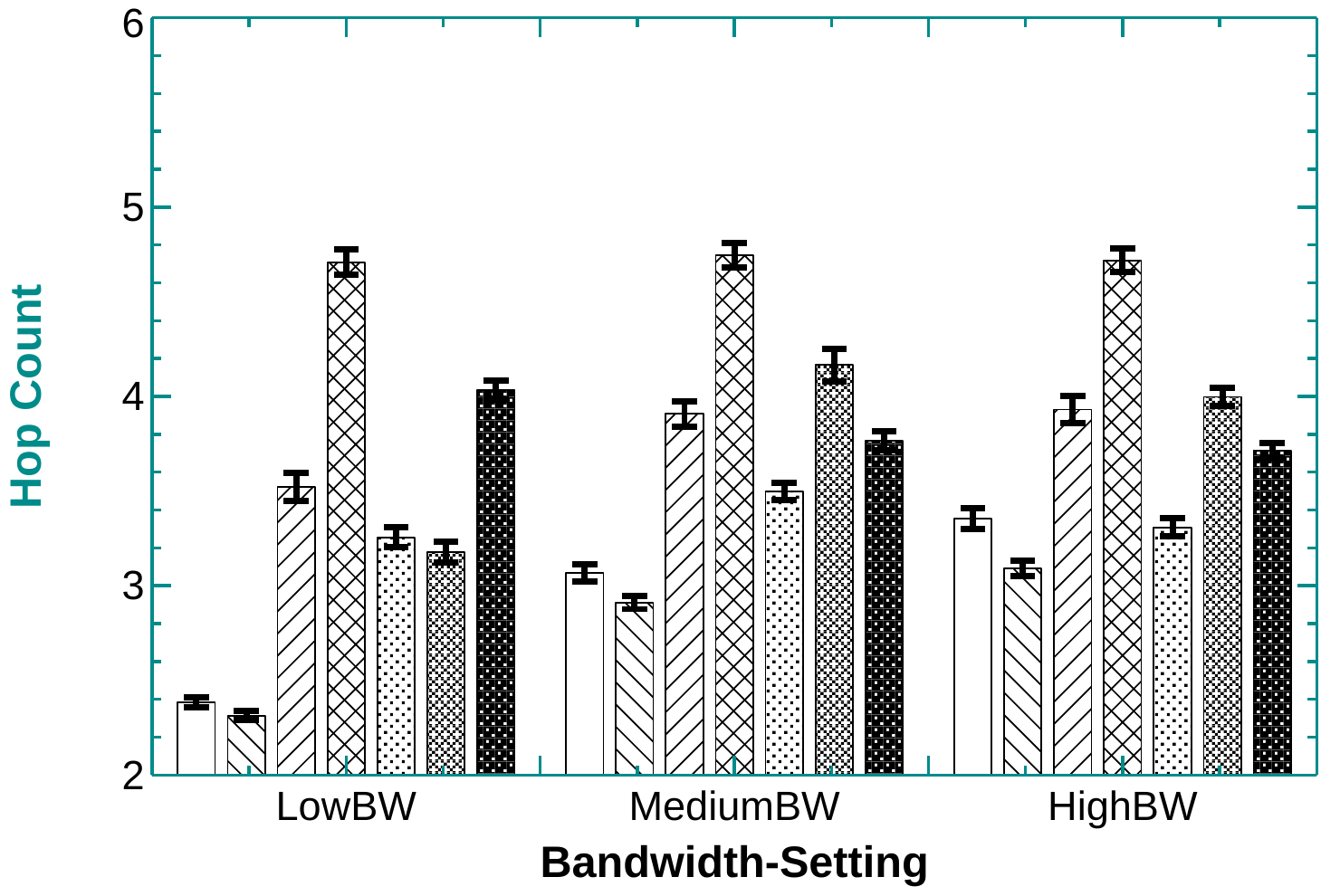}
}
\vspace{-0.1cm}
\caption{Average \textcolor{darkcyan}{\textbf{hop count}} per \emph{satisfied} Interest and 95\% CI with \textbf{0 link failures} per simulation run (lower is better) [uniform].}
\vspace{-0.1cm}
\label{fig:hop_count_0LinkFailures_ndnsim2.0}
\end{figure*}

\begin{figure*}[tbh!]
\subfloat[Results for LowCon\label{zipf_subfig-1:0lf}]{%
\includegraphics[width=0.29\textwidth]{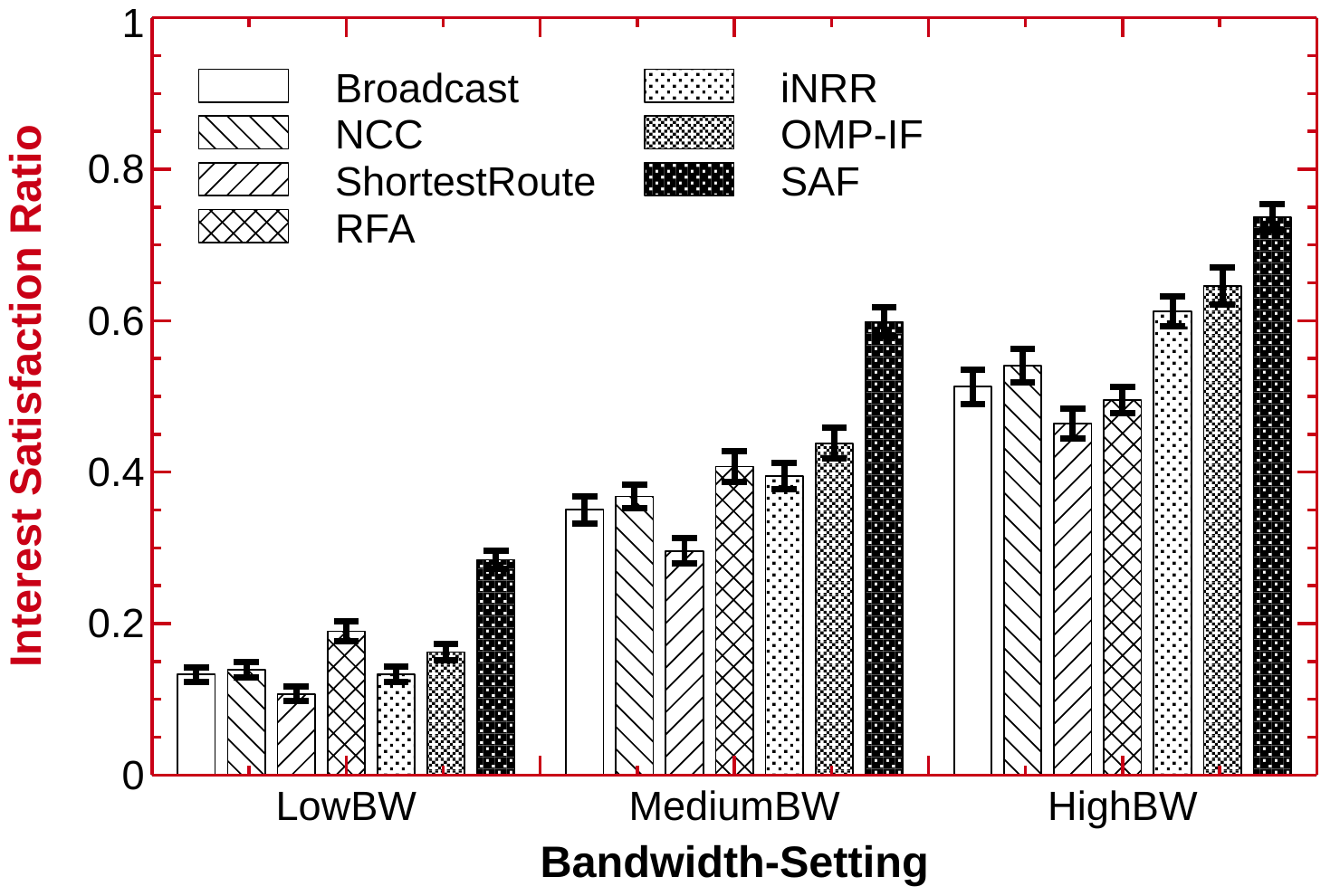}
}
\hfill
\subfloat[Results for MediumCon\label{zipf_subfig-2:00lf}]{%
\includegraphics[width=0.29\textwidth]{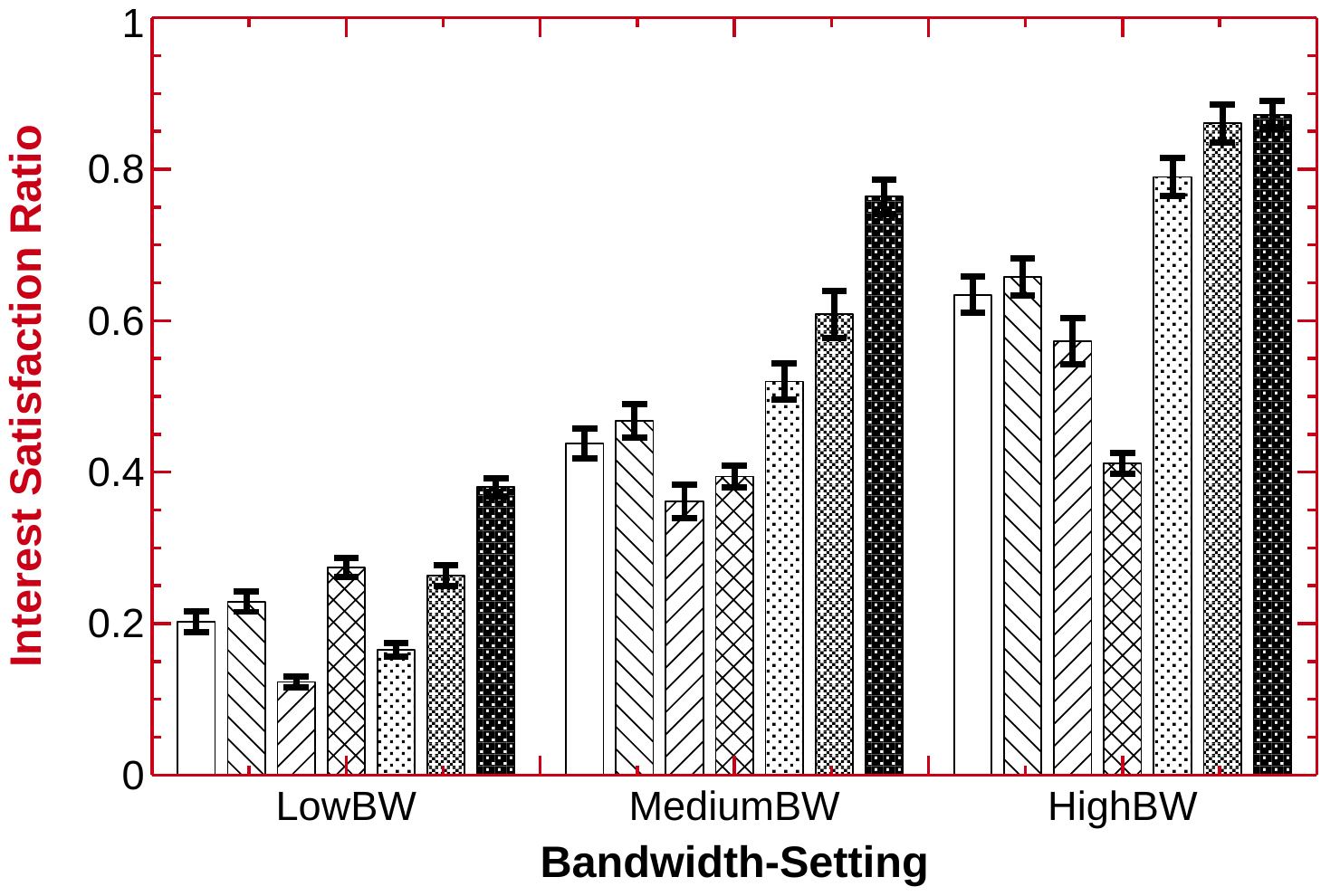}
}
\hfill
\subfloat[Results for HighCon\label{zipf_subfig-3:00lf}]{%
\includegraphics[width=0.29\textwidth]{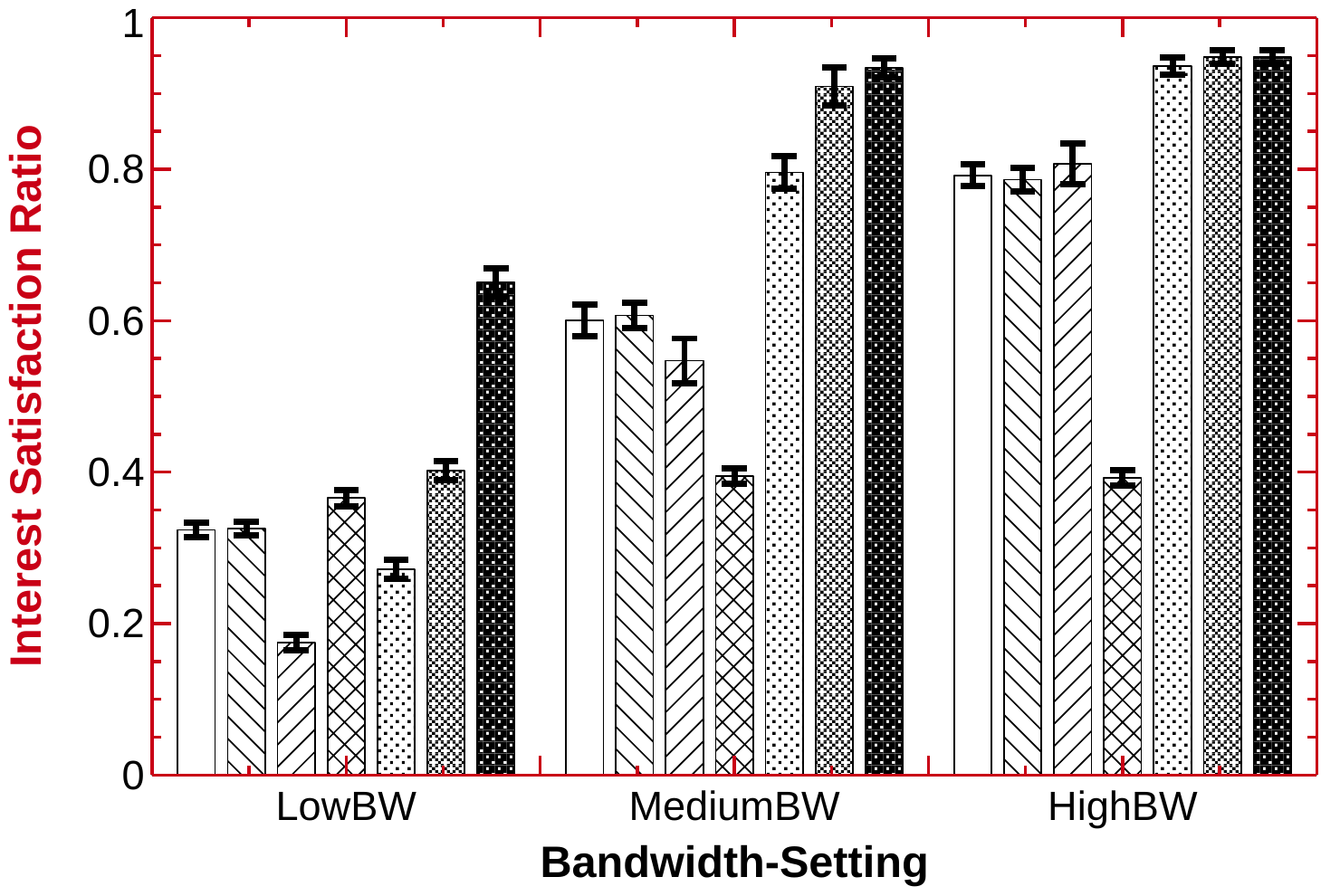}
}
\vspace{-0.1cm}
\caption{Average \textcolor{harvardcrimson}{\textbf{Interest satisfaction ratio}} and 95\% CI with \textbf{0 link failures} per simulation run (higher is better) [zipf].}
\label{fig:zipf_forwarding_results_0LinkFailures_ndnsim2.0}

\subfloat[Results for LowCon\label{zipf_subfig-1:0lf_cache}]{%
\includegraphics[width=0.29\textwidth]{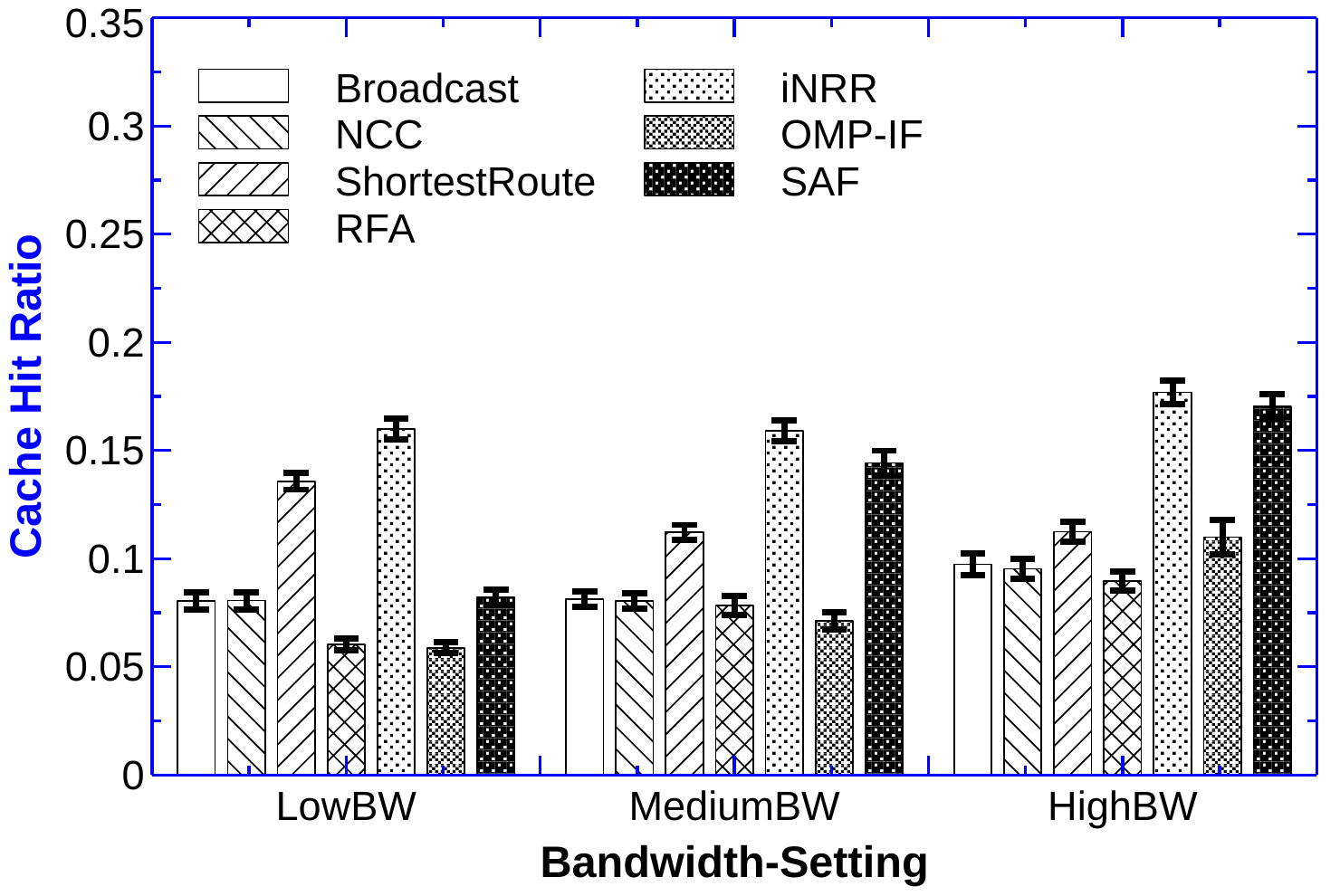}
}
\hfill
\subfloat[Results for MediumCon\label{zipf_subfig-2:00lf_cache}]{%
\includegraphics[width=0.29\textwidth]{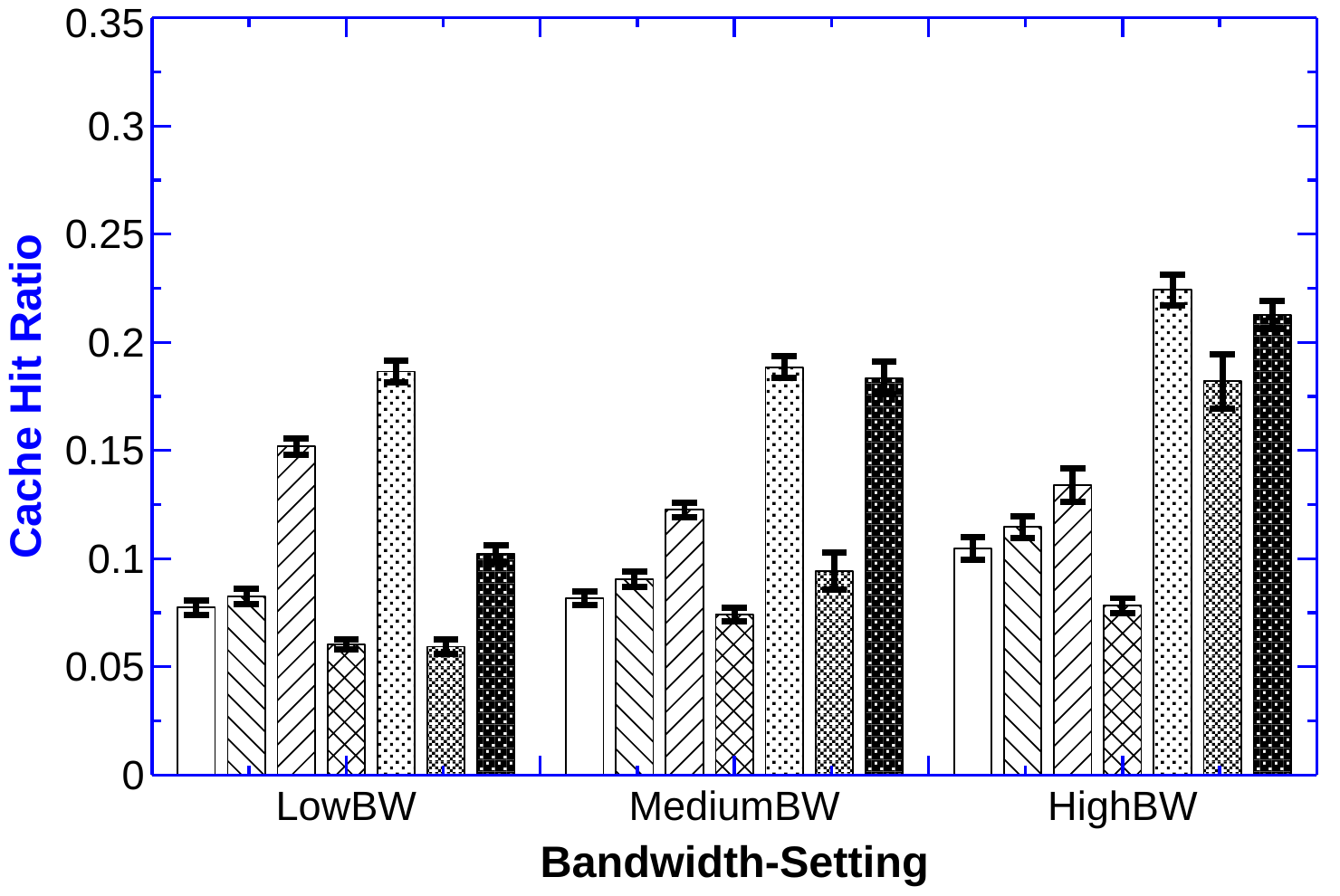}
}
\hfill
\subfloat[Results for HighCon\label{zipf_subfig-3:00lf_cache}]{%
\includegraphics[width=0.29\textwidth]{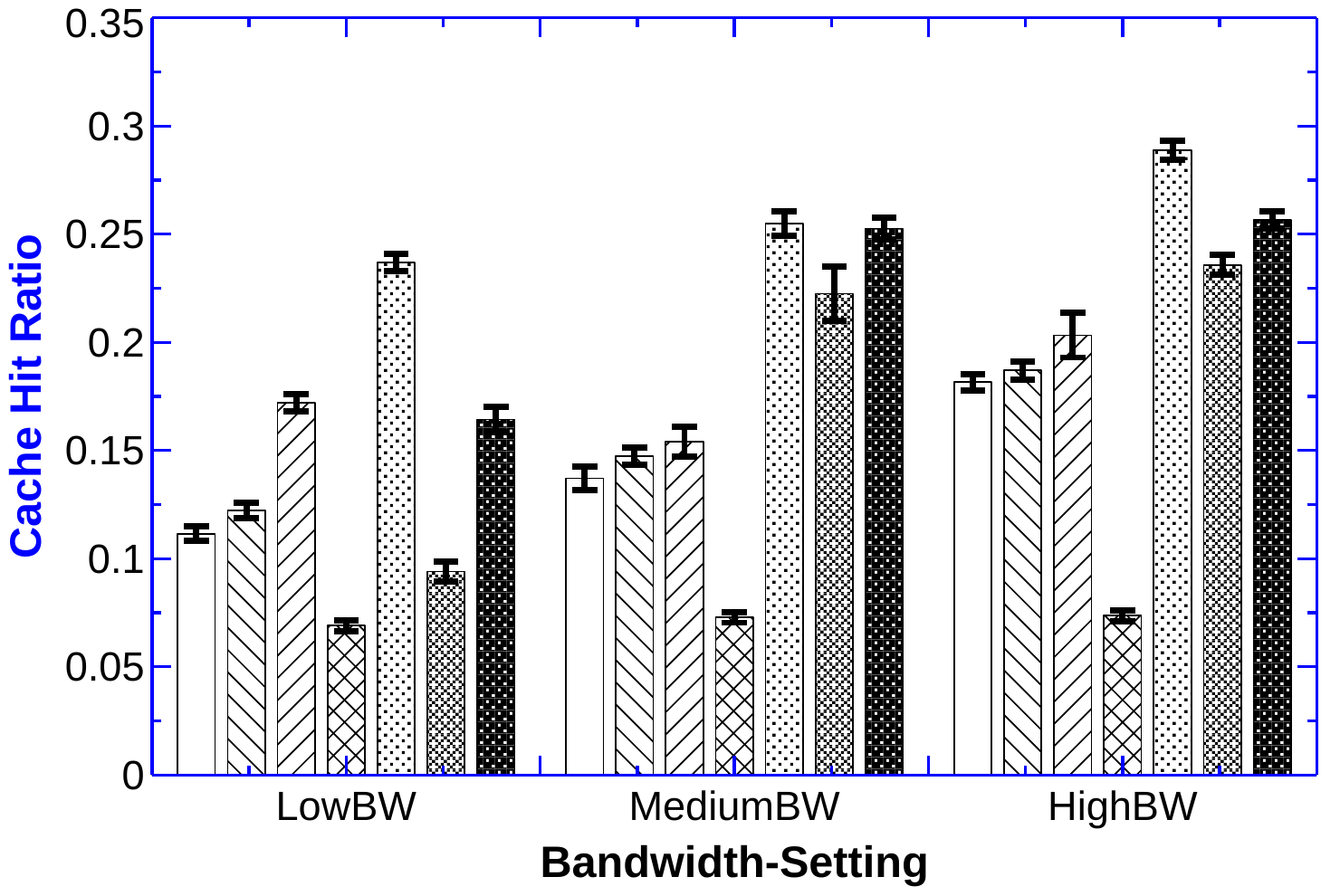}
}
\vspace{-0.1cm}
\caption{Average \textcolor{blue}{\textbf{cache hit ratio}} and 95\% CI in the network with \textbf{0 link failures} per simulation run (higher is better) [zipf].}
\label{fig:zipf_forwarding_results_0LinkFailures_ndnsim2.0_cache}

\subfloat[Results for LowCon\label{zipf_subfig-1:0lf_hops}]{%
\includegraphics[width=0.29\textwidth]{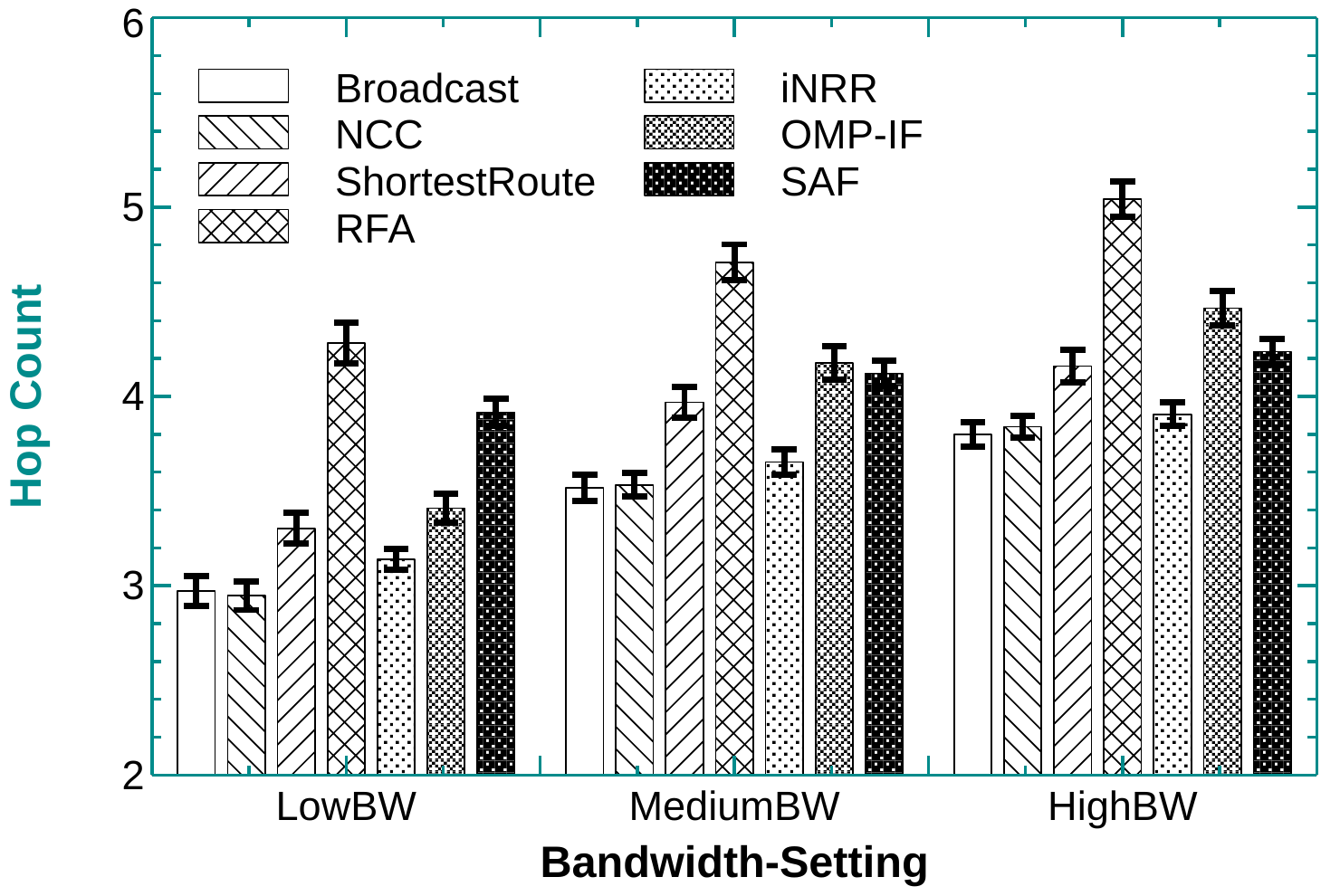}
}
\hfill
\subfloat[Results for MediumCon\label{zipf_subfig-2:0lf_hops}]{%
\includegraphics[width=0.29\textwidth]{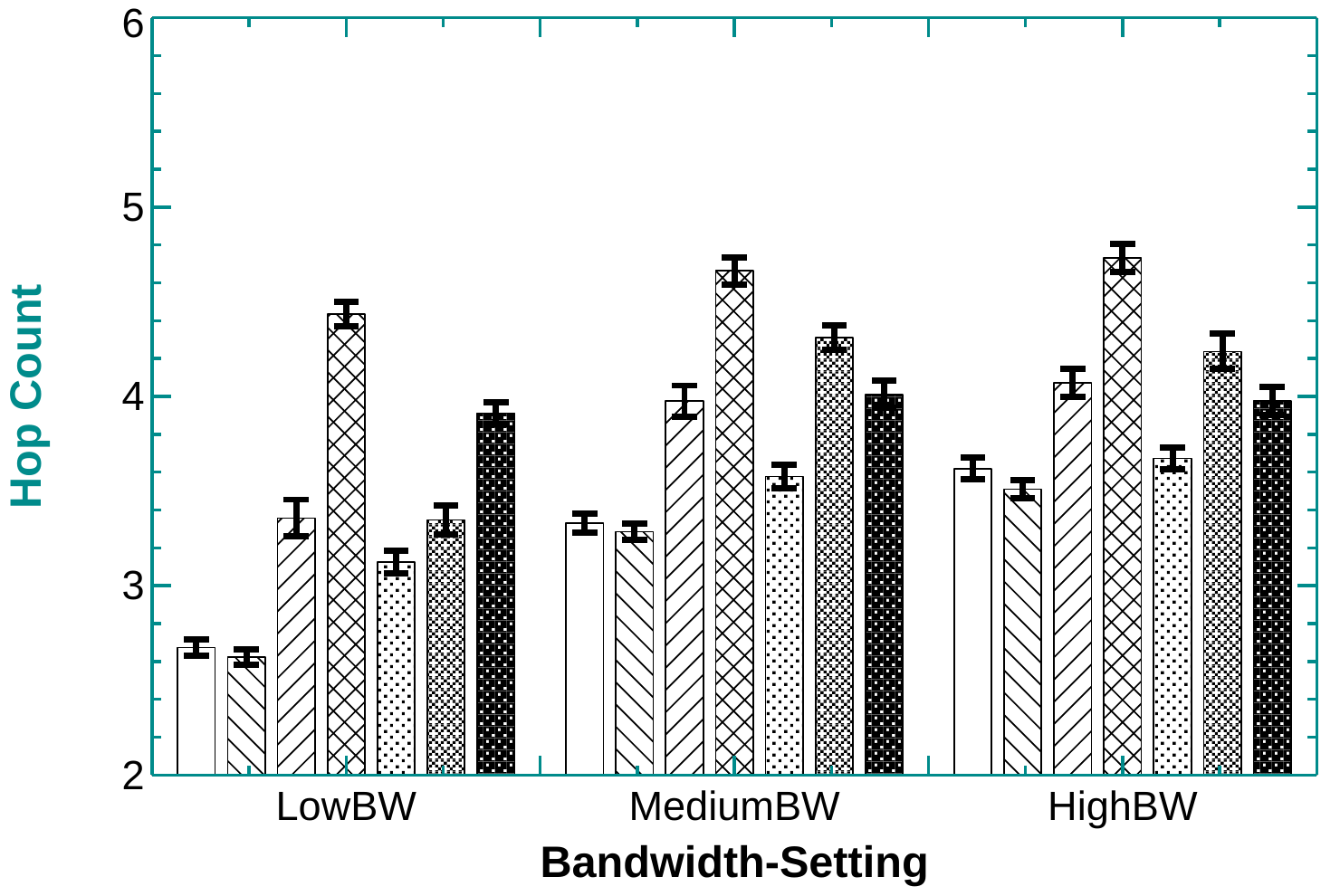}
}
\hfill
\subfloat[Results for HighCon\label{zipf_subfig-3:0lf_hops}]{%
\includegraphics[width=0.29\textwidth]{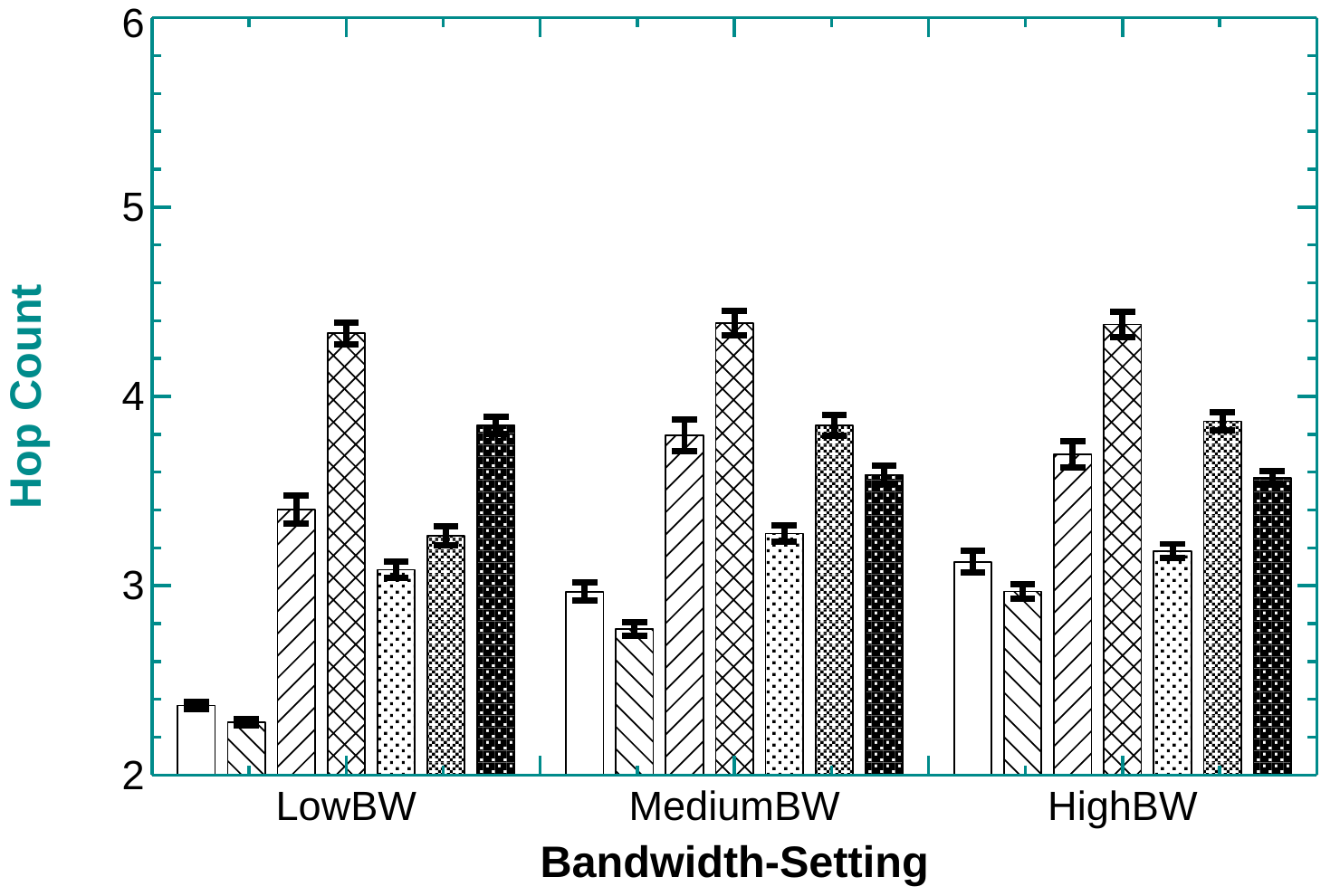}
}
\vspace{-0.1cm}
\caption{Average \textcolor{darkcyan}{\textbf{hop count}} per \emph{satisfied} Interest and 95\% CI with \textbf{0 link failures} per simulation run (lower is better) [zipf].}
\label{fig:zipf_hop_count_0LinkFailures_ndnsim2.0}
\vspace{-0.2cm}
\end{figure*}

\begin{figure*}[tbh!]
\begin{tabular*}{\textwidth}{c @{\extracolsep{\fill}} cccc}
\hspace{-0.2cm}
\subfloat[Broadcast\label{heatmap-broadcast}]{%
\includegraphics[scale=0.22]{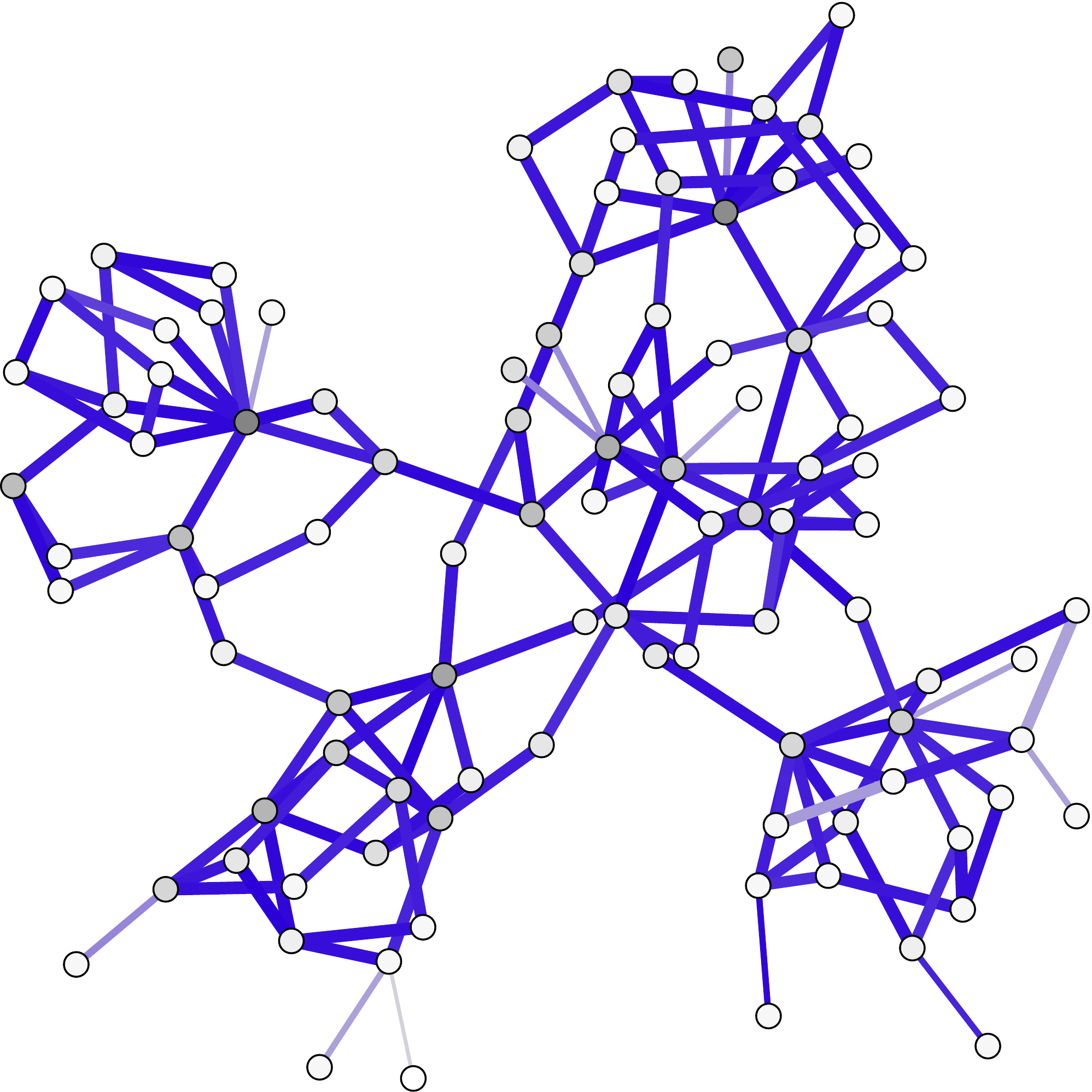}
}
&
\hspace{-0.35cm}
\subfloat[NCC\label{heatmap-ncc}]{%
\includegraphics[scale=0.21]{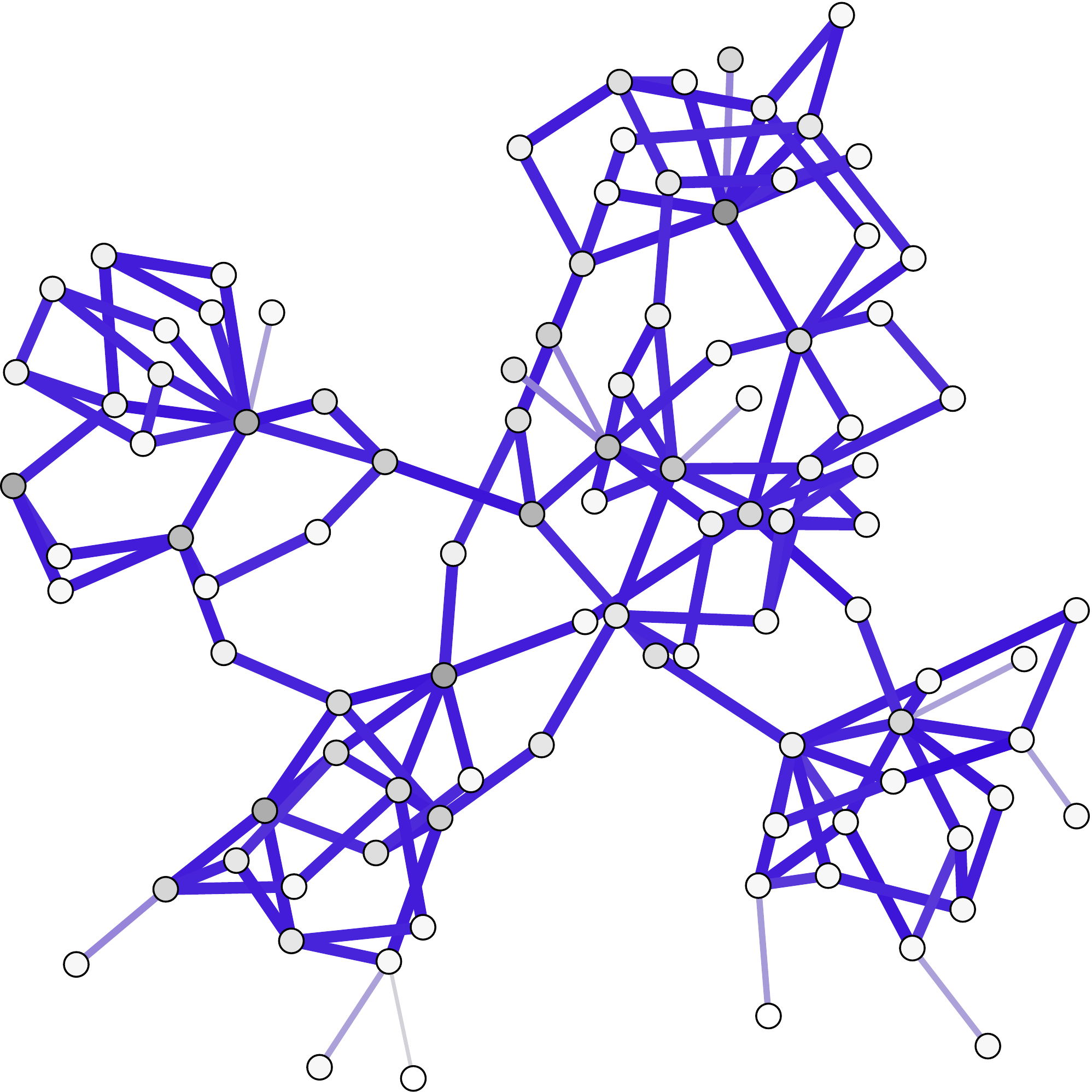}
}
&
\hspace{-0.76cm}
\subfloat[ShortestRoute\label{heatmap-shortestroute}]{%
\includegraphics[scale=0.21]{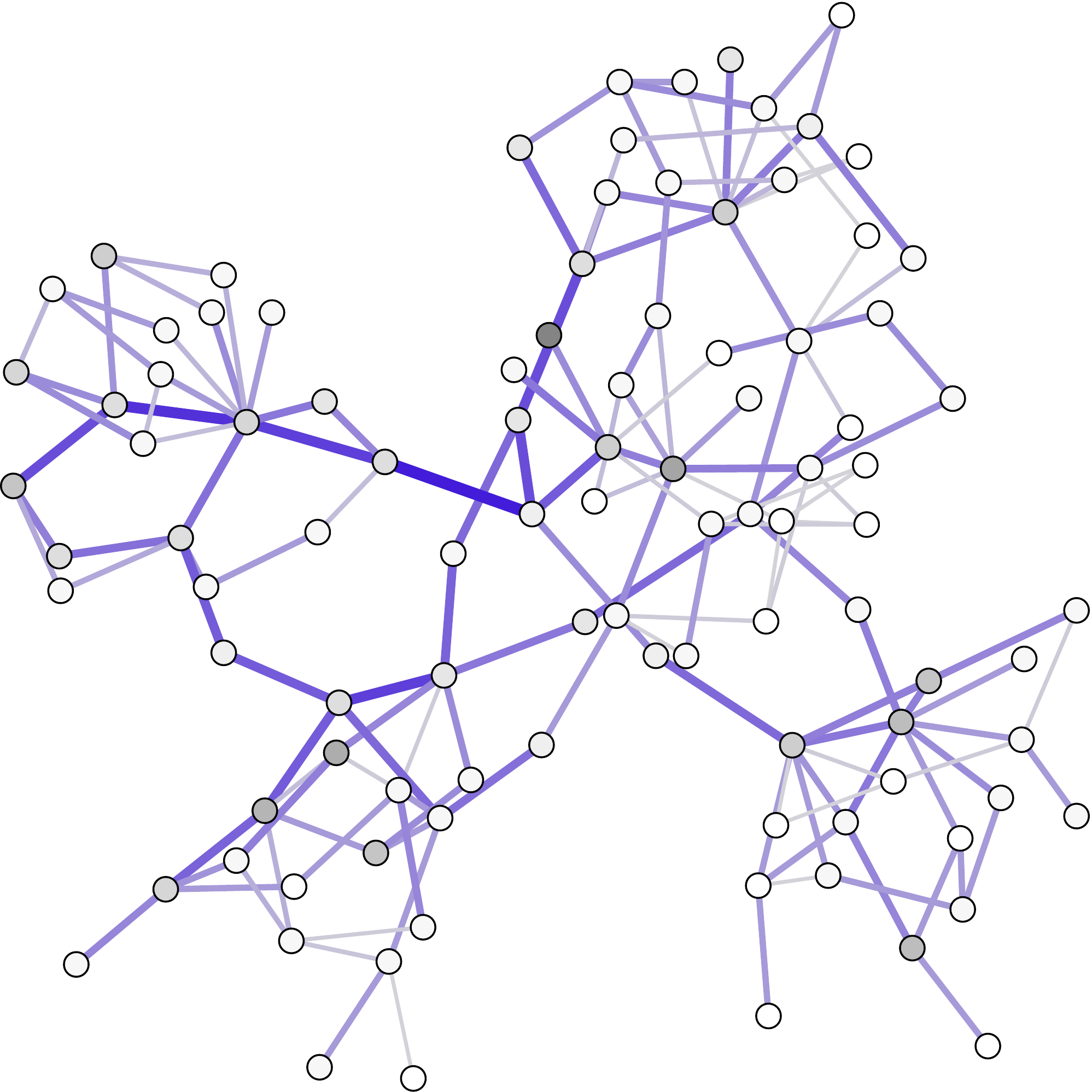}
}
&
\hspace{-0.76cm}
\subfloat[RFA\label{heatmap-rfa}]{%
\includegraphics[scale=0.21]{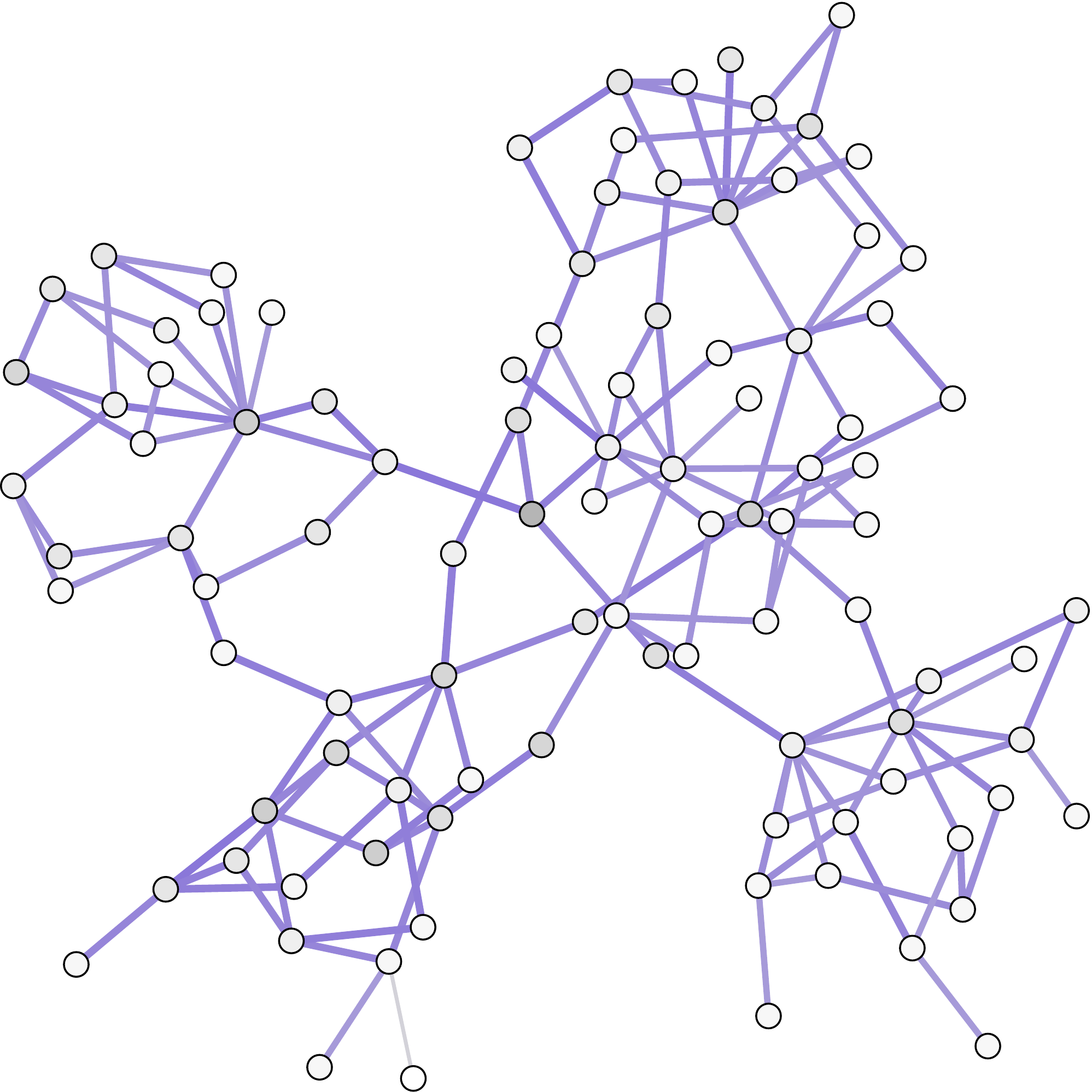}
}
\\

\setcounter{subfigure}{3}

\hspace{-0.2cm}
\subfloat[iNRR\label{heatmap-inrr}]{%
\includegraphics[scale=0.21]{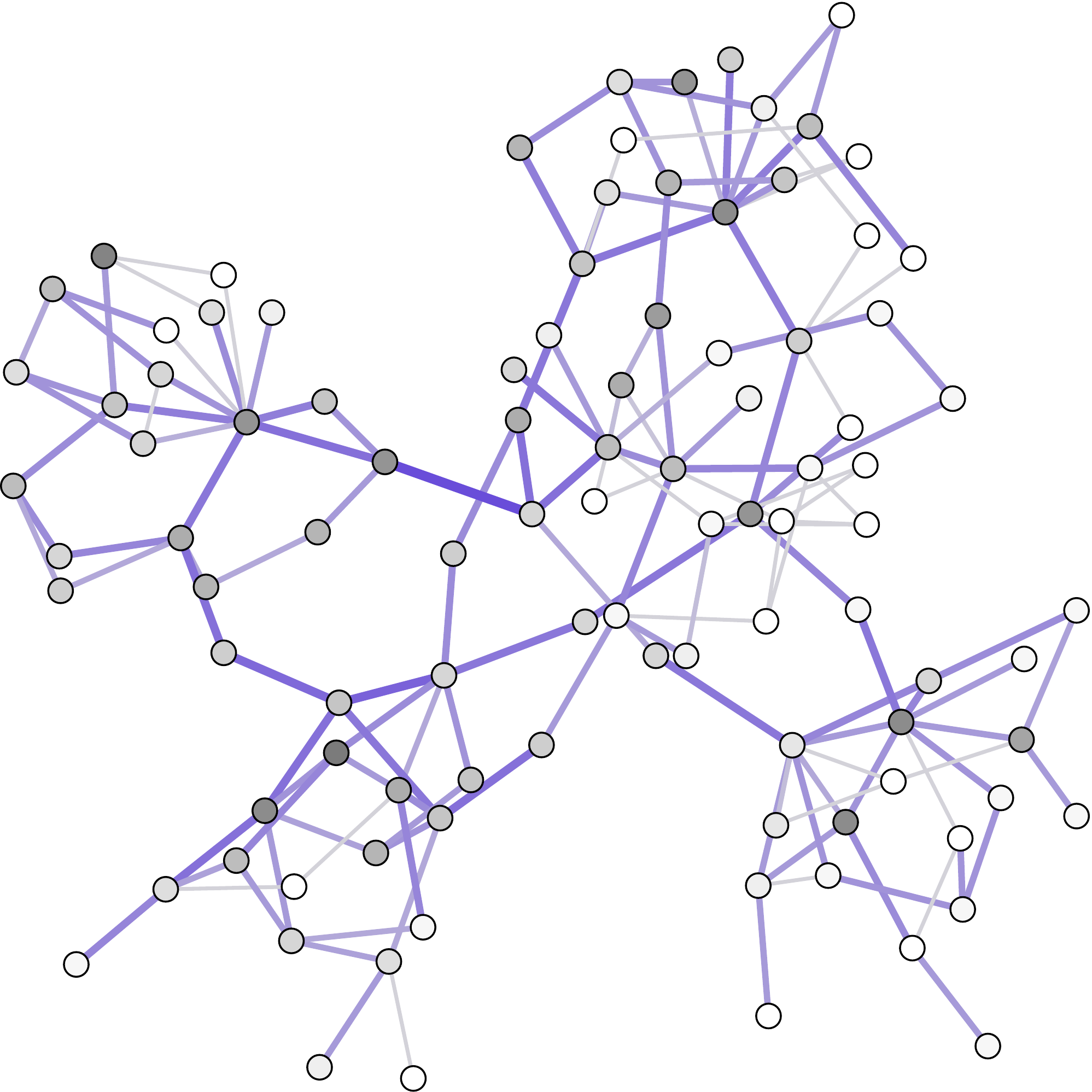}
}
&
\hspace{-0.35cm}
\subfloat[\revised{OMP-IF}\label{heatmap-ompif}]{%
\includegraphics[scale=0.21]{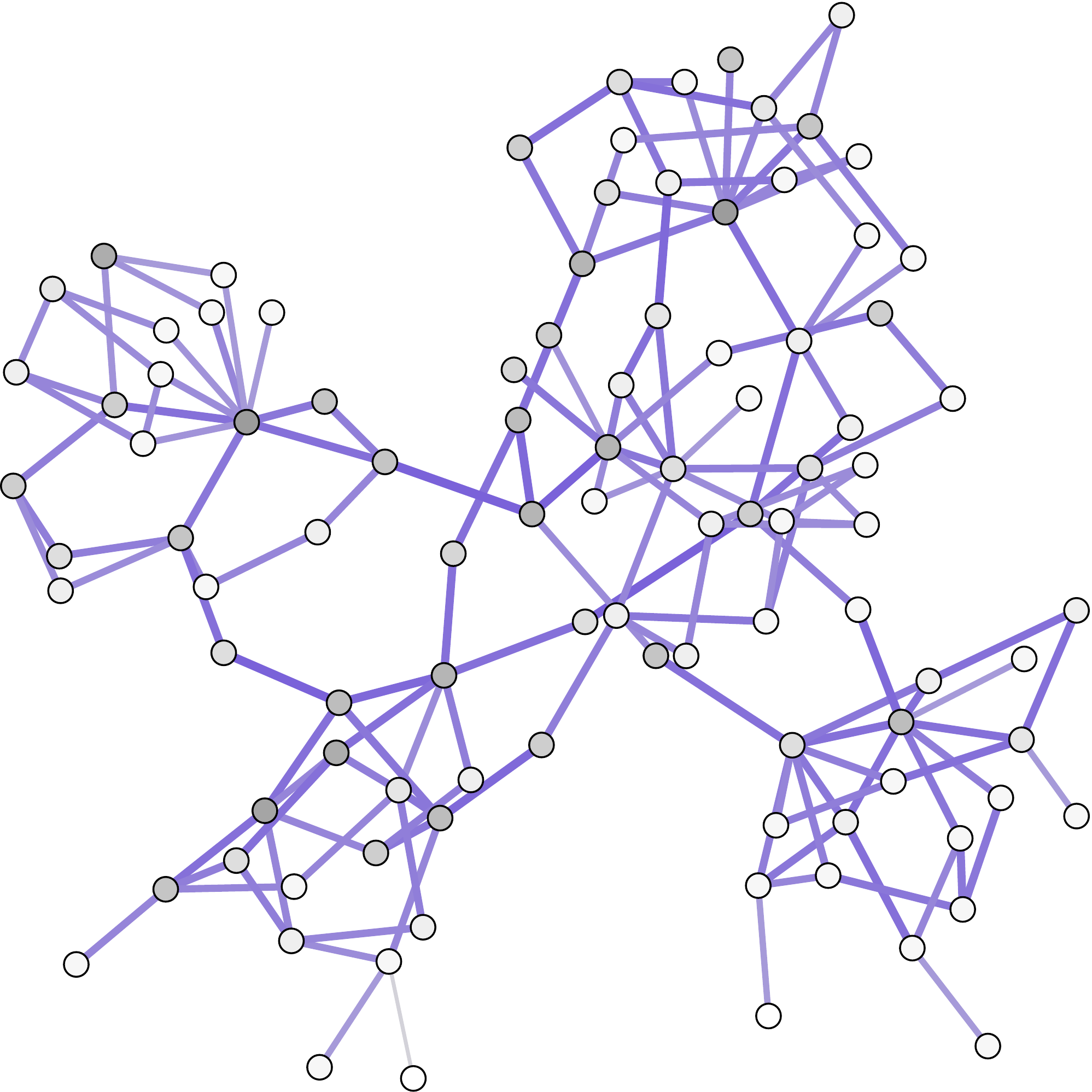}
}
&
\hspace{-0.76cm}
\subfloat[SAF\label{heatmap-saf}]{%
\includegraphics[scale=0.21]{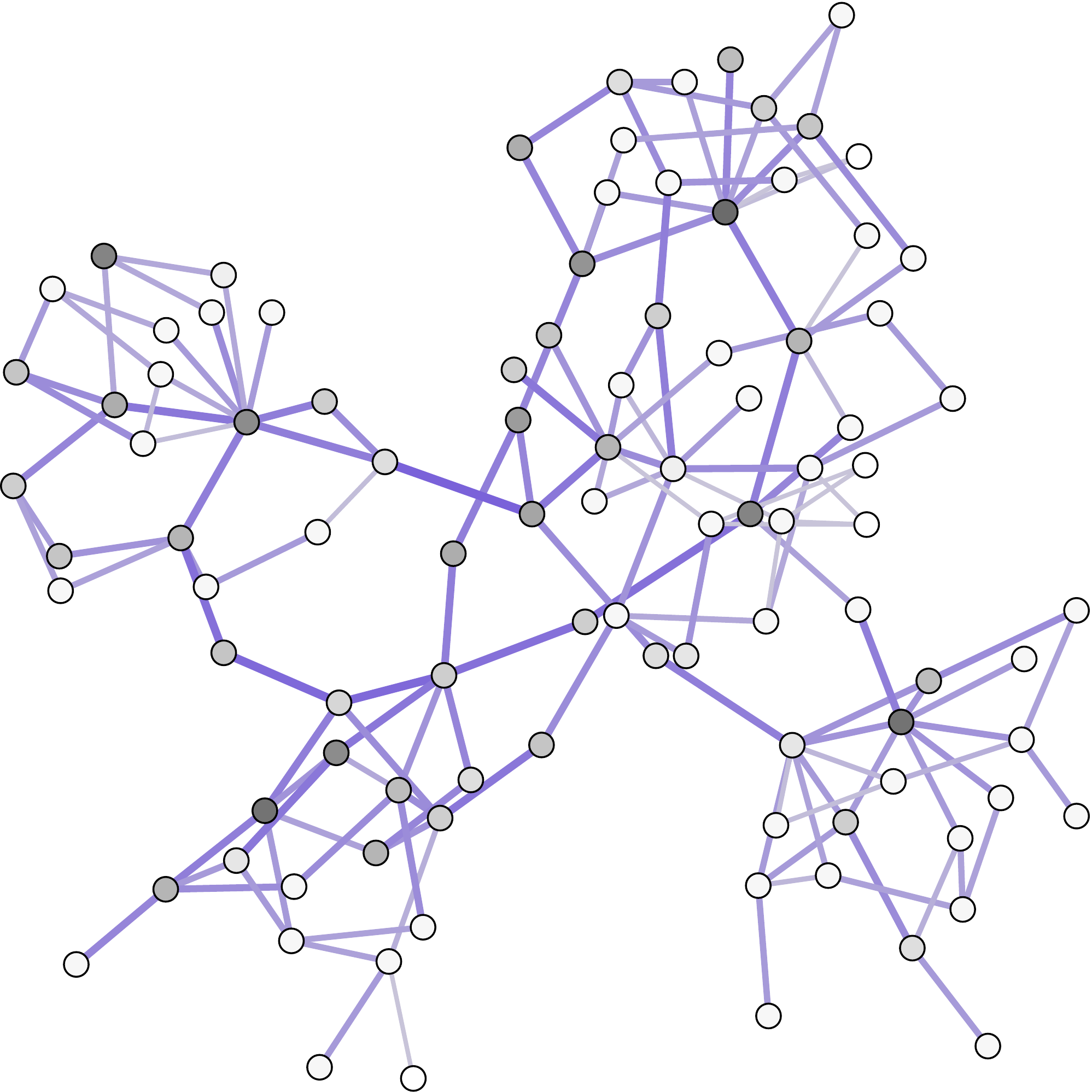}
}
&
\hspace{-0.5cm}\subfloat{%
\includegraphics[scale=0.46]{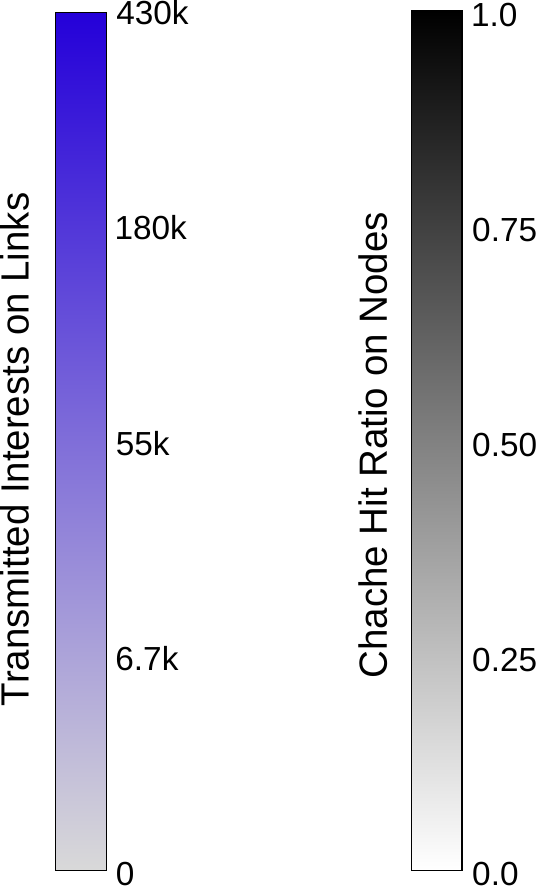}
}
\end{tabular*}
\caption{Heat maps illustrating the distribution of the transmitted Interests and cache hits. The wider and the more saturated a link, the more Interests are transmitted over the given link. The darker the coloring of a node, the higher the cache hit ratio.
}\vspace{-0.35cm}
\label{fig:zipf_heatmap}
\end{figure*}

An increasing number of link failures has a significant negative impact on the Interest satisfaction ratio for all algorithms. For instance, consider Figure~\ref{uniform_subfig-2:00lf} and~\ref{uniform_subfig-2:100lf} illustrating the results for 0 and 100 link failures with \emph{medium} graph connectivity. Table~\ref{tab:linkfailure_comp} depicts the actual and relative performance loss for each strategy. SAF has the highest actual performance loss, however, the relative loss is in a similar range as for its competitors. In these scenarios SAF is not able to fully circumvent link failures due to the resource shortages that are inherent in the selected scenarios. iNRR also suffers from link failures as the oracle used for finding cached replicas does only consider the distance to the content and does not consider link attributes such as reliability or capacity. Interestingly, ShortestRoute maintains the smallest actual and relative performance loss. First, these numbers must be considered with respect to the absolute number of satisfied Interests, and second, this can be explained due to the rather short paths used to satisfy Interests (cf. Figure~\ref{subfig-2:00lf_hops}), which of course have a lower chance of being affected by \revised{randomly emerging link failures. As will be shown later, Broadcast and NCC can not take advantage of their low hop count due to extensive Interest replication (cf. Figure~\ref{fig:zipf_heatmap}).}

Especially if the network resources and the connectivity \revised{are very limited}, RFA is the closest competitor to SAF considering the Interest satisfaction ratio. This is due to the extensive use of all provided routes to the content origins by RFA. \revised{Thus, it is able to distribute the load equally on congested nodes/paths}. Nevertheless, in better connected scenarios and with higher network resources RFA is not anymore able to reach the performance of the \emph{standard algorithms} (Broadcast, NCC, ShortestRoute). RFA uses the number of pending Interests on a face (the one with the lowest number pending is selected) for deciding to which one of the faces in the FIB an Interest shall be forwarded. It does not store any \emph{long term} state or classification of a face. Therefore, the more routes are available to RFA and the higher the network resources, the less RFA is able to determine which faces are the most appropriate for specific content prefixes. For instance, this can be observed in Figure~\ref{fig:forwarding_results_0LinkFailures_ndnsim2.0}c. \revised{This is also reflected by the low cache hit ratio and high hop count shown in Figures~\ref{fig:forwarding_results_0LinkFailures_ndnsim2.0_cache} and~\ref{fig:hop_count_0LinkFailures_ndnsim2.0} indicating that RFA is not able to select those faces that provide good service (e.g., cache hits), but rather takes unnecessary detours}.

\revised{In scenarios with plenty resources available, iNRR and OMP-IF are able to catch up with SAF.}
Figure~\ref{fig:forwarding_results_0LinkFailures_ndnsim2.0_cache} shows that iNRR achieves the highest cache hit ratio, which also is an explanation for iNRR's good performance. 
The higher hop count \revised{compared to Broadcast and NCC} (cf. Figure~\ref{fig:hop_count_0LinkFailures_ndnsim2.0}), can be explained by the deletion of cache entries and the resulting detours iNRR imposes on Interests. In order to understand iNRR's behavior, suppose three nodes $a,b,c \in \mathcal{V}$, whereof $c$ is the content origin. Assume that $a$ receives an Interest and searches for caches that can satisfy it. The nearest cache is $b$ satisfying the aforementioned constraints (cf. Section \ref{sec:related_work}). Therefore, $a$ forwards the Interest to $b$, but at arrival or on the way to $b$, $b$ has already evicted the desired content. Thus, the Interest is forwarded to $c$. Therefore, the triangle inequality ($d(a,c) \leq d(a,b) + d(b,c))$ explains the higher hop count of iNRR compared to Broadcast and NCC. This leads to the assumption that larger caches will improve iNRR's performance. 

\revised{The strong performance of OMP-IF in scenarios with many resources can be explained by its principle of node disjointness. Subsequent Interests requesting the same content will be forwarded with high probability on the same paths resulting in a high cache hit ratio (cf. Figure\ref{subfig-3:00lf_cache}). Furthermore, the usage of probing Interests (which are broadcasted on all faces in the FIB to identify the best performing path) performs well in these scenarios. However, as resources become scarce (cf. Figure\ref{uniform_subfig-2:00lf}) these probing Interests amplify congestion. This leads to the fact that the used paths are switched frequently impairing the cache hit ratio (cf. Figure~\ref{subfig-2:00lf_cache} and~\ref{subfig-3:00lf_cache}) and the overall performance significantly.}


\subsection{Performance under Zipf-like Content Popularity}
In order to compare the results using a uniform content popularity and the results using a Zipf distribution for popularity, we use the same network topologies and simulation parameters as for the previous evaluation (cf. Section~\ref{sec:uniform_request_eval}). Only the distribution of the content popularity was changed to a Zipf distribution with $\alpha = 0.668$ according to~\cite{Cheng:2008}. 
We again conduct 50~simulation runs for each scenario. 
In Section~\ref{sec:uniform_request_eval} we presented the results for scenarios with 0, 50, and 100 link failures. It can be observed that the performance of the algorithms decreases proportionally with an increase in link failures (cf. Figures~\ref{fig:forwarding_results_0LinkFailures_ndnsim2.0},~\ref{fig:forwarding_results_50LinkFailures_ndnsim2.0} and~\ref{fig:forwarding_results_100LinkFailures_ndnsim2.0}). Therefore, we omit the results for 50 and 100 link failures for the following discussion. 
Figures~\ref{fig:zipf_forwarding_results_0LinkFailures_ndnsim2.0},~\ref{fig:zipf_forwarding_results_0LinkFailures_ndnsim2.0_cache}, and~\ref{fig:zipf_hop_count_0LinkFailures_ndnsim2.0} depict the average Interest satisfaction ratio per node, the average cache hit ratio, and the average hop count per satisfied Interest, respectively. The Interest satisfaction ratio for Zipf-distributed content popularity (cf. Figure~\ref{fig:zipf_forwarding_results_0LinkFailures_ndnsim2.0}) paints a picture similar to Figure~\ref{fig:forwarding_results_0LinkFailures_ndnsim2.0} for uniform content popularity. SAF outperforms its competitors in terms of satisfied Interests. Again iNRR and OMP-IF are the strongest competitors to SAF and in well connected scenarios with many resources available these algorithms achieve similar performance.
As expected, iNRR beats SAF in terms of cache hit ratio since iNRR is designed to forward Interests to the nearest cache holding the desired Data packet. \revised{Broadcast and NCC} again provide the lowest average hop count (cf. Figure~\ref{fig:zipf_hop_count_0LinkFailures_ndnsim2.0}) outperforming the other algorithms regarding this metric for the very same reasons as outlined in Section~\ref{sec:uniform_request_eval}.
\revised{Although iNRR always maintains a lower hop count and higher cache hit ratio than SAF, SAF outperforms iNRR regarding the number of satisfied Interests. This is due to the extensive usage of multi-path forwarding considering paths which may not provide optimal performance regarding cache hit ratio or hop count, however, it maximizes the number of satisfied Interests. OMP-IF also focuses on the usage of multiple but node-disjoint paths per content. Nevertheless, as can be seen from Figures~\ref{fig:zipf_forwarding_results_0LinkFailures_ndnsim2.0_cache} and~\ref{fig:zipf_hop_count_0LinkFailures_ndnsim2.0} in most of the cases it maintains a higher hop count and lower cache hit ratio than SAF resulting in a smaller number of overall satisfied Interests.}

Figure~\ref{fig:zipf_heatmap} depicts the distribution of the transmitted Interests and cache hits in the network. For each of the evaluated algorithms, a single simulation run is depicted considering a fixed scenario (MediumCon, HighBW; cf. Figure~\ref{uniform_subfig-2:MediumCon}) with a Zipf-distributed content popularity. This figure illustrates how differently the forwarding algorithms behave. Broadcast replicates Interests on every node and pushes them to all neighboring nodes (cf. Figure~\ref{fig:zipf_heatmap}a). NCC shows a similar behavior. ShortestRoute focuses on the shortest paths from a client to the content provider and, therefore, the caches on the shortest paths are heavily used. RFA performs a kind of load balancing distributing Interests equally on all routes. Thus, it achieves a low cache hit ratio and Interest satisfaction ratio (cf. Figure~\ref{fig:zipf_forwarding_results_0LinkFailures_ndnsim2.0}). iNRR tries to maximize the cache hit ratio by explicitly preferring nearby cached Data packets in the network instead of retrieving them from the content origin. This circumstance is reflected in Figure~\ref{heatmap-inrr} showing several nodes with high cache hit ratios (dark coloring). \revised{OMP-IF also achieves an acceptable cache hit ratio due to its principle of node disjointness. Frequent probing and path switching leads to a similar load balancing behavior as in RFA. SAF is able to provide both, effective load balancing and high cache hit ratios at the relevant nodes (cf. Figure~\ref{heatmap-saf}). Compared to OMP-IF, which balances the load very equally on all available links, SAF focuses on paths that provide better performance, leading to a better cache hit ratio and an overall better performance.}

\vspace{-0.15cm}
\section{Conclusion and Future Work}
\label{sec:conclusion}

This paper introduced Stochastic Adaptive Forwarding (SAF), a novel forwarding strategy for Named Data Networking. SAF provides probability-based forwarding on a per-content/per-prefix basis. It ensures effective forwarding with incomplete and/or invalid routing information, and resolves unexpected network topology changes without relying on the routing plane. The extensive usage of multi-path transmission is the foundation for SAF's performance. SAF is flexible in that it can be configured with various measures defining the forwarding objectives. The effectiveness of SAF was illustrated by conducting extensive simulations using the ns-3/ndnSIM framework. We presented the throughput-based measure $\mathcal{M}^{T}$ optimizing the Interest satisfaction ratio in a given network. Simulations show that SAF is able to outperform existing algorithms. Our results can be reproduced as we provide an open source implementation of SAF and its competitors (iNRR, \revised{OMP-IF}, RFA) at \emph{\url{github.com/danposch/SAF}}.

SAF allows to optimize its forwarding behavior with respect to a measure. The only assumption that SAF has on the measure is that it allows classifying Interests into satisfied and unsatisfied ones. Thus, SAF can be easily tailored to specific use cases. For example, if Interests shall not exceed a specific hop count, one may introduce a measure that classifies Interests as satisfied if the hop count is below this specific threshold. SAF further allows to make decisions on a context and content level. Since SAF keeps track of the forwarding probabilities for all content prefixes, it is possible to adapt the forwarding probabilities with respect to content prefixes. This can be used to prioritize traffic from specific content prefixes. For instance, (real-time) multimedia traffic that takes the same path through the network as bulk file transfer can be assigned higher priority. In this case, one may introduce a weighting such that the probability of the virtual dropping face ($F_{D}$, cf. Section~\ref{subsec:network_content_node_model}) for bulk file transfer is increased. This leads to a decrease of the forwarding probability of $F_{D}$ for multimedia traffic and, therefore, to an increase of the forwarding probability on the desired faces. We plan to design and investigate algorithms for content- and context-aware forwarding in our future work.

{\small
\vspace{-0.2cm}
\section*{Acknowledgment}
This work was supported in part by the Austrian Science Fund (FWF) under the CHIST-ERA project CONCERT,
project nr. \textit{I1402}.
\vspace{-0.3cm}
}

\ifCLASSOPTIONcaptionsoff
  \newpage
\fi


\bibliographystyle{bib/IEEEtran}
\bibliography{bib/quellen}


\end{document}